\newif\ifTR
\newcommand{\vh}[1]{\textcolor{orange}{\ifmmode \text{[VH: #1]}\else [VH: #1] \fi}}
\newcommand{\ol}[1]{\textcolor{blue}{\ifmmode \text{[OL: #1]}\else [OL: #1] \fi}}
\newcommand{\mh}[1]{\textcolor{red}{\ifmmode \text{[MH: #1]}\else [MH: #1] \fi}}
\newcommand{\lh}[1]{\textcolor{pink}{\ifmmode \text{[LH: #1]}\else [LH: #1] \fi}}
\newcommand{\contains}[0]{\mathrm{Contains}}
\newcommand{\notcontains}{\ensuremath{\neg\contains}\xspace}
\newcommand{\indexof}[0]{\mathrm{indexOf}}
\newcommand{\langconstrof}[1]{\Phi_{#1}}
\newcommand{\langconstr}[0]{\langconstrof \lang}
\newcommand{\vars}{\ensuremath{\mathbb{X}}}
\newcommand{\varsOf}[1]{\ensuremath{\mathit{Vars}(#1)}}
\newcommand{\lang}[0]{\ensuremath{\mathcal{L}}}
\newcommand{\langOf}[1]{\ensuremath{\lang(#1)}}
\newcommand{\nat}[0]{\ensuremath{\mathbb{N}}}
\newcommand{\naturals}[0]{\ensuremath{\mathbb{N}}}
\newcommand{\suf}[0]{\mathrm{Suf}}
\newcommand{\pref}[0]{\mathrm{Pref}}
\newcommand{\prefixesOf}[1]{\ensuremath{\mathrm{Pref}(#1)}}
\newcommand{\suffixesOf}[1]{\ensuremath{\mathrm{Suf}(#1)}}
\newcommand{\factorsOf}[1]{\ensuremath{\mathrm{F}(#1)}}
\newcommand{\concat}[0]{\ensuremath{\circ}}
\newcommand{\defequiv}{\ensuremath{\stackrel{\mathit{def}}{\Leftrightarrow}}}
\newcommand{\defeq}[0]{\mathrel{\triangleq}}
\newcommand{\aut}[0]{\ensuremath{\mathcal{A}}}
\newcommand{\K}[0]{\ensuremath{\mathcal{K}}}
\newcommand{\lcm}[0]{\mathbf{lcm}}
\newcommand{\mygcd}[0]{\mathbf{gcd}}
\newcommand{\gcdof}[2]{\mygcd(#1, #2)}
\newcommand{\lcmof}[2]{\lcm(#1, #2)}
\newcommand{\variant}[0]{\mathbin{\triangleleft}}
\newcommand{\clExpspace}[0]{\textsc{ExpSpace}}
\newcommand{\clExpSpace}[0]{\textsc{ExpSpace}}
\newcommand{\clNExpTime}[0]{\textsc{NExpTime}}
\newcommand{\clPspace}[0]{\textsc{PSpace}}
\newcommand{\clNP}[0]{\textsc{NP}}
\newcommand{\needle}[0]{\mathcal{N}}
\newcommand{\haystack}[0]{\mathcal{H}}
\newcommand{\sep}[0]{\texttt{\#}}
\newcommand{\alphabet}[0]{\Sigma}
\newcommand{\emptyWord}[0]{\epsilon}
\newcommand{\gammaPref}[1]{\Gamma_{\prefixesOf{#1}}}
\newcommand{\gammaSuf}[1]{\Gamma_{\suffixesOf{#1}}}
\newcommand{\move}[1]{\ensuremath{ \stackrel{#1}{\leadsto} }}
\newcommand{\tran}[1]{\ensuremath{ \stackrel{#1}{\rightarrow} }}
\newcommand{\transition}[3]{#1\tran{#2}#3}
\newcommand{\con}[0]{\ensuremath{ \mathrm{con}}}
\newcommand{\primitiveWords}[0]{\ensuremath{ \mathrm{Prim}}}
\newcommand{\flatVars}[0]{\ensuremath{ \vars_{\mathit{Flat}} }}
\newcommand{\stateLabelingFn}[0]{\mathrm{st}}
\newcommand{\stateLabelOf}[1]{\stateLabelingFn(#1)}
\newcommand{\edgeLabelingFn}[0]{\mathcal{W}}
\newcommand{\edgeLabelOf}[1]{\mathcal{W}(#1)}
\newcommand{\letter}[1]{\texttt{#1}}
\newcommand{\pMaxLit}[0]{M_{\mathrm{Lit}}}
\newcommand{\pMaxAut}[0]{M_{Q}}
\newcommand{\pMaxPrim}[0]{M_{\alpha}}
\newcommand{\boundFlat}[0]{\lambda_\mathrm{Flat}}
\newcommand{\boundAut}[0]{\lambda_\mathrm{Q}}
\newcommand{\boundGamma}[0]{ \lambda_{\kappa} }
\newcommand{\glue}[0]{\mathsf{glue}}
\newcommand{\shortPathsSet}[0]{\mathcal{P}}
\newcommand{\boundReachingPaths}[0]{\mathcal{P}_{\ge \boundAut}}
\newcommand{\flatFinLang}[0]{\lang_{\mathrm{Short}}}
\newcommand{\shortPrefixes}[0]{\lang^{\mathrm{Short}}}
\newcommand{\lAlphaFlat}[0]{\lang_{\alpha}}
\newcommand{\lGammaFlat}[0]{\lang_{\Gamma}}
\newcommand{\langflatunderapp}[0]{\lang^{\mathrm{Flat}}}
\newcommand{\pathExit}[0]{\mathrm{Ex}}
\newcommand{\piSigma}[0]{\sigma_{w_{\pi}}}
\newcommand{\chiSigma}[0]{\sigma_{w_{\chi}}}
\newcommand{\ziiinoodler}{\textsc{Z3-Noodler}\xspace}
\newcommand{\cvc}{\textsc{cvc}\xspace}
\newcommand{\cvcv}{\cvc{}5\xspace}
\newcommand{\ziii}{\textsc{Z3}\xspace}
\newcommand{\ostrich}[0]{\textsc{Ostrich}\xspace}
\newcommand{\trau}[0]{\textsc{Trau}\xspace}
\newcommand{\reasonof}[1]{\textcolor{black!50}{\lbag\text{#1}\rbag}}
\newtheorem{fact}[theorem]{Fact}
\title{Negated String Containment is Decidable \ifTR(Technical Report)\fi}
\author{Vojtěch Havlena}
  {Brno University of Technology, Czech Republic}
  {ihavlena@fit.vutbr.cz}
  {https://orcid.org/0000-0003-4375-7954}
  {}
\author{Michal Hečko}
  {Brno University of Technology, Czech Republic}
  {ihecko@fit.vutbr.cz}
  {https://orcid.org/0009-0003-2428-8547}
  {}
\author{Lukáš Holík}
  {Aalborg University, Denmark \and Brno University of Technology, Czech Republic}
  {holik@fit.vutbr.cz}
  {https://orcid.org/0000-0001-6957-1651}
  {}
\author{Ondřej Lengál}
  {Brno University of Technology, Czech Republic}
  {lengal@fit.vutbr.cz}
  {https://orcid.org/0000-0002-3038-5875}
  {}
\authorrunning{V. Havlena, M. Hečko, L. Holík, and O. Lengál}
\keywords{not-contains,
string constraints,
word combinatorics,
primitive word}
\begin{document}

\maketitle

\begin{abstract}
We provide a positive answer to a long-standing open question of the decidability of the not-contains string predicate. 
Not-contains is practically relevant, for instance in symbolic execution of string manipulating programs.
Particularly, we show that the predicate $\neg\mathrm{Contains}(x_1 \ldots
  x_n, y_1 \ldots y_m)$, where $x_1 \ldots x_n$ and $y_1 \ldots y_m$ are sequences of string variables constrained by regular languages, is decidable. 
Decidability of a~not-contains predicate combined with chain-free word equations and regular membership constraints follows.
\end{abstract}

\vspace{-0.0mm}
\section{Introduction}
\vspace{-0.0mm}

String constraints have been recently intensely studied in relation to their applications in analysis of string manipulation in programs, e.g., in the analyses of security of web applications or cloud resource access policies~\cite{Rungta22}. 
Apart from a plethora of practical solvers, e.g., 
\cvcv~\cite{cvc4_string14,tinelli-fmsd16,tinelli-hotsos16,tinelli-frocos16,cvc417,cvc422,cvc420},
\ziii~\cite{z3, BTV09,LuSJDM024},
\ostrich~\cite{AnthonyTowards2016,AnthonyReplaceAll2018,AnthonyComplex2019,AnthonyRegex2022,AnthonyInteger2020},
\ziiinoodler~\cite{ChenCHHLS24,chen2023solving,BlahoudekCCHHLS23}, 
\trau~\cite{ChainFree, Trau, Flatten}
\textsc{Z3Str/2/3/4/3RE} \cite{Z3str4,Z3str3RE,BerzishDGKMMN23},
\textsc{Woorpje}~\cite{DayEKMNP19}, 
and
\textsc{nfa2sat}~\cite{nfa2sat23},
the theoretical landscape of string constraints has been intensely studied too. 
The seminal work of Makanin~\cite{makanin77}, establishing decidability of word equations,
was followed by the work of Plandowski~\cite{plandowski99} (and later Je\.z's work on recompression) that placed the problem in \clPspace.
A number of relatively recent works study extensions of string constraints with constraints over string lengths, transducer-defined relational constraints, string-integer conversions, extensions of regular properties, replace-all, etc.
As the extended string constraints are in general undecidable, these works
focus on finding practically relevant decidable fragments such as the
straight-line~\cite{ChenHLRW19,AnthonyTowards2016,AnthonyReplaceAll2018,AnthonyInteger2020,AnthonyRegex2022}
and chain-free~\cite{ChainFree,chen2023solving} fragments, quadratic
equations~\cite{nielsen1917}, and others (e.g., \cite{aiswarya22,day23}).

The most essential constraints, from the practical perspective, are considered to be word equations, regular membership constraints, length constraints, and also  $\notcontains$, as argued, e.g., in~\cite{Kudzu}, and as can also be seen in benchmarks, for instance, in~\cite{trauc,parosh-notsubstring}. 
While the three former types of constraints are intensely studied, $\notcontains$ was studied only little. 
Yet, it is important as well as theoretically interesting:
besides the occurrence in existing benchmarks, its importance follows also from its ability to capture other highly practical types of constraints. 
E.g., the $\indexof(x, y)$ function should return the position of the first
occurrence of~$y$ in~$x$.
It can be converted to the word equation $x=p.y.s$ after which the returned value equals $|p|$.
To ensure that~$y$ is indeed the first occurrence in~$x$, there should
be no occurrence of~$y$ in $p.y'$ where~$y'$ is the prefix of~$y$ without the last
symbol, i.e., $y'.z = y$ for $z \in \Sigma$. This can be expressed as
$\notcontains(y, p.y')$ (e.g., \ziii solves $\indexof$ in this
way~\cite{z3}).

As mentioned above,
the problem is also interesting from the theoretical perspective. 
Although the positive version, $\contains$, can be easily encoded using word equations,
the negation is difficult. Its precise conversion to word equations would require universal quantification,
which is undecidable for word equations in general \cite{ganesh18}. 
The most systematic attempts at solving $\notcontains$ have been made in
\cite{parosh-notsubstring,ChenHHHL25}.
In~\cite{parosh-notsubstring}, the authors extend the flattening
underapproximating framework behind the solver \trau~\cite{Trau, Flatten} and
give a~precise solution for $\notcontains$ if all involved string variables are
constrained by flat languages (a~flat language here stands for a~finite union
of concatenations of iterations of words) and, moreover, if no string variable
appears multiple times, thus avoiding most of the difficulty of the problem. 
Our recent work \cite{ChenHHHL25}, on top of which we build here, proceeds in a
similar direction and removes the restriction of~\cite{parosh-notsubstring} on
multiple occurrences of variables, but still requires all
languages to be flat, which is a quite severe restriction.
Practical heuristics used in solvers generally solve only easy cases and quickly fail on more complex ones, cf.\ \cite{ChenHHHL25}, and do not give any guarantees. 
E.g., \cvcv{} translates $\notcontains$ into a universally
quantified disequality~\cite{ReynoldsHighlevel19}, which is in turn handled by
\cvcv's incomplete quantifier instantiation~\cite{niemetz21}.

\enlargethispage{3mm}

In this paper, we show decidability of a much more general kind of $\notcontains$ than \cite{ChenHHHL25,parosh-notsubstring}, namely of the form $\notcontains(\needle,
\haystack) \land \langconstr$
where $\needle$eedle and $\haystack$aystack
are string terms (sequences of symbols and variables) and
$\langconstr$ constrains variables by \emph{any regular language}.
The constraint is satisfied by an assignment to string variables
respecting~$\langconstr$ under which $\needle$ is not a factor (i.e.,
a~continuous
subword) of~$\haystack$ (i.e., if $\needle$eedle cannot be found in
$\haystack$aystack).

Our solution of the problem leads relatively deep into word combinatorics and automata theory. 
We rely on the result in~\cite{ChenHHHL25}
giving a decision procedure for a flat-language version of the problem.
The work~\cite{ChenHHHL25} uses an automata-based construction inspired by deciding
functional equivalence of streaming string
transducers~\cite{Alur11}.
Using a variation on automata Parikh images, it transforms the problem into an equisatisfiable Presburger arithmetic formula (which is decidable).
The general case with variables restricted by arbitrary regular languages, the subject of this paper, is solved by a~reduction to this flat-language fragment. 
The core idea of our proof is that we can always find fresh primitive words in non-flat languages that can be repeated an arbitrary number of times. The result of such a~repetition
is a word that can share with other variables only subwords of a bounded size, assuming all words assigned to variables are sufficiently long.
The reduction technically requires a dive into combinatorics on words and results on primitive words \cite{Lothaire02,makanin77,lyndon1962,Lothaire1997}, 
which are closely related to flat languages.
Our techniques shares traits with the work of Karhum{\"{a}}ki et al.~\cite{karhumaki00},
which constructs long primitive words to show that disjunctions of word equations
can be encoded into a~single equation.
First, for variables with non-flat languages occurring on both sides of the constraint, we show that we can replace each of them with a single fresh symbol.
This is because non-flat languages allow us to choose a sufficiently complex word for the variable $x$ that can be matched only with the value of $x$ on the other side ($\needle$~is the other side of~$\haystack$ and \emph{vice versa}).
For variables with non-flat languages that appear only in~$\haystack$, 
we show that after enumerating all possible assignments for them up to a~certain bound,
their languages can be underapproximated by flat languages while preserving satisfiability. 

\vspace{-2.0mm}
\section{Preliminaries}
\vspace{-1.0mm}

\subparagraph{Numbers.}
We use $\nat$ for natural numbers (including zero).
For $m,n \in \nat$,
their \emph{greatest common divisor} is denoted as~$\gcdof m n$ and
their \emph{least common multiple} is denoted as~$\lcmof m n$.

\tikzstyle{wordBNode} = [draw, scale=0.75, inner sep=1mm, text height=3mm, minimum height=7mm, baseline]

\subparagraph{Words.}
An \emph{alphabet}~$\Sigma$ is a finite non-empty set of \emph{symbols}.
Let~$\Sigma$ be fixed for the rest of the paper. 
A~(finite) word~$w$ over~$\Sigma$ is a~sequence of symbols $w = a_1 \ldots a_n$
from~$\Sigma$, where~$n$ is the \emph{length} of~$w$, denoted as~$|w|$.
The \emph{empty word} of the length~0 is denoted by~$\epsilon$ and a~concatenation
of two words~$u$ and~$v$ is denoted as $u \concat v$ 
(or shortly $uv$).
An \emph{iteration of a word} $w$ is defined as $w^0\defeq\epsilon$ and $w^{i+1} \defeq w^i \concat w$ for $i\geq 0$.
The set of all words over~$\Sigma$ is denoted as~$\Sigma^*$.
A~\emph{primitive word} cannot be written as $v^i$ for any $v$ and $i>1$, and we will use Greek letters $\alpha,\beta,\gamma,\ldots$ 
from the beginning of the alphabet
to denote primitive words.  
We denote the set of all primitive words $\primitiveWords$. 
A word $u$ is a \emph{factor} (i.e., a~continuous subword) of every word $v u
v'$.
%
%
%
Given two words $p u s$ and $p' u' s'$, we say that the factors $u$ and $u'$ have an overlap of size
$k \in \naturals$ if $\bigl| \{|p|+1,\ldots,|p|+|u|\} \cap \{|p'|+1,\ldots,|p'|+|u'|\} \bigr| = k$.
The overlap of $u$ and $u'$ in the words $pus$ and $p'u's'$ contains
a \emph{conflict} if there is a position $i$ with $|p| \le i < |pu|$ and $|p'| \le i < |p'u'|$
such that the words $pus$ and $p'u's'$ contain a different letter at position $i$.

\subparagraph{Languages.}
A~\emph{language}~$\lang$ over~$\Sigma$ is a~subset of~$\Sigma^*$.
We will sometimes abuse notation and, given a~word~$w \in \Sigma^*$, use~$w$ to
also denote the language~$\{w\}$. 
%
For two languages~$\lang_1$ and~$\lang_2$, we use $\lang_1 \concat
\lang_2$ (or just $\lang_1 \lang_2$) for their concatenation $\{uv \mid
u\in\lang_1, v \in \lang_2\}$.
A~\emph{bounded iteration of a~language} $\lang$ is defined as
$\lang^0 \defeq \{\epsilon\}$ and 
$\lang^{i+1} \defeq \lang^i \concat \lang$ for $i\geq 0$.
The \emph{(unbounded) iteration} is $\lang^* \defeq \bigcup_{i\geq 0} \lang^i$.
For a~word~$w$ we use $\prefixesOf{w}$ ($\suffixesOf{w}$) to denote the set of
prefixes (suffixes) of~$w$ and~$\factorsOf{w}$ to denote the set of all factors
of $w$. We lift the definitions to languages as usual.
A~language $\lang \subseteq \Sigma^*$ is \emph{flat} iff it can be
expressed as a~finite union
\begin{equation} \label{eq:defFlat}
\lang = \bigcup_{i=1}^N w_{i,1} \concat w_{i,2}^* \concat w_{i,3} \concat w_{i,4}^* \concat
w_{i,5} \concat \cdots \concat w_{i,\ell_i-1}^* \concat w_{i,\ell_i}
\end{equation}
where every $w_{i,j}$ s.t. $1\leq i\leq n,1\leq j \leq \ell_i$ is a word over $\Sigma$,
else it is \emph{non-flat}. 
Flatness of~$\lang$ can be  characterised by the absence of the so-called
``butterfly loops'':

\begin{fact}\label{lem:nonflatCharacterization}
A regular language $\lang \subseteq \Sigma^*$ is non-flat iff $p \{u, v\}^* s \subseteq \lang$ for some $p, s, u, v \in \Sigma^*$
with $u, v \not \in w^*$ for any word $w \in \alphabet^*$.
\end{fact}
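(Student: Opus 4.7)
The plan is to prove the two directions separately: the ``if'' direction reduces to a counting argument comparing growth rates, while the ``only if'' direction proceeds by contraposition via a loop decomposition of a DFA for~$\lang$.

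For the ``if'' direction, suppose $p\{u,v\}^*s \subseteq \lang$ where no single word $w$ satisfies both $u,v \in w^*$. By the classical Lyndon--Sch\"utzenberger result, $u$ and $v$ then do not commute and generate a free submonoid of $\Sigma^*$, so the canonical map from $\{u,v\}^n$ into $\Sigma^*$ is injective, producing $2^n$ distinct words, each of length at most $n\cdot\max(|u|,|v|)$. Hence $|\lang \cap \Sigma^{\le \ell}|$ grows at least exponentially in~$\ell$. If, by contrast, $\lang$ matched the flat form (\ref{eq:defFlat}), each summand would contribute at most $\ell^{\ell_i}$ words of length $\le \ell$ (each iterated block has an exponent in $\{0,\dots,\ell\}$), yielding a polynomial total---a contradiction for large~$\ell$.

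For the ``only if'' direction, I take the contrapositive: assume no quadruple $(p,u,v,s)$ as in the statement exists, and deduce that $\lang$ is flat. Fix a trim DFA $\aut$ recognising $\lang$. For each state~$q$, let $L_q = \{w \mid q \tran{w} q\}$ be the (sub)monoid of loop words at~$q$. The assumption forces $L_q \subseteq \pi_q^*$ for some primitive word~$\pi_q$: otherwise $L_q$ would contain two non-commuting loops $u,v$, and prepending/appending words $p$ and $s$ witnessing reachability and co-reachability of~$q$ (which exist by trimness) would yield the forbidden inclusion $p\{u,v\}^*s \subseteq \lang$. Now decompose any accepting walk of $\aut$ by iteratively excising sub-walks between repeated state occurrences; the residue is a simple walk $r_0 \tran{\tau_0} r_1 \tran{\tau_1} \cdots \tran{\tau_{m-1}} r_m$ with the $r_j$'s pairwise distinct, together with a collection of loops anchored at the $r_j$'s. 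Concatenating all loops anchored at a common $r_j$ (the concatenation still lies in $L_{r_j} \subseteq \pi_{r_j}^*$) shows that the accepted word has the shape $\pi_{r_0}^{k_0}\tau_0\pi_{r_1}^{k_1}\tau_1 \cdots \tau_{m-1}\pi_{r_m}^{k_m}$; conversely, every word of this shape is accepted. Ranging over the finitely many simple accepting walks expresses $\lang$ as a finite union of languages of the form (\ref{eq:defFlat}), so $\lang$ is flat.

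The main obstacle is the step $L_q \subseteq \pi_q^*$: extracting two \emph{specific} non-commuting elements from a submonoid that fails to be contained in any $w^*$. This relies on the word-combinatorial fact that a submonoid of $\Sigma^*$ is contained in some $w^*$ iff every pair of its elements shares a common root (equivalently, commutes); once two such witnesses $u,v$ are produced, the reachability/co-reachability paths $p, s$ immediately supply the forbidden pattern. The subsequent walk decomposition is standard, but crucially uses that $L_q$ is closed under concatenation, so that all loops anchored at the same state may be merged into a single power of~$\pi_q$.
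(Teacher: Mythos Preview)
The paper states this as a \emph{Fact} without proof, treating the ``butterfly-loop'' criterion as a known characterisation of non-flat (equivalently, unbounded) regular languages; there is nothing in the paper to compare your argument against.  Your write-up is thus a self-contained justification, and the overall strategy---exponential versus polynomial growth for the ``if'' direction, and a DFA loop analysis for the contrapositive of the ``only if'' direction---is correct.

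Two points in the ``only if'' direction deserve tightening.  First, the walk decomposition as phrased (``iteratively excising sub-walks between repeated state occurrences'') can, if done at \emph{first} repetitions, produce loops anchored at states that are themselves later excised and hence absent from the residual simple walk.  Excising at \emph{last} occurrences instead---let $r_0$ be the initial state, take as first loop the segment up to the final visit to~$r_0$, step once to~$r_1$, recurse---guarantees that every loop is anchored at some~$r_j$.

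Second, the clause ``conversely, every word of this shape is accepted'' silently assumes $\pi_{r_j}^* \subseteq L_{r_j}$, which need not hold: the unique cycle through~$r_j$ might have label~$\pi_{r_j}^2$, so $L_{r_j}=(\pi_{r_j}^2)^*$.  The clean fix is to observe that your hypothesis in fact forces every nontrivial SCC of the DFA to be a \emph{single simple cycle}: if some state~$q$ in an SCC had two outgoing edges $q\tran{a}p_1$ and $q\tran{b}p_2$ with $a\neq b$ inside the SCC, then closing each into a loop at~$q$ would yield two nonempty words in $L_q\subseteq\pi_q^*$ starting with different letters, which is impossible.  Replacing~$\pi_{r_j}$ by the cycle label~$c_{r_j}$ then gives $L_{r_j}=c_{r_j}^*$ exactly, and both inclusions in your final union go through.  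With these two adjustments the argument is complete.
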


\subparagraph{Automata.}
A \emph{(nondeterministic finite) automaton (NFA)} over $\Sigma$ is a tuple $\aut = (Q,\Delta,I,F)$ where $Q$ is a set of \emph{states},
$\Delta$ is a set of \emph{transitions} of the form $\transition q a r$ with $q,r\in
Q$ and $a\in\Sigma$, $I\subseteq Q$ is the set of \emph{initial states}, and $F\subseteq Q$
is the set of \emph{final states}. 
A run of $\aut$ over a word $w = a_1 \ldots a_n$ from state $q_0$ to state~$q_n$ 
is a sequence of
 transitions
 $\transition {q_0} {a_1} {q_1}$,
 $\transition {q_1} {a_2} {q_2}$, $\ldots$,
 $\transition {q_{n-1}} {a_n} {q_n}$ from $\Delta$. The empty sequence is a~run with $q_0 = q_n$ over $\epsilon$. 
We denote by $q_0 \move{w}_\aut q_n$ that $\aut$ has such a~run, from where we drop the subscript $\aut$ if it is clear from the context. 
The run is \emph{accepting} if $q_0 \in I$ and $q_n\in F$, and the
\emph{language} of~$\aut$ is $\langOf \aut \defeq \{w \in \Sigma^* \mid q\move{w} r, q\in I,
r\in F\}$.
Languages accepted by NFAs are called \emph{regular}.
$\aut$~is a~\emph{deterministic finite automaton} (DFA) if $|I| = 1$ and for
every symbol $a\in \Sigma$ and every pair of transitions $\transition{q_1} a
{r_1}$ and $\transition{q_2} a {r_2}$ in~$\Delta$ it holds that if $q_1 =q_2$
then $r_1 = r_2$.

\subparagraph{The \notcontains constraint.}
Let~$\vars$ be a~set of \emph{(string) variables}. 
A~\emph{term} is a word $t\in(\vars\cup\Sigma)^*$ over variables and
symbols.
A~\notcontains constraint is a formula $\varphi \defeq \notcontains(\needle,
\haystack) \land \langconstr$,
where $\needle$ and $\haystack$ (for $\needle$eedle and $\haystack$aystack;
$\varphi$ holds if we cannot find $\needle$ within $\haystack$)
are terms
and $\langconstr \defeq \bigwedge_{x\in\vars} x\in \lang(x)$
associates every variable~$x$ with a~regular language $\lang_x$.
%
An \emph{assignment} is a~function $\sigma\colon \vars \to \Sigma^*$, i.e., it
assigns strings to variables.
We use $\sigma \variant \{x_1 \mapsto w_1, \ldots, x_n \mapsto w_n\}$ to denote
the assignment obtained from~$\sigma$ by substituting the values of
variables~$x_1, \ldots, x_n$ to $w_1, \ldots, w_n$ respectively.
We lift~$\sigma$ to terms so that for $a \in \Sigma$, we let $\sigma(a) \defeq a$,
and for terms $t,t'$, we let $\sigma(t\concat t') \defeq \sigma(t)\concat\sigma(t')$. 
We then say that $\sigma$ satisfies~$\varphi$, written $\sigma \models
\varphi$, if $\sigma(x) \in \lang_x$ for every $x\in\vars$ and
$\sigma(\haystack)$ cannot be written as $u \concat \sigma(\needle) \concat v$ for any $u,v\in\Sigma^*$,
i.e., $\sigma(\needle)$ is not a factor of $\sigma(\haystack)$. 
%
We call a~variable~$z$ \emph{two-sided} if it occurs in both~$\needle$ and~$\haystack$.
Moreover, we use $\flatVars$ to denote the set of variables $x$ occurring in
$\varphi$ s.t.\ $\lang_x$ is a flat language.

Given a~term~$t$, a~variable~$x\in \vars$, and a~term~$t_s$, we use $t[x/t_s]$
to denote the term obtained by substituting every occurrence of the
variable~$x$ in~$t$ by the term~$t_s$.
Moreover, we use $\varsOf t$ to denote the set of variables with at least one
occurrence in the term~$t$.

\begin{theorem}[{\cite[{Theorem 7.5}]{ChenHHHL25}}]
Satisfiability of the $\notcontains$ constraint is $\clNP$-hard.
\end{theorem}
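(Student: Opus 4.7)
My plan is to establish $\clNP$-hardness by a polynomial-time reduction from $3$-SAT. Given a $3$-CNF formula $\phi = C_1 \land \cdots \land C_m$ over Boolean variables $v_1, \ldots, v_n$, I would construct a $\notcontains$ instance $\varphi_\phi = \notcontains(\needle, \haystack) \land \langconstr$ of size polynomial in $|\phi|$, so that $\varphi_\phi$ is satisfiable iff $\phi$ is.

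First, I would encode a candidate truth assignment through string variables: for each $v_i$, a variable $x_i$ whose regular language $\lang(x_i)$ is a two-element set over a small alphabet (e.g.\ $\{\mathtt{T}, \mathtt{F}\}$) marking the truth value of $v_i$. Where necessary, I would introduce auxiliary variables for negated literals with a carefully chosen language that ties their value to that of $x_i$. The haystack $\haystack$ would be a concatenation, over an alphabet enriched with separator symbols, of one gadget block per clause $C_j$. Each block is built from the $x_i$'s (and any auxiliary variables) so that a single, fixed needle string $\needle$ occurs in the $C_j$-block precisely when $C_j$ is falsified under the encoded assignment; the separators prevent $\needle$ from straddling two blocks. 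Hence $\notcontains(\needle, \haystack)$ holds iff every clause is satisfied, iff $\phi$ is satisfiable.

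The main obstacle I anticipate is that a single $\notcontains$ predicate, with a single forbidden pattern, must simultaneously witness two logically different kinds of violations: falsified clauses, and inconsistencies between positive and negative literal encodings. A clean way around this is to absorb the polarity information directly into $\lang(x_i)$---e.g.\ take $\lang(x_i) = \{\mathtt{TF}, \mathtt{FT}\}$ so that the two characters of $x_i$ carry the values of $v_i$ and $\neg v_i$ together---so complementarity is automatic and only the per-clause check remains. The remaining design work is to arrange each $C_j$-gadget so that, after substitution, it exposes exactly the three literal values of $C_j$ as a contiguous window flanked by separators; the needle is then the all-false such window. Once the gadgets are specified, the polynomial blow-up, the absence of spurious occurrences of $\needle$ across block boundaries, and both directions of the equivalence follow by a direct syntactic inspection.
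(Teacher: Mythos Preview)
First, note that the paper does not itself prove this theorem: it is quoted verbatim as Theorem~7.5 of~\cite{ChenHHHL25}, with no argument supplied. So there is no in-paper proof to compare against; I can only assess your plan on its own terms.

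Your high-level strategy is reasonable, and you correctly identify the central obstacle---that a single $\notcontains$ atom with one forbidden pattern must simultaneously detect falsified clauses and positive/negative inconsistencies. However, the proposed fix, taking $\lang(x_i)=\{\mathtt{TF},\mathtt{FT}\}$, does not dissolve the difficulty so much as relocate it. With this encoding each occurrence of $x_i$ contributes \emph{two} characters to the term, and the constraint language offers only concatenation of whole variables and constants---no projection. A clause block built from $x_{i_1},x_{i_2},x_{i_3}$ and constants therefore cannot place the three single literal values in a contiguous length-three window: the ``other'' character of each $x_{i_k}$ and any padding constants are interleaved with the relevant bits. Because this interleaving pattern depends on the polarity signature of the clause (eight possibilities for a 3-literal clause), a fixed needle matching one pattern will not match another. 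The step you label ``remaining design work'' thus hides the actual combinatorial content of the reduction.

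The gap is probably repairable---for instance by making the length of each inter-variable padding depend on the adjacent polarities so that the literal bits land at fixed relative offsets, and choosing the padding symbol so that in the all-false case the intervening characters are forced to a predictable value---but that is a nontrivial gadget design with several correctness conditions to verify (no spurious matches for partially satisfied clauses, no matches across block boundaries), not a ``direct syntactic inspection''. As written, the proposal does not yet establish the claim.
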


\vspace{-0.0mm}
\subsection{Normalization}\label{sec:normalization}
\vspace{-0.0mm}

A~variable~$z$ is
\emph{flat (non-flat)} if the language~$\lang_z$
associated with~$z$ is flat (non-flat), respectively, and \emph{finite} if its
corresponding language is finite.
Moreover, a~variable is called \emph{decomposed} if its language 
can be represented by a DFA having a~single initial, single final state, 
and containing exactly one nontrivial \emph{maximal strongly connected component} (SCC)
and no other SCCs.
We say that~$\varphi$ is \emph{normalized} if it contains an occurrence of at least one variable,
does not contain any finite
variable, and all of its variables are decomposed.
Any $\notcontains$ constraint can be transformed into a~disjunction of
normalized constraints, as shown by the following lemma.

\begin{restatable}{lemma}{lemNormalization}\label{lem:normalization}
Let $\varphi \defeq \notcontains(\needle, \haystack) \land \langconstr$.
Then $\varphi$ can be transformed to an equisatisfiable
disjunction $\bigvee_{1 \leq i \leq n} \notcontains(\needle_i, \haystack_i)
\land \langconstrof{\lang_i}$ of normalized constraints or the
formula~$\mathit{true}$.
\end{restatable}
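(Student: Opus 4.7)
The plan is to apply three equisatisfiability-preserving transformations that progressively repair the three ways in which $\varphi$ can violate the normalized form: (a)~$\varphi$ contains no variable, (b)~some variable is finite, or (c)~some variable is not decomposed. I apply them in an outer loop; once no rule applies, each remaining disjunct is either normalized, has been discharged to $\mathit{true}$, or has been dropped as unsatisfiable.

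For~(a), if $\needle$ and $\haystack$ contain no variables, then $\notcontains(\needle, \haystack)$ is a purely syntactic factor check between two fixed words: if $\needle$ is not a factor of $\haystack$, the disjunct is equivalent to $\mathit{true}$, and so is the whole disjunction; otherwise the disjunct is unsatisfiable and is dropped. For~(b), if some variable $x$ has a finite language $\lang_x = \{w_1, \ldots, w_k\}$, I replace $\varphi$ by $\bigvee_{j=1}^k \varphi[x/w_j]$, where in each disjunct $x$ is substituted by $w_j$ in $\needle$ and $\haystack$ and the membership $x \in \lang_x$ is removed. Equisatisfiability is immediate in both cases.

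The core step is~(c). Given a non-finite variable $x$ with a DFA $\aut_x = (Q, \Delta, I, F)$ which I may assume to be trimmed, I compute the maximal SCCs of $\aut_x$ and form the SCC DAG. Any accepting run visits a unique sequence $S_0, S_1, \ldots, S_k$ of SCCs contiguously, linked by inter-SCC transitions $\transition{v_{i-1}}{a_{i-1}}{u_i}$ in $\Delta$, with endpoints $u_0 \in I \cap S_0$ and $v_k \in F \cap S_k$. Enumerating all valid such tuples $\tau = (u_0, v_0, a_0, u_1, v_1, \ldots, a_{k-1}, u_k, v_k)$ yields the decomposition
\[
\lang_x \;=\; \bigcup_\tau \; L^\tau_0 \concat a_0 \concat L^\tau_1 \concat a_1 \concat \cdots \concat a_{k-1} \concat L^\tau_k,
\]
where $L^\tau_i \subseteq \Sigma^*$ collects the words read on runs from $u_i$ to $v_i$ staying entirely inside $S_i$. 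Each $L^\tau_i$ is accepted by the sub-DFA over the states of $S_i$ with single initial $u_i$, single final $v_i$, and only the transitions internal to $S_i$; a short strong-connectivity argument on its trim shows that the remaining states form a single nontrivial SCC, so the sub-DFA is decomposed. Degenerate cases in which only $\epsilon$ is accepted produce a fresh finite variable that will be eliminated by~(b) in the next round. I then introduce fresh variables $y^\tau_0, \ldots, y^\tau_k$ with $\lang_{y^\tau_i} \defeq L^\tau_i$, and for each $\tau$ I emit one disjunct in which every occurrence of $x$ in $\needle, \haystack$ is replaced by $y^\tau_0 \concat a_0 \concat \cdots \concat a_{k-1} \concat y^\tau_k$.

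The main obstacle I expect is the strong-connectivity claim in~(c): I need to verify cleanly that, once restricted to a single SCC $S_i$ with prescribed single entry $u_i$ and single exit $v_i$ and then trimmed, the resulting DFA really has exactly one maximal SCC, so that the fresh variable it defines is genuinely decomposed in the sense required above. Everything else is routine bookkeeping: equisatisfiability of each individual rewrite follows from the standard SCC decomposition of runs, and termination follows from the lexicographic measure (number of non-decomposed variables, number of variables), since~(c) strictly reduces the former without introducing any new non-decomposed variables, and~(b) strictly reduces the latter.
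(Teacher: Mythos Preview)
Your approach is essentially the same as the paper's: both decompose each variable's DFA along its SCC structure, introduce fresh decomposed variables for the nontrivial SCCs (with fixed entry/exit states), and dispose of the finite and variable-free residue by enumeration and direct evaluation. One small slip in the bookkeeping: your stated termination measure does not work, because step~(c) can introduce fresh finite variables, which are non-decomposed, so the first component may increase; replacing it by the number of \emph{infinite} non-decomposed variables (which~(c) strictly decreases and~(b) leaves unchanged) repairs the argument.
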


Due to the previous lemma, in the rest of the paper we will focus on solving
a~single normalized $\notcontains$ constraint.

\medskip
In the paper, we will also make use of the following result showing decidability of
$\notcontains$ with only flat variables.
\begin{lemma}[\cite{ChenHHHL25}] \label{lemma:liaDecidability}
    Satisfiability of $\notcontains(\needle, \haystack) \land \langconstr$
    where~$\lang_x$ is flat for any $x \in \vars$ is decidable in \clNExpTime.
\end{lemma}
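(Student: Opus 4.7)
The plan is to reduce the problem, after a nondeterministic preprocessing step, to satisfiability of a Presburger arithmetic formula of exponential size. Since every $\lang_x$ is flat and presented by an NFA, I would first guess, for each variable $x$, a \emph{shape} for its value: a sequence of simple cycles (to be iterated) and simple paths (fixed spacers) of the NFA for $\lang_x$ whose concatenation gives a word in $\lang_x$. A shape of polynomial size yields a parametrized form $x = v_{x,0}\, v_{x,1}^{k_{x,1}}\, v_{x,2}\, v_{x,3}^{k_{x,2}} \cdots v_{x,2m_x}$, where the $v_{x,j}$ are concrete words and the $k_{x,j}$ are nonnegative integer parameters. Substituting into $\needle$ and $\haystack$ gives parametrized strings $N(\vec k)$ and $H(\vec k)$ composed of a polynomial number of blocks, each either a constant word or an iteration of a constant word.

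Next, I would encode ``$N(\vec k)$ is a factor of $H(\vec k)$'' as an existential Presburger formula $\mu(\vec k)$ using a Parikh-image construction. A candidate occurrence of $N$ inside $H$ is witnessed by an alignment specifying, block by block, where each block of $N$ lies within (possibly straddling) the blocks of $H$. Inspired by the equivalence check for streaming string transducers, as in~\cite{ChenHHHL25}, one builds a product automaton whose states track, for each alignment pattern, the synchronised positions in $N$ and $H$, while cycles of the automaton increment the corresponding iteration parameters $k_{x,j}$. Accepting runs correspond exactly to realisable matches, so $\mu(\vec k)$ is definable as the Parikh image of an NFA of exponential size, yielding an existential Presburger formula of exponential size.

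The original $\notcontains$ instance is then satisfiable iff some shape guess admits $\exists \vec k.\, \neg \mu(\vec k)$. Since $\mu$ defines a semilinear set, its complement is again semilinear of at most exponential size and emptiness of the complement is in \clNP\ in that size. Combined with the initial exponential-size shape guess, the whole procedure fits into \clNExpTime: one exponential blow-up for shapes and Parikh construction, and an \clNP{} check on top.

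The main obstacle, and where the bulk of the work lies, is the alignment analysis: an iterated block $v^{k}$ inside $N$ can straddle several iterated blocks of $H$, forcing periodicity constraints that couple multiple $k_{x,j}$'s through Diophantine conditions. The construction must partition the space of candidate matches into finitely many Presburger definable patterns, relying on combinatorial results on word periodicity (Fine--Wilf and its consequences) to keep both the number of patterns finite and the resulting formula of manageable size.
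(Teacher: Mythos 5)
Your proposal reconstructs the general architecture of the cited proof correctly: for flat languages, guess for each variable a \emph{shape} (a sequence of fixed connector words and iterated loop words, which exists exactly because of the defining form~\eqref{eq:defFlat} of a flat language and is precisely what makes the run-to-Parikh-image mapping injective, as the remark after the lemma emphasises), substitute into $\needle$ and $\haystack$, analyse block alignments, and reduce to a Presburger sentence over the iteration parameters via a Parikh-style construction in the spirit of the streaming-string-transducer equivalence check. All of this matches the cited construction.

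The gap lies in the final complexity accounting. You encode the \emph{positive} relation ``$N(\vec k)$ is a factor of $H(\vec k)$'' as an existential Presburger formula $\mu(\vec k)$ that you yourself estimate to be exponential-size, and then claim that the complement of the associated semilinear set is ``of at most exponential size'' and can be tested for non-emptiness in \clNP{} within that size, giving \clNExpTime. But complementing a semilinear set (equivalently, eliminating the universal block hidden in $\neg\mu$) can itself cause an exponential blow-up in the size of the representation; starting from an exponential-size $\mu$, this puts you in doubly-exponential territory, well outside \clNExpTime. To close the gap you would need one of two things that the cited work in effect supplies. Either argue that the alignment constraints can be captured by a \emph{polynomial}-size formula --- for instance by parametrising the occurrence with a single shift variable and a bounded number of offset and congruence variables, rather than enumerating alignment patterns as an exponential disjunction --- so that $\mu$ is a polynomial-size $\Sigma_1$ formula and $\exists\vec k.\ \neg\mu(\vec k)$ is a polynomial-size $\Sigma_2$ sentence, whose satisfiability is indeed in \clNExpTime. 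Or, more in line with how the paper phrases it, note that the construction produces an \emph{equisatisfiable} Presburger formula for the \emph{negated} constraint directly, avoiding any pass through semilinear-set complementation. Your Fine--Wilf remark is the right ingredient for either fix, but as written the last paragraph does not establish the claimed bound.
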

\begin{proof}[Proof sketch]
    We can reduce~$\varphi$ into an equisatisfiable Presburger arithmetic formula
    $\psi$ based on Parikh images of runs of the NFAs for the variables in~$\varphi$.
    Decidability of~$\varphi$ follows from decidability of Presburger arithmetic.
    See~\cite{ChenHHHL25} for details.
\end{proof}

The crucial fact that \cref{lemma:liaDecidability} depends on is that there is
a~one-to-one mapping between runs in NFAs of flat languages and their Parikh
images; this mapping fundamentally breaks for non-flat languages so one cannot
directly extend this technique to the non-flat case.

\vspace{-0.0mm}
\subsection{Lemmas in Our Toolbox} \label{sec:label}
\vspace{-0.0mm}

We introduce fundamental lemmas from the area of combinatorics on words that will be used
throughout the rest of the paper. The following lemma will be useful to guarantee
the existence of conflicts (i.e., non-matching positions) in sufficiently large overlaps of two words $\alpha^M$ and $\beta^N$ 
for some primitive words $\alpha, \beta \in \alphabet^*$ and large constants $M, N \in \naturals$.
Intuitively, we will control the choice of $\alpha$ and $\beta$,
and, thus, guarantee that $\alpha$ and $\beta$ cannot be powers of the same word,
essentially applying the contraposition of the following lemma.

\begin{lemma} \label{lemma:primRotations}
    Let $\alpha \in \alphabet^*$ be a primitive word, and let $p$ and $s$ be two words
    such that $\alpha = p s$. Then the word $s p$ is primitive.
\end{lemma}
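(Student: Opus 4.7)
The plan is to proceed by contraposition: assuming that $sp$ is not primitive, I will show that $\alpha = ps$ is not primitive either, contradicting the hypothesis. So suppose $sp = w^k$ for some word $w$ and integer $k \geq 2$.

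The main step is a case analysis on where the boundary between $s$ and $p$ falls within the periodic structure of $w^k$. Since $s$ is a prefix of $w^k$, I write $|s| = a|w| + b$ with $0 \leq b < |w|$, and split $w = uv$ with $|u| = b$ (taking $u = \epsilon$ when $b = 0$). In the case $b = 0$, the decomposition is clean: $s = w^a$ and $p = w^{k-a}$, so $ps = w^{k-a} w^a = w^k$, which is a $k$-th power. In the case $b > 0$, we get $s = w^a u$ and $p = v w^{k-a-1}$; concatenating yields
\[
ps = v w^{k-a-1} w^a u = v (uv)^{k-1} u = (vu)^k,
\]
again a $k$-th power (with $|vu| = |w| \geq 1$). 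In both cases $ps$ is a proper power, so $\alpha$ is not primitive.

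The edge cases to be careful about are $s = \epsilon$ or $p = \epsilon$, in which $sp = ps$ and the claim is immediate, and ensuring that in the case $b > 0$ we really do have $a + 1 \leq k$ so that the exponent $k - a - 1$ is nonnegative; this follows from $b > 0$ together with $|s| < |sp| = k|w|$.

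I do not expect any serious obstacle here; the only subtlety is the bookkeeping in the decomposition $s = w^a u$, $p = v w^{k-a-1}$, which must be verified by comparing lengths against $k|w|$. The whole argument is a short direct calculation that exploits the fact that conjugate words share the same primitivity status, and the contraposition gives the lemma as stated.
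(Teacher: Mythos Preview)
Your proof is correct and follows essentially the same route as the paper's: both argue by contraposition, decompose $s$ and $p$ along the period of $w^k$ (the paper writes $s=\beta^{l}u$, $p=v\beta^{m}$ with $\beta=uv$ and $l+m+1=k$, which is exactly your $a$ and $k-a-1$), and then regroup to obtain $ps=(vu)^k$. Your explicit handling of the boundary cases $b=0$ and $s=\epsilon$ or $p=\epsilon$ is slightly more careful than the paper's presentation, but the argument is the same.
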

\begin{proof}
    Assume that $\beta^k = s p$ for some $k \ge 2$. Then we have $s = \beta^l u$ and $p = v \beta^m$
    for $u \in \prefixesOf{\beta}$, $v \in \suffixesOf{\beta}$ such that $\beta = u v$
    and $l + m + 1 = k$. Thus, we have
    \begin{equation}
        \alpha = v \beta^m \beta^l u = v (uv)^m (uv)^l u = (vu)^{l+m+1}
    \end{equation}
    and so the word $\alpha$ is not primitive, a contradiction.
\end{proof}

\begin{lemma}[{\cite[Proposition~1.2.1 (Fine and Wilf)]{Lothaire02}}] \label{lemma:fineAndWilf}
    Let $x$ and $y$ be two words. If the words $x^k$ and $y^l$, for any $k,l \in \nat$ share 
    a~common prefix of the length at least $|x| + |y| - \gcdof{|x|}{|y|}$,
    then $x$ and $y$ are powers of the same word.
\end{lemma}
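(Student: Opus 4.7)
The plan is to rephrase the statement as a periodicity lemma on words and then prove it by induction following the Euclidean algorithm. If $w$ denotes a common prefix of $x^k$ and $y^l$ of length $n \ge |x|+|y|-\gcdof{|x|}{|y|}$, then because both $w[i]$ and $w[i+|x|]$ read the same position of $x$ whenever $1 \le i$ and $i+|x|\le n$, the word $w$ has $|x|$ as a period, and symmetrically it has $|y|$ as a period. Thus the claim reduces to the classical \emph{periodicity lemma}: if a word $w$ has periods $p$ and $q$ with $|w| \ge p + q - \gcdof{p}{q}$, then $w$ has period $\gcdof{p}{q}$.

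I would prove the periodicity lemma by strong induction on $p + q$, assuming $p \ge q$ without loss of generality. The cases $q = 0$ and $p = q$ are immediate. In the inductive step, I show that $w$ also has period $p - q$: for each relevant index $i$, one chases the equalities $w[i] = w[i+p] = w[i+p-q]$ using first period $p$ and then period $q$ when $i+p \le |w|$; otherwise, when $i$ lies too close to the right end of $w$, one instead uses period $q$ to step leftward into a region where period $p$ can be applied, chaining the equalities in the opposite direction. The bound $|w| \ge p + q - \gcdof{p}{q}$ is exactly what guarantees that at least one of the two chases succeeds for every index $i$. Since $\gcdof{p}{q} = \gcdof{q}{p-q}$, the same length bound is preserved for the pair $(q, p-q)$, and strong induction yields period $\gcdof{p}{q}$ for $w$.

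To conclude, once $w$ has period $d = \gcdof{|x|}{|y|}$, I let $u$ be its prefix of length $d$; since $|w| \ge \max(|x|,|y|)$ (which follows from the assumed length bound), both $x$ and $y$ are prefixes of $w$ of lengths divisible by $d$, so $x = u^{|x|/d}$ and $y = u^{|y|/d}$, making both $x$ and $y$ powers of the common word $u$. The main obstacle I expect is the careful index bookkeeping in the inductive step: the length bound is tight, and one must deploy it at the precise moment that separates the two branches of the chase so that every index of $w$ is covered without double counting.
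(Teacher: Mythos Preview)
The paper does not prove this lemma; it is cited from \cite{Lothaire02} as a known tool. Your plan---reduce to the periodicity lemma and argue by Euclidean induction on $p+q$---is the standard one and is correct in outline, as is your concluding step extracting the common root $u$ of length $\gcd(|x|,|y|)$.

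The inductive step as you describe it has a genuine gap, though. You claim $w$ has period $p-q$ via one of two chases, $w[i]=w[i+p]=w[i+p-q]$ when $i+p\le|w|$, or a leftward step by $q$ otherwise. Neither branch applies to every index: with $p=5$, $q=3$, $|w|=7$ (the tight bound) and $i=3$, you have $i+p=8>7$ yet $i-q=0<1$, so from position $3$ you can move neither by $+p$ nor by $-q$. The equality $w[3]=w[5]$ does hold, but only through a chain of length six visiting $3,6,1,4,7,2,5$, and your sketch gives no argument for why such a long detour must exist and terminate at $i+(p-q)$. The clean fix is not to prove period $p-q$ on all of $w$, but only on the suffix $w'=w[q{+}1..n]$: there every index exceeds $q$, so the single chain $i\to i-q\to i-q+p$ always stays in range. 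This suffix then has periods $q$ and $p-q$ with length $n-q\ge q+(p{-}q)-\gcd(q,p{-}q)$, so strong induction gives $w'$ period $d$; one finishes by using period $q$ of $w$ to carry the first $q$ positions into the suffix.
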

Using \cref{lemma:primRotations,lemma:fineAndWilf}, we provide the following corollary that shows
existence of conflicts between arbitrary overlaps of repetitions of primitive words
of a~sufficient size.

\begin{restatable}{corollary}{corInfixFineWilf}\label{cor:infixFineWilf}
Let $u = \alpha^M$ and $v = \beta^N$ be two words where $\alpha, \beta \in \primitiveWords$,
with $|\alpha| \neq |\beta|$ and $M, N \in \naturals$. Then any overlap between $u$ and $v$
of the size at least $|\alpha|+|\beta|-\gcdof{|\alpha|}{|\beta|}$ contains a conflict.
\end{restatable}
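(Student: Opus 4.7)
The plan is to argue by contradiction: assume that some overlap of size $L \ge |\alpha|+|\beta|-\gcdof{|\alpha|}{|\beta|}$ between $u=\alpha^M$ and $v=\beta^N$ is conflict-free. Then the two windows of $u$ and $v$ lying inside the overlap agree letter-by-letter, yielding a single word $w$ of length $L$ that is a factor of both $\alpha^M$ and $\beta^N$.

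First, I would recast each of these two factor occurrences of $w$ as a prefix of an iteration of a primitive word of the appropriate length. Since $|w|\ge|\alpha|$, the occurrence of $w$ inside $\alpha^M$ starts $i$ positions into some copy of $\alpha$ for some $0 \le i < |\alpha|$; writing $\alpha = p s$ with $|p|=i$, the word $w$ is a prefix of $(sp)^{M'}$ for $M'$ large enough. By \cref{lemma:primRotations} the rotation $\alpha'\defeq sp$ is primitive and has length $|\alpha|$. Applying the same argument to $\beta^N$ produces a primitive rotation $\beta'$ of $\beta$ with $|\beta'|=|\beta|$ and $w$ a prefix of $(\beta')^{N'}$ for some $N'$.

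Second, I would appeal to Fine and Wilf (\cref{lemma:fineAndWilf}). The iterations $(\alpha')^{M'}$ and $(\beta')^{N'}$ share $w$ as a common prefix, and $|w|=L\ge|\alpha|+|\beta|-\gcdof{|\alpha|}{|\beta|}=|\alpha'|+|\beta'|-\gcdof{|\alpha'|}{|\beta'|}$. Hence $\alpha'$ and $\beta'$ are both powers of a common word $t$; primitivity of $\alpha'$ and $\beta'$ then forces $\alpha'=t=\beta'$, and in particular $|\alpha|=|\alpha'|=|\beta'|=|\beta|$, contradicting the hypothesis $|\alpha|\neq|\beta|$.

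The core ideas are essentially dictated by \cref{lemma:primRotations,lemma:fineAndWilf}; the only place that needs care is the first step, namely turning a sufficiently long factor of $\alpha^M$ into a prefix of an iteration of a primitive rotation of $\alpha$ (and the same for $\beta$), and checking that the resulting prefixes agree rather than merely being two unrelated factors of the same length. Once that bookkeeping is in place, the contradiction via Fine-Wilf is immediate.
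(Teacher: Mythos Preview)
Your proposal is correct and follows essentially the same approach as the paper: both proofs pass to primitive rotations $\alpha',\beta'$ of $\alpha,\beta$ via \cref{lemma:primRotations}, observe that a conflict-free overlap of the required size yields a common prefix of $(\alpha')^{M'}$ and $(\beta')^{N'}$ long enough to invoke \cref{lemma:fineAndWilf}, and then derive $|\alpha|=|\beta|$ from the primitivity of $\alpha'$ and $\beta'$. Your write-up is in fact slightly more explicit than the paper's about why the two overlapping subwords coincide as a single word~$w$.
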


A natural approach to showing that an assignment $\sigma$ satisfies $\varphi$ is to
show that $\sigma(\haystack)$ cannot be written as $\sigma(\haystack) = p\concat \sigma(\needle) \concat s$
for any choice of words $p$ and $s$. Therefore, one would have to consider all prefixes $p$,
infixes $u$, and corresponding suffixes $s$ with $|u| = |\sigma(\needle)|$ and show that
$\sigma(\haystack) = p u s$ implies $u \neq \sigma(\needle)$. Note that the choice of the
prefix $p \in \prefixesOf{\sigma(\haystack)}$ uniquely determines $u$ and $s$, and, therefore,
we can only refer to different prefixes when showing $\sigma \models \varphi$. The following
lemma reduces the number of prefixes we have to consider if we have information about primitive
words that are factors of $\sigma(\needle)$ and $\sigma(\haystack)$.

\begin{lemma}[{\cite[Proposition~12.1.3]{Lothaire02}}] \label{lemma:primitiveAlignment}
    Let $\alpha \in \Sigma^*$ be a primitive word, and let $\alpha^2 = x \alpha y$ for some
    words $x, y \in \Sigma^*$. Then either $x = \epsilon$ or $y = \epsilon$, but not both.
\end{lemma}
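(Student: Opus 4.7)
The plan is a short case analysis driven by the length identity $|x|+|y|=|\alpha|$, which follows at once from $|\alpha^2|=|x|+|\alpha|+|y|$. The ``not both empty'' half is immediate: if $x=y=\epsilon$ then $\alpha^2=\alpha$, forcing $\alpha=\epsilon$, which fails primitivity since $\epsilon=\epsilon^2$. The two ``exactly one empty'' cases ($\alpha^2=\alpha y$ or $\alpha^2=x\alpha$) are consistent with primitivity, since they simply yield $\alpha=y$ or $\alpha=x$; so the real content is ruling out that both $x$ and $y$ are nonempty.

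For this remaining case, the plan is to extract a rotation identity from the hypothesis. Write $\alpha=uv$ with $|u|=|x|$ and $|v|=|y|$, which is well defined because $|x|<|\alpha|$. Then $\alpha^2=uvuv$, and the factor of length $|\alpha|$ occurring at position $|u|$ of this word is precisely $vu$. The hypothesis $\alpha^2=x\alpha y$ asserts that this same factor is $\alpha=uv$, so one obtains the commutation $uv=vu$ with both $u$ and $v$ nonempty.

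It then remains to turn $uv=vu$ into a non-primitivity witness for $\alpha$. I would invoke the classical commutation-on-words lemma: $uv=vu$ forces $u=w^i$ and $v=w^j$ for a common word $w$ and exponents $i,j\geq 1$. A clean derivation goes through \cref{lemma:fineAndWilf}: commutation makes $u^k$ and $v^\ell$ share arbitrarily long common prefixes, so the Fine--Wilf hypothesis is met and $u$, $v$ share a primitive root $w$. Consequently $\alpha=uv=w^{i+j}$ with $i+j\geq 2$, contradicting the primitivity of $\alpha$. The only conceptual step in the proof is spotting the rotation identity $\alpha=vu$ hidden in the self-overlap $\alpha^2=x\alpha y$; once that is in hand, everything else is standard, and I expect no serious obstacle beyond careful bookkeeping of positions to justify the rotation step.
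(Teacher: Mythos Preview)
Your argument is correct. Note, however, that the paper does not supply its own proof of this lemma: it is merely quoted from Lothaire, so there is nothing in the paper to compare against. Your route---extract the rotation identity $\alpha=vu$ from the self-overlap, obtain the commutation $uv=vu$ with $u,v$ both nonempty, and then conclude a common root---is the standard one.

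One small remark on the last step: invoking \cref{lemma:fineAndWilf} to get ``$uv=vu$ implies $u,v$ are powers of a common word'' does work (from $uv=vu$ one has $(uv)^n=u^nv^n=v^nu^n$, so $u^n$ and $v^n$ are both prefixes of $(uv)^n$ and hence share a prefix of length $\min(n|u|,n|v|)$, which for large $n$ exceeds $|u|+|v|$), but you left that justification implicit. If you want to avoid the detour through Fine--Wilf entirely, the Euclidean-style induction on $|u|+|v|$ (write $v=uw$ when $|u|\le|v|$, reduce to $uw=wu$) is shorter and self-contained. Either way the proof stands.
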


We will use the next lemma as a recipe for constructing words $w_z \in \lang_z$ for
non-flat $\lang_z$ such that $w_z$ has as a factor a primitive word that 
is sufficiently long for our proofs.

\begin{lemma}[{\cite{lyndon1962}}] \label{lemma:primitiveDensity}
    Let $x^K = y^L z^M$ such that $x, y$, and $z$ are string variables
    and $K, L$ and $M$ are integers such that $K, L, M \ge 2$. Then any solution
    of the equation has the form $x = \alpha^k$, $y = \alpha^l$, and $z = \alpha^m$
    for some word $\alpha$ and numbers $k, l, m \in \naturals$.
\end{lemma}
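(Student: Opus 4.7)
The plan is to reduce the problem to an application of the Fine--Wilf theorem (\cref{lemma:fineAndWilf}) by exploiting the interplay of the periods $|x|$, $|y|$, and $|z|$ inside the word $x^K$. The starting point is the length identity $K|x| = L|y| + M|z|$, so that $y^L$ and $z^M$ partition $x^K$ exactly, with $y^L$ a prefix and $z^M$ a suffix. Since $K \ge 2$, we have $L|y| + M|z| \ge 2|x|$, so at least one of $L|y| \ge |x|$ or $M|z| \ge |x|$ must hold; by symmetry I assume $L|y| \ge |x|$. Then $y^L$ is a prefix of $x^K$ long enough to inherit both the period $|y|$ (trivially) and the period $|x|$ (as a prefix of $x^\infty$ of length at least $|x|$).

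In the principal subcase where $L|y| \ge |x| + |y| - \gcdof{|x|}{|y|}$, \cref{lemma:fineAndWilf} applied to $y^L$ forces $x$ and $y$ to be powers of a common primitive word $\alpha$, say $x = \alpha^k$ and $y = \alpha^l$. Substituting into $x^K = y^L z^M$ collapses the equation to $z^M = \alpha^{kK - lL}$; since $\alpha$ is primitive and $M \ge 2$, a direct primitivity argument (or an invocation of \cref{lemma:primRotations} on a hypothetical non-trivial rotation of $\alpha$) forces $M$ to divide $kK - lL$, so $z = \alpha^{(kK - lL)/M}$, yielding the desired common root.

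The main obstacle is the residual subcase in which the Fine--Wilf length condition fails, i.e., $L|y| < |x| + |y| - \gcdof{|x|}{|y|}$; given $L \ge 2$, this forces $L = 2$ together with $|y| < |x|$. Here $y$ is a proper prefix of $x$, so we can write $x = y x_1$ with $0 < |x_1| < |x|$. Substituting into $x^K = y^2 z^M$ and cancelling one $y$ from the left yields $x_1 (y x_1)^{K-1} = y z^M$, a strictly shorter word equation in the variables $x_1, y, z$. I would close this case by strong induction on $|x| + |y| + |z|$, and the delicate bookkeeping here -- arranging the reduced equation so that all three exponents are again at least~$2$, possibly after one or two further Fine--Wilf steps or symmetry swaps -- is the principal difficulty of the proof. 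The symmetric situation $M|z| \ge |x|$ with $L|y| < |x|$ is handled by an analogous argument on the suffix $z^M$.
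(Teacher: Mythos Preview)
The paper does not prove this lemma; it is quoted verbatim from Lyndon and Sch\"utzenberger's 1962 paper and used as a black box (see the citation in the statement). So there is no ``paper's own proof'' to compare against---any correct argument you supply goes beyond what the authors do.

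On the merits of your outline: the Fine--Wilf route is indeed one of the standard modern proofs of the Lyndon--Sch\"utzenberger theorem, and your principal subcase is handled correctly. However, your residual-case analysis contains a concrete error. From $|x| \le L|y| < |x| + |y| - \gcdof{|x|}{|y|}$ you deduce only $(L-1)|y| < |x| \le L|y|$, which does \emph{not} force $L = 2$: for instance $|y| = 3$, $|x| = 8$, $L = 3$ satisfies both inequalities (with $\gcdof{8}{3} = 1$, the upper bound is $10 > 9 = L|y|$). So the reduction you sketch---writing $x = y x_1$ and cancelling a single $y$---is not always available in the form you describe, and the induction you propose needs a more careful case split (typically one distinguishes whether the boundary between $y^L$ and $z^M$ falls inside the first or a later copy of $x$, and uses conjugacy of $x$ to realign). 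You yourself flag the bookkeeping as ``the principal difficulty,'' and that is exactly where your sketch currently has a gap; the error above shows the difficulty is real, not merely clerical.
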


We provide the following corollary
to give insight into how we use \cref{lemma:primitiveDensity} to construct factors that are primitive words
of a suitable length.
\begin{corollary} \label{cor:primWordConstruction}
    Given two words $u$ and $v$ such that for any word $w$ it holds that $u, v \not \in w^*$,
    we have that any word $\alpha = u^L v^M$ for $L, M \ge 2$ is primitive.
\end{corollary}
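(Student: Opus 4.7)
The plan is to prove the contrapositive: if $\alpha = u^L v^M$ (with $L, M \ge 2$) is not primitive, then $u$ and $v$ must both be powers of some common word~$w$, which contradicts the hypothesis. Concretely, I would assume for contradiction that $\alpha = \gamma^K$ for some word $\gamma$ and some integer $K \ge 2$. Combined with the hypothesis, this yields the word equation
\begin{equation}
    \gamma^K = u^L v^M
\end{equation}
in which all three exponents $K, L, M$ are at least $2$, so the premise of \cref{lemma:primitiveDensity} is satisfied verbatim.

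Applying \cref{lemma:primitiveDensity} with $x := \gamma$, $y := u$, $z := v$ then produces a single word $\delta$ and natural numbers $k, l, m$ such that $\gamma = \delta^k$, $u = \delta^l$, and $v = \delta^m$. In particular, choosing $w := \delta$ we obtain $u, v \in w^*$, directly contradicting the assumption that no such common $w$ exists. Hence $\alpha$ must be primitive.

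The proof is essentially a one-line appeal to Lyndon--Schützenberger (\cref{lemma:primitiveDensity}); there is no real obstacle, since the hypotheses $L, M \ge 2$ on the corollary are precisely what one needs to match the $\ge 2$ requirements of \cref{lemma:primitiveDensity}, and the extra exponent $K \ge 2$ comes for free from the assumption of non-primitivity. The only subtlety worth noting is the interpretation of the hypothesis ``$u, v \notin w^*$ for any word $w$'' in the sense of \cref{lem:nonflatCharacterization}, i.e., that there is no single~$w$ which simultaneously covers both $u$ and $v$ as powers; the contradiction above produces exactly such a common $w = \delta$.
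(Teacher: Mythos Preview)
Your proof is correct and follows essentially the same approach as the paper: assume non-primitivity to get $\gamma^K = u^L v^M$ with $K,L,M \ge 2$, then invoke \cref{lemma:primitiveDensity} (Lyndon--Sch\"utzenberger) to force $u$ and $v$ to be powers of a common word, contradicting the hypothesis. Your added remark clarifying the intended reading of ``$u, v \notin w^*$ for any $w$'' is a helpful observation that the paper leaves implicit.
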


\begin{proof}
    By contradiction. Assume that $\alpha$ is not primitive, i.e., $\alpha = t^K = u^L v^M$
    for some $t$ and $K, L, M \ge 2$. Applying \cref{lemma:primitiveDensity}, we see that $u = w^l$
    and $z = w^m$ for some $w$, which contradicts the assumptions of the corrolary.
\end{proof}

\newcommand{\SCCs}[0]{\mathcal{S}}

\vspace{-0.0mm}
\subsection{Easy Fragments}\label{sec:easy}
\vspace{-2.0mm}

Before we establish our main result giving the decidability of the hardest
fragment of $\notcontains$, we first describe what we consider \emph{easy
fragments} and how to deal with them.
We assume a~normalized $\notcontains(\needle, \haystack) \land \langconstr$
constraint.
\begin{enumerate}
  \item  
    \emph{The formula is solvable by length abstraction.}
    This fragment contains formulae that can be solved easily by making the
    $\needle$eedle longer than the $\haystack$aystack.
    Suppose $\needle = t_1 \ldots t_m$ and $\haystack = s_1 \ldots s_n$ where
    every $t_i$ and $s_j$ is either a string variable $x\in\vars$ or a~symbol
    $a\in\Sigma$.
    We can then create a~Presburger arithmetic formula~$\varphi_\ell$ over
    \emph{length variables} $\{x_\ell \mid x \in \vars\}$ such that
    $\varphi_\ell\defeq \sum_{1\leq i\leq m} \ell_i > \sum_{1 \leq j \leq n}
    \ell_j \land \Psi$.
    In the formula, $\ell_i$~and~$\ell_j$ are either~1 (if~$t_i, s_j \in \Sigma$)
    or the length variable~$x_\ell$ (if $t_i, s_j = x$), and~$\Psi$ is
    a~formula constraining the possible values for the length variables
    (obtained, e.g., using the Parikh images of the variables' languages).
    If~$\varphi_\ell$ is satisfiable, so is the original $\notcontains$.
    %
  \item  \emph{All variables are flat.}
    In this case, we can use \cref{lemma:liaDecidability}.
\end{enumerate}

\vspace{-4.0mm}
\section{Overview}\label{sec:label}
\vspace{-2.0mm}

We now move to our main result: deciding a~\emph{hard} instance of
$\varphi \defeq \notcontains(\needle, \haystack) \land \langconstr$.
We can classify normalized $\notcontains$ constraints (cf.\ \cref{sec:normalization}) that do not fall in the
fragments of \cref{sec:easy} based on the occurrences of non-flat variables as
follows:
\begin{enumerate}
  \item  constraints where a~non-flat variable~$x$ occurs both in~$\needle$ and~$\haystack$ and
    \label{item:twosidedprob}
  \item  constraints where all (and at least one) non-flat variables occur only in~$\haystack$.
    \label{item:haystackOnly}
\end{enumerate}
Note that the above not included cases of
\begin{inparaenum}[(a)]
  \item  all variables being flat and
  \item  a~non-flat variable being only in~$\needle$
\end{inparaenum}
are covered in \cref{sec:easy}. In particular, if there is a~variable~$x$
that only occurs in $\needle$, then $\lang_x$ is infinite due to our normalization.
Therefore, such a~constraint can be solved by making $\needle$ longer than $\haystack$.

We distinguish the classes~(\ref{item:twosidedprob}) and~(\ref{item:haystackOnly}) above
since for~(\ref{item:twosidedprob}), the string substituted for some occurrence of~$x$ in
$\sigma(\haystack)$ may overlap with the string for an occurrence of $x$ in
$\sigma(\needle)$.
We deal with the class~(\ref{item:twosidedprob}) by substituting two-sided non-flat
variables~$x$ with fresh symbols.
In \cref{sec:twoSided}, we show that if there is a~model~$\sigma$ of the resulting
$\notcontains$, we can obtain a~model~$\sigma'$ of the original
constraint~$\varphi$ from~$\sigma$ by assigning~$\sigma'(x)$ to a~long-enough
word that ensures a~mismatch for every overlap of $\sigma'(x)$ in
$\sigma'(\haystack)$ and $\sigma'(x)$ in $\sigma'(\needle)$.
By doing this, we reduce~(\ref{item:twosidedprob}) to
either~(\ref{item:haystackOnly}) or $\notcontains$ over flat variables
(potentially with no variables at all).

For deciding the class~(\ref{item:haystackOnly}), given in detail in
\cref{sec:singleSided}, we construct an equisatisfiable formula that uses flat
underapproximations of languages associated with the remaining (as some might
have been removed at step~(\ref{item:twosidedprob})) non-flat variables present
in~$\haystack$.
Our result is based on the observation that long words in a~non-flat language
may have a~richer structure compared to long words one can construct using flat languages.
Therefore, it is unlikely that a~flat variable~$z$ should have a~large
conflict-free overlap with a non-flat variable $x$ in an assignment that
assigns these two variables sufficiently long words.
In particular, we prove that the original language of~$x$ can be
underapproximated by a~flat language while preserving equisatisfiability.
After this step, the resulting constraint can be decided using
\cref{lemma:liaDecidability}.

\vspace{-2.0mm}
\section{Removing Two-Sided Non-Flat Variables} \label{sec:twoSided}
\vspace{-1.0mm}

In this section, we will show how to transform a~normalized constraint
$\notcontains(\needle, \haystack) \land \langconstr$ with an occurrence of
a~two-sided non-flat variable~$x$ into a~constraint without occurrences of the
variable~$x$.
The resulting constraint after removing all two-sided non-flat variables can
then be solved either by reduction to Presburger arithmetic
(\cref{lemma:liaDecidability}; if no non-flat variables remain in the
constraint) or by the procedure in \cref{sec:singleSided} (if there are still
non-flat variables left in~$\haystack$).
The main result of this section is the following theorem.

\begin{theorem} \label{lemma:twoSidedRemoval}
    Let $\varphi \defeq \notcontains(\needle, \haystack) \land \langconstr$ be
    a~constraint over the
    alphabet~$\alphabet$ and let $z\in \varsOf{\needle} \cap
    \varsOf{\haystack}$ be a~non-flat variable.
    Then the formula
    $\varphi_\sep \defeq \notcontains(\needle[z/\sep], \haystack[z/\sep]) \land
    \langconstr$ with $\sep \notin \Sigma \cup \vars$ is equisatisfiable to~$\varphi$.
\end{theorem}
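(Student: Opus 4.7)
The plan is to prove the two directions of the equisatisfiability separately, with the reverse direction carrying the combinatorial content.

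The easy direction ($\sigma \models \varphi \Rightarrow \sigma \models \varphi_\sep$) goes as follows. Given $\sigma \models \varphi$, I consider the monoid homomorphism $h \colon (\Sigma \cup \{\sep\})^* \to \Sigma^*$ defined by $h(\sep) = \sigma(z)$ and $h(a) = a$ for $a \in \Sigma$. By construction, $h(\sigma(\needle[z/\sep])) = \sigma(\needle)$ and $h(\sigma(\haystack[z/\sep])) = \sigma(\haystack)$, so any factor occurrence of $\sigma(\needle[z/\sep])$ in $\sigma(\haystack[z/\sep])$ would, after applying $h$, yield a factor occurrence of $\sigma(\needle)$ in $\sigma(\haystack)$, contradicting $\sigma \models \varphi$. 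Hence $\sigma$ itself (viewed over the extended alphabet) satisfies $\varphi_\sep$.

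For the hard direction, given $\sigma \models \varphi_\sep$, I keep $\sigma$ on $\vars \setminus \{z\}$ and construct $\sigma'(z)$ of the form $p \alpha^K s$, where $\alpha$ is a long primitive word. By \cref{lem:nonflatCharacterization}, non-flatness of $\lang_z$ yields $p, s, u, v \in \Sigma^*$ with $u, v$ not powers of a common word and $p \{u, v\}^* s \subseteq \lang_z$; by \cref{cor:primWordConstruction}, $\alpha \defeq u^L v^M$ is primitive for every $L, M \ge 2$, and $\sigma'(z) \defeq p \alpha^K s \in \lang_z$ for every $K \ge 1$. Each $z$-occurrence in $\needle$ or $\haystack$ contributes a long $\alpha^K$-run in $\sigma'(\needle)$ or $\sigma'(\haystack)$, separated by ``non-$z$ blocks'' whose lengths depend only on $\sigma$ restricted to the remaining variables. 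I will pick $L, M$ so that $|\alpha|$ exceeds $|p| + |s|$ plus the longest non-$z$ block, and then pick $K$ so that $K|\alpha|$ exceeds the longest $s \cdot A \cdot p$ gap between two consecutive $\alpha^K$-segments of $\sigma'(\haystack)$.

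Suppose, towards contradiction, that $\sigma'(\needle)$ is a factor of $\sigma'(\haystack)$. The $\alpha^K$-run inside each $w$-block of $\sigma'(\needle)$ must equal a factor of $\sigma'(\haystack)$; by the calibration above combined with \cref{cor:infixFineWilf} and \cref{lemma:primitiveAlignment}, this factor must lie inside an $\alpha^K$-run of some $w$-block of $\sigma'(\haystack)$ at an offset that is a multiple of $|\alpha|$. The surrounding $p$- and $s$-fragments of the $w$-block then force the alignment onto an entire $w$-block of $\sigma'(\haystack)$, and, propagating position by position, the non-$z$ material between consecutive $z$-occurrences in $\needle$ must agree with the corresponding non-$z$ material of $\sigma'(\haystack)$. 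Compressing each $w$-block to the single symbol $\sep$ thus transforms this factor occurrence into one of $\sigma(\needle[z/\sep])$ inside $\sigma(\haystack[z/\sep])$, contradicting $\sigma \models \varphi_\sep$.

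The principal obstacle will be the bookkeeping at the interfaces between an $\alpha^K$-segment and its surrounding $p$, $s$, and non-$z$ material: short ``spurious'' matches to $\alpha$ or to its cyclic rotations can appear at these joints by coincidence, and ruling out that such matches concatenate into a sustained $\alpha^K$-run straddling two different $w$-blocks requires Fine--Wilf (\cref{lemma:fineAndWilf}) together with primitivity of $\alpha$, plus a careful case analysis on how an $\alpha^K$-run from $\needle$ can overlap the block structure of $\haystack$. Once that is handled, the final collapse of each $w$-block to $\sep$ is a routine translation of positions.
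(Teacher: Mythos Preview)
Your easy direction is correct and essentially matches the paper's. For the hard direction, however, your construction $\sigma'(z)=p\alpha^K s$ with a single primitive $\alpha$ differs from the paper's, and the step where you claim that each $\alpha^K$-run of $\sigma'(\needle)$ ``must lie inside an $\alpha^K$-run of some $w$-block of $\sigma'(\haystack)$'' has a real gap. \cref{lemma:primitiveAlignment} only forces the phases of two overlapping $\alpha$-periodic segments to agree modulo $|\alpha|$ once the overlap has length $\ge|\alpha|$; it does \emph{not} rule out that $\needle$'s $\alpha^K$ straddles a gap $sH_jp$ between two consecutive $\alpha^K$-blocks of $\sigma'(\haystack)$. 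Concretely, if $g=|sH_jp|$ and $\alpha$ happens to have period $g$ (equivalently a border of length $|\alpha|-g$) with its first $g$ letters equal to $sH_jp$, then placing $\needle$'s $\alpha^K$ at offset $|\alpha|$ from $\haystack$'s $j$-th block is conflict-free across the whole $\alpha^K\, sH_jp\,\alpha^K$ segment; your citation of \cref{cor:infixFineWilf} does not help here, since only one primitive word is in play. Worse, when $p=s=\epsilon$ and two $z$'s are adjacent (so $g=0$), the blocks merge into $\alpha^{2K}$ and your ``surrounding $p$- and $s$-fragments'' are empty and pin down nothing.

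The paper sidesteps this by building \emph{two} equal-length primitives $\alpha,\beta$ from $u,v$ and taking the core to be $\gamma=\alpha^r\beta^r\alpha^r\beta^r\alpha^{2r}\beta^{2r}$, then proving (\cref{cl:align}) that $\gamma$ is $(r{+}1)|\alpha|$-aligned: any nontrivial self-overlap of $\gamma$ of size $\ge(r{+}1)|\alpha|$ already contains a conflict. A pigeonhole count (\cref{cl:overlap}) shows that in any candidate placement every $\gamma$-block overlaps some $\gamma$-block on the other side by at least that much, forcing exact block-to-block alignment with no period-avoidance hypotheses on the gap words. Your route can probably be salvaged by additionally proving that $u^Lv^M$ has no period $\le g_{\max}$ once both $(L{-}1)|u|$ and $(M{-}1)|v|$ exceed $g_{\max}$ (a Fine--Wilf exercise using that $u,v$ are not powers of a common word), and by treating the $g=0$ case separately, but that is precisely the substantive combinatorics your sketch defers and the paper's $\gamma$-construction is designed to make unnecessary.
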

\vspace{-1mm}

The proof of the theorem is given below.
It is based on the observation that assigning long words to two-sided variables
necessarily causes some occurrences of the same variable to overlap.
Since these variables are non-flat, we can construct long words with a~rich
internal structure that will guarantee that any sufficiently long overlap
necessarily contains a~conflict.

\newcommand{\rmone}[0]{\mathrm{I}}
\newcommand{\rmtwo}[0]{\mathrm{II}}
\newcommand{\rmthree}[0]{\mathrm{III}}
\newcommand{\rmfour}[0]{\mathrm{IV}}


Before we give the proof, let us formally introduce the concept of words that
do not allow conflict-free overlaps of two occurrences of the same word larger
than a certain bound.

\vspace{-1mm}
\begin{definition}[$\ell$-aligned word]
Let $w$ be a word and $\ell \in \nat$.
We say that $w$ is \emph{$\ell$-aligned} 
if
for all~$p \in \Sigma^*$ such that $1 \leq |p|\leq |w| - \ell$,
$w$~is not a~prefix of $pw$.
\end{definition}
\vspace{-1mm}

Intuitively, $w$~is $\ell$-aligned if it cannot overlap with itself on
a~prefix/suffix of the length larger than or equal to~$\ell$ (except~$|w|$).
For example, the word $w = abaa$ is $2$-aligned since for no non-empty
word~$p$ of the length at most $|w| - 2 = 2$ it holds that $w$ is a prefix
of~$pw$.
On the other hand, $w = abaa$ is \emph{not} $1$-aligned since for $p =
aba$ of the length~3, it holds that $w$ is a~prefix of $pw = abaabaa$.

\smallskip

\vspace{-3.0mm}
\subsection{Proof of \cref{lemma:twoSidedRemoval}}\label{sec:label}
\vspace{-1.0mm}

If $\varphi$ is satisfiable then so is $\varphi_\sep$.
To see this, take a model $\sigma \models \varphi$ and replace the assignment of $z$ to $\sep$,
producing $\sigma'$. Then, there will be conflicts of $\sep$ and some non-$\sep$ symbol when
checking whether $\sigma'$ is a model of $\varphi_\sep$.
Alternatively, it might be possible to align $\sigma'(\needle)$ with $\sigma'(\haystack)$ in a manner
such that every $\sep$ in $\sigma'(\needle)$ matches some $\sep$ in $\sigma'(\haystack)$.
In such a case, if $\sigma'$ fails to be a model of $\varphi_\sep$ we reach a contradiction
with $\sigma$ being a~model~$\varphi$.

For the other direction, assume that~$\varphi_\sep$ is satisfiable,
which means there is a~model~$\sigma'$ of~$\varphi_\sep$.
Next, we will show how to construct a word $w_z$ s.t.\ $\sigma
= \sigma' \cup \{ z \mapsto w_z \}$ is a~model of~$\varphi$.

    
Focusing on variable~$z$, we can write the two sides of the $\notcontains$
constraint
as $\haystack = \haystack_0 z_{\haystack, 1} \haystack_1 \cdots \haystack_{n-1} z_{\haystack, n} \haystack_{n}$ and 
$\needle = \needle_0 z_{\needle, 1} \needle_1 \cdots \needle_{m-1} z_{\needle, m} \needle_{m}$ where 
$\needle_i, \haystack_j \in (\vars'\cup \Sigma)^*$
for each $i$ and $j$ assuming $\vars' = \vars\setminus \{ z \}$. Moreover, we write the subscript $z_{S, k}$ to distinguish
$k$-th occurrence of $z$ in $S \in \{ \haystack, \needle \}$.
As $\lang_z$ is non-flat, we have that $p \{ u, v \}^* s \subseteq \lang_z$ for
some words $p, u, v$, and $s$ where $u$ and $v$ are not a power of the same
word (\cref{lem:nonflatCharacterization}).

The core of our proof is based on the following observation. Since $\sigma'$ is a model
of $\varphi_\sep$, the word $\sigma'(\needle[z/\sep])$
is not a factor of $\sigma'(\haystack[z/\sep])$. Therefore, given any sufficiently long
word $w_z \in \lang_z$, if the extension~$\sigma = \sigma' \cup \{z \to w_z\}$ 
fails to be a~model, then there must be at least one occurrence of the
word $\sigma(z)$ in~$\sigma(\needle)$ partially overlapping with an occurrence
of the word~$\sigma(z)$ in~$\sigma(\haystack)$, as shown in the picture below.
Thus, if we construct a word $w_z \in \lang_z$
that cannot partially overlap with itself, we get~$\sigma$ that is a model of~$\varphi$.

%

\tikzstyle{boundaryNode} = [draw, text height=3mm, minimum height=7mm]
\tikzstyle{wordScalingFactor} = [scale=0.8]

\vspace{-2mm}
\begin{center}
\begin{tikzpicture}
    \node[boundaryNode, wordScalingFactor, minimum width=15mm, anchor=west, gray] (l_block0) at (0.0, 0)  {$\cdots$};
    \node[boundaryNode, wordScalingFactor, minimum width=12mm, anchor=west] (l_block1) at (l_block0.east)  {$\sigma(\haystack_i)$};
    \node[boundaryNode, wordScalingFactor, minimum width=30mm, anchor=west] (l_block2) at (l_block1.east) {$w_z$};
    \node[boundaryNode, wordScalingFactor, minimum width=8mm,  anchor=west] (l_block3) at (l_block2.east) {$\sigma(\haystack_{i+1})$};
    \node[boundaryNode, wordScalingFactor, minimum width=34mm, anchor=west, gray] (l_block_4) at (l_block3.east)  {$\cdots$};

    \node[boundaryNode, wordScalingFactor, minimum width=24mm, anchor=west, gray] (r_block0) at (0.0, -0.7)  {$\cdots$};
    \node[boundaryNode, wordScalingFactor, minimum width=7mm,  anchor=west]       (r_block1) at (r_block0.east) {$\sigma(\needle_{j})$};
    \node[boundaryNode, wordScalingFactor, minimum width=30mm, anchor=west]       (r_block2) at (r_block1.east) {$w_z$};
    \node[boundaryNode, wordScalingFactor, minimum width=11mm, anchor=west]       (r_block3) at (r_block2.east) {$\sigma(\needle_{j+1})$};
    \node[boundaryNode, wordScalingFactor, minimum width=25mm, anchor=west, gray] (r_block4) at (r_block3.east) {$\cdots$};
\end{tikzpicture}
\end{center}
\vspace{-2mm}

    Let $\alpha = u^2 u^k v^2$ and $\beta = u^2 v^l v^2$ be two words where
    $k = \lcmof {|v|}{|u|} / |u|$ and $l = \lcmof {|v|}{|u|} / |v|$. 
    By invoking \cref{cor:primWordConstruction}, we see
    that both $\alpha$ and $\beta$ are primitive.


    \newcommand{\cinfix}[0]{\gamma}

    Note that we also have $|\alpha| = |\beta|$ (because $|\alpha| = 2|u| + k|u| + 2|v|$,
    $|\beta| = 2|u| + l|v| + 2|v|$ and from the definition of $l$ and $k$ we have 
    $k|u| = l|v|$). We now use these two primitive
    words~$\alpha$ and~$\beta$ to construct~$w_z$. Let $\cinfix \triangleq \alpha^{r} \beta^{r} \concat \alpha^{r} \beta^{r} \concat \alpha^{2r} \beta^{2r}$ and let $w_z \in \lang_z$ be the word
    $w_z \triangleq p \cinfix s$
    where $r \ge 2$ is the smallest number satisfying
    $r |\alpha| > M + |p| + |s|$ with
    $M = \max\{|\sigma(\haystack_i)|, |\sigma(\needle_j)| : 1 \leq i \leq n, 1
    \leq j \leq m\}$.
    We set $\sigma = \sigma' \cup \{z \mapsto w_z \}$.
    Let us now give two lemmas establishing the properties of~$w_z$'s infix~$\cinfix$.
    
    We constructed the infix $\cinfix$ of $w_z$ in a way so that it prevents conflict-free overlaps with itself
    as shown by the following lemma.

    \vspace{-2mm}
    \begin{restatable}{lemma}{lemAlign}\label{cl:align}
        The word $\cinfix$ is $(r+1)|\alpha|$-aligned.
    \end{restatable}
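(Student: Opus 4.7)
The plan is to argue by contradiction: assume $\cinfix$ has a period $t$ with $1 \leq t \leq (7r{-}1)n$, where $n \defeq |\alpha| = |\beta|$. Note that $(r{+}1)|\alpha|$-alignment is equivalent to the absence of such a period via the standard correspondence between periods of $w$ and the possibility of writing $w$ as a prefix of $pw$ with $|p|=t$. I would split on whether $t$ is a multiple of $n$.

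When $t = kn$, since $\alpha \neq \beta$ (which follows from $u,v$ not being powers of a common word, as otherwise $u^{k+2}v^2 = u^2 v^{l+2}$ would force $u^k = v^l$), $\cinfix$ having period $kn$ forces the super-letter word $S = A^r B^r A^r B^r A^{2r} B^{2r}$ (of length $8r$, with $A = \alpha$, $B = \beta$ viewed as distinct symbols) to have period $k$. Direct inspection of the runs of $S$---both $A$-runs and $B$-runs share the profile $r,r,2r$---shows $S$ admits no period $k \in [1,8r)$: the unique $A^{2r}$-run starting at super-position $4r$ cannot be translated anywhere consistently with the $B$-runs.

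When $t$ is not a multiple of $n$: for $t < n$, apply \cref{lemma:fineAndWilf} to the factor $\alpha^{2r}$ of length $2rn$, which has periods $n$ and $t$; the condition $2rn \geq n + t - \gcdof{n}{t}$ is easily satisfied, so $\alpha^{2r}$ (and hence $\alpha$) has period $\gcdof{n}{t} < n$, contradicting primitivity of $\alpha$. For $n \leq t \leq (7r{-}1)n$, the period relation $\cinfix[i] = \cinfix[i+t]$, restricted to pairs where $i$ lies in an $\alpha$-block and $i+t$ in a $\beta$-block, becomes $\alpha[i \bmod n] = \beta[(i + (t \bmod n)) \bmod n]$. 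I would verify that the three candidate source blocks $[0,rn)$, $[2rn,3rn)$, $[4rn,6rn)$---all shifted into the target block $[6rn,8rn)$---yield, for each such $t$, at least one $i$-interval of length $\geq n$, so that $i \bmod n$ ranges over all residues in $[0,n)$. This forces $\beta$ to equal a cyclic rotation of $\alpha$. However, the construction forbids this: $\beta$ contains $v^{l+2}$ as a factor, and a Fine--Wilf argument (using that $u,v$ are not powers of a common word) excludes $v^{l+2}$ from being a factor of $\alpha\alpha$, hence from any cyclic rotation of $\alpha$.

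The principal technical obstacle is the range-covering argument in the non-multiple case: verifying that for every $t \in [n, (7r{-}1)n]$ not a multiple of $n$, at least one of the three candidate block-pair shifts yields an $i$-interval of length $\geq n$. This bookkeeping relies on $r \geq 2$, which ensures the candidate intervals (of order $rn$) comfortably exceed $n$ where needed. A secondary obstacle is the Fine--Wilf computation showing $v^{l+2}$ cannot be a factor of $\alpha\alpha$: any occurrence spanning the $u$-region of $\alpha$ would impose periods $|u|$ and $|v|$ on a sufficiently long word, forcing both $u$ and $v$ to be powers of their common prefix of length $\gcdof{|u|}{|v|}$.
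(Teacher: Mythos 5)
Your decomposition by the residue class of the putative period $t$ modulo $n \defeq |\alpha| = |\beta|$ is a genuinely different route from the paper's. The paper never splits on $t \bmod n$: it shows directly that \emph{any} overlap of size at least $(r{+}1)n$ contains a $B^2$-against-$B$ overlap of size exactly $n$ for some $B \in \{\alpha,\beta\}$, invokes \cref{lemma:primitiveAlignment} to conclude the overlap must respect super-letter boundaries, and then finishes with a $\beta$-count argument on the two super-letter sequences. Crucially, once the overlap is known to be super-letter-aligned, the paper needs \emph{only} $\alpha \neq \beta$ to compare the sequences, since any two concatenations of equal length over $\{\alpha,\beta\}$ that agree as strings must agree symbol-by-symbol. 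Your cases $t < n$ (Fine--Wilf on $\alpha^{2r}$) and $t \in n\nat$ (super-letter word $S$ has no period in $[1,8r)$, which I verified is correct) are both sound and arguably more elementary than the paper's $B^2$-trick.

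The gap is in your third case, $n \le t \le (7r{-}1)n$ with $n \nmid t$. The range-covering argument that some $\alpha$-block shifts by $t$ into $[6rn,8rn)$ with an $i$-interval of length $\ge n$ is fine, and it does force $\beta$ to be a cyclic rotation of $\alpha$. But then you are left needing the nontrivial fact that $\beta = u^2 v^{l+2}$ is \emph{not} a rotation of $\alpha = u^{k+2}v^2$, and the Fine--Wilf sketch you give for ``$v^{l+2}$ is not a factor of $\alpha\alpha$'' does not close. When the putative occurrence of $v^{l+2}$ (of length $|\alpha| - 2|u|$) starts at a position $p$ with $2|u| < p < (k{+}2)|u|$, it straddles the midpoint of $\alpha\alpha = u^{k+2}v^2\,u^{k+2}v^2$: its $u$-periodic overlaps are the \emph{two disjoint} intervals of lengths $(k{+}2)|u|-p$ and $p - 2|u|$, which sum to $k|u| = \lcmof{|u|}{|v|}$ but individually can each be well below the Fine--Wilf threshold $|u|+|v|-\gcdof{|u|}{|v|}$ (e.g.\ when $p$ is near the middle of its range). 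Fine--Wilf applies to a single contiguous factor carrying both periods, so the argument as written does not yield the desired contradiction in that subcase; one would further have to exploit the $v^2$ block in the middle, and there additional care is needed because $v$ itself need not be primitive, so the cyclic-shift of $v$ induced by $p \not\equiv 0 \pmod{|v|}$ is not automatically a contradiction.

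The cleaner fix that stays within your framework is to redirect the range-covering argument so that, for every $t \in [n, (7r{-}1)n]$ with $n \nmid t$, you find an $i$-interval of length $\ge n$ on which \emph{both} $i$ and $i+t$ lie in $\alpha$-blocks (or both in $\beta$-blocks). Combining the self-overlap of $[4rn,6rn)$ (covers $t < 2rn$), the pair $\bigl([2rn,3rn),[4rn,6rn)\bigr)$, the pair $\bigl([0,rn),[4rn,6rn)\bigr)$, and the $\beta$-pair $\bigl([rn,2rn),[6rn,8rn)\bigr)$ covers all of $[n,(7r{-}1)n]$ for $r \ge 2$. On such an interval, period $t$ gives $\alpha[j] = \alpha[(j + t) \bmod n]$ for all $j \in [0,n)$, so $\alpha$ is invariant under a nontrivial cyclic shift, hence has period $\gcdof{n}{t \bmod n} < n$ dividing $n$, contradicting primitivity. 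This entirely sidesteps the question of whether $\beta$ is a rotation of $\alpha$, which is the step where your proof currently breaks.
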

    \vspace{-2mm}

    The full proof of \cref{cl:align} can be found in
    \ifTR{}\cref{sec:proofsTwoSided}\else{}\cite{techrep}\fi, but the core
    of the argument lies in observing that in any overlap of $\cinfix$ with
    itself of size at least $(r+1)|\alpha|$,
    there is a factor $\alpha^2$ having an overlap with $\alpha$ of size $|\alpha|$, or,
    similarly for $\beta$. Therefore, one can apply \cref{lemma:primitiveAlignment}
    and limit overlaps that must be considered.

    The following lemma shows that long overlaps
    between two occurrences of $\cinfix$ are unavoidable
    when $\cinfix$ has a~sufficient length.
    The~$a^i$ in the lemma is used just to position the overlap within
    $\sigma(\haystack)$ and $\sigma(\needle)$.

    \vspace{-1mm}
    \begin{lemma}\label{cl:overlap}
        For each $0\leq i \leq |\sigma(\haystack)| {-} |\sigma(\needle)|$,
        every occurrence of $\cinfix$ in $\sigma(\haystack)$ has an
        overlap with some occurrence of~$\cinfix$ in $a^i \concat \sigma(\needle)$ of size at least $(r+1)|\alpha|$,
        where~$a$ is any symbol in~$\Sigma$.
    \end{lemma}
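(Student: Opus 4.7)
The plan is to track the positions of $\cinfix$-occurrences in both $\sigma(\haystack)$ and $a^i\sigma(\needle)$, bound the gaps between consecutive occurrences using the key inequality $r|\alpha| > M + |p| + |s|$, and pair each haystack occurrence with a close needle occurrence by pigeonhole.

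First I would decompose $\sigma(\haystack) = \sigma(\haystack_0)\,w_z\,\sigma(\haystack_1)\cdots w_z\,\sigma(\haystack_n)$ and likewise $\sigma(\needle) = \sigma(\needle_0)\,w_z\,\sigma(\needle_1)\cdots w_z\,\sigma(\needle_m)$. Since $w_z = p\,\cinfix\,s$ contains exactly one copy of $\cinfix$ at offset $|p|$, and every gap region $s\,\sigma(\haystack_k)\,p$ or $s\,\sigma(\needle_k)\,p$ has length at most $|s| + M + |p| < r|\alpha| < |\cinfix|$, we obtain $n$ canonical $\cinfix$-positions $h_1 < \cdots < h_n$ in $\sigma(\haystack)$ and $m$ canonical positions $i + e_1 < \cdots < i + e_m$ in $a^i\sigma(\needle)$ (where $e_k$ is the offset of the $k$-th $\cinfix$ inside $\sigma(\needle)$; any further occurrences arising from $a^i$ only help). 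Consecutive canonical gaps satisfy $h_{k+1} - h_k \leq |\cinfix| + |s| + M + |p| < |\cinfix| + r|\alpha| = 9\,r|\alpha|$, and the same bound holds on the needle side.

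Given a $\cinfix$-occurrence at position $h$ in $\sigma(\haystack)$, the interior case $i + e_1 \leq h \leq i + e_m$ is immediate: pigeonhole on the gap bound produces some canonical needle-position $i + e_j$ with $|h - (i + e_j)| \leq \lceil 9\,r|\alpha|/2 \rceil \leq 4.5\,r|\alpha|$, and hence an overlap of size at least $|\cinfix| - 4.5\,r|\alpha| = 3.5\,r|\alpha|$, which exceeds $(r+1)|\alpha|$ for every $r \geq 2$.

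The main obstacle I anticipate is the boundary case, where $h$ lies outside $[i + e_1, i + e_m]$ because of the shift $i$. Here I would combine the endpoint estimates $e_1 \leq |p| + M$ and $|\sigma(\needle)| - e_m - |\cinfix| \leq |s| + M$ with the corresponding bounds on $h_1$ and $h_n$, and exploit $0 \leq i \leq |\sigma(\haystack)| - |\sigma(\needle)|$ to show that any such boundary occurrence still lies within $|\cinfix| - (r+1)|\alpha| = (7r-1)|\alpha|$ of either $i + e_1$ or $i + e_m$. This boundary bookkeeping is the delicate step, and I expect it to be the main technical difficulty of the proof.
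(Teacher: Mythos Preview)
Your boundary case cannot be closed, because the lemma \emph{as literally stated} is false: the roles of $\sigma(\haystack)$ and $a^i\sigma(\needle)$ are swapped relative to what is actually needed and proved. Take, e.g., $\haystack = x z^n x$, $\needle = x z x$ with $\sigma(x)=b$ and $i=0$. Then $a^i\sigma(\needle)=b\,p\gamma s\,b$ contains a single $\gamma$ at position $1+|p|$, whereas the last canonical $\gamma$ in $\sigma(\haystack)=b(p\gamma s)^n b$ sits at position $1+(n-1)|w_z|+|p|$, which is unboundedly far away as $n$ grows. Your proposed endpoint bookkeeping (``any such boundary occurrence still lies within $(7r-1)|\alpha|$ of either $i+e_1$ or $i+e_m$'') therefore cannot succeed: the constraint $0\le i\le |\sigma(\haystack)|-|\sigma(\needle)|$ gives no control on the right when $i$ is small and $n\gg m$, and symmetrically on the left when $i$ is near its maximum. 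You correctly sensed that this step was ``delicate''; in fact it is impossible in this direction.

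What the surrounding argument actually uses (and what the paper's picture and calculation establish, despite the opening sentence) is the reverse direction: every occurrence of $\gamma$ in $a^i\sigma(\needle)$ has a large overlap with some occurrence of $\gamma$ in $\sigma(\haystack)$. For that direction your boundary case \emph{is} easy, since the first needle-$\gamma$ starts at position $\ge i\ge 0$ and the last one ends at position $\le i+|\sigma(\needle)|\le|\sigma(\haystack)|$; each needle-$\gamma$ therefore lies within the span of the haystack-$\gamma$'s up to a slack of at most $M+|p|$ or $M+|s|$, both $<r|\alpha|$, giving overlap $\ge 7r|\alpha|$.

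On method, the paper is more direct than your position-enumeration plus pigeonhole. It simply places the $\gamma$ under consideration in the worst configuration---centered over a single gap $s\,\sigma(\haystack_j)\,p$ of length $<r|\alpha|$ between two neighbouring $\gamma$'s on the other side---writes $|\gamma|=|o_1|+(\text{gap})+|o_2|$, and reads off $|o_1|+|o_2|\ge 7r|\alpha|$, hence $\max(|o_1|,|o_2|)\ge 3.5\,r|\alpha|\ge (r+1)|\alpha|$. Your interior case reaches the same $3.5\,r|\alpha|$ via the start-to-start distance $<9r|\alpha|$ halved, which is equivalent but with extra scaffolding; the paper's single worst-case picture replaces both your interior pigeonhole and the (correct-direction) boundary analysis in one stroke.
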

    \vspace{-3mm}
    \begin{proof}
        Let us assume an arbitrary occurrence of~$\cinfix$ in~$\sigma(\haystack)$.
        Since each $\haystack_i$ and $\haystack_{i+1}$ are separated by $z$
        (the same goes for $\needle_i$ and~$\needle_{i+1}$), it suffices to consider only 
        the pessimistic case, which is when an occurrence of~$\cinfix$
        in~$\needle$ matches the longest $\sigma(\haystack_i)$ with~$p$ and~$s$
        on both sides.
        The situation is schematically depicted below.
        \vspace{-2mm}
        \tikzstyle{wordBNode} = [draw, scale=0.8, inner sep=1mm, text height=3mm, minimum height=6mm, baseline]
        \begin{center}
        \begin{tikzpicture}
            \node[wordBNode, anchor=west, minimum width=40mm] (gamma_1_2) at (0,0) {$\cinfix$};
            \node[wordBNode, anchor=west, minimum width=3mm] (suffix_0)  at  (gamma_1_2.east) {$s$};
            \node[wordBNode, anchor=west, minimum width=12mm] (literal_0) at  (suffix_0.east) {$ \sigma(\haystack_i) $};
            \node[wordBNode, anchor=west, minimum width=5mm] (prefix_0) at  (literal_0.east) {$p$};
            \node[wordBNode, anchor=west, minimum width=40mm] (gamma_2_0) at (prefix_0.east) {$\cinfix$};

            \node[wordBNode, anchor=west, minimum width=5mm]  (r_prefix_0)  at (2.0, -0.8) {$p$};
            \node[wordBNode, anchor=west, minimum width=40mm] (r_gamma_1_0) at (r_prefix_0.east) {$\cinfix$};
            \node[wordBNode, anchor=west, minimum width=3mm]  (r_suffix_0)  at (r_gamma_1_0.east) {$s$};

            \node[gray] at ([xshift=4mm]prefix_0.east) {$o_2$};
            \draw[dashed, gray] ([yshift=3mm]r_gamma_1_0.north east) -- ([yshift=8mm]r_gamma_1_0.north east);
            \node[gray] at ([xshift=-4mm]suffix_0.west) {$o_1$};
            \draw[dashed, gray] ([yshift=3mm]r_gamma_1_0.north west) -- ([yshift=8mm]r_gamma_1_0.north west);
        \end{tikzpicture}
        \end{center}
        \vspace{-3mm}
        In the figure, $o_1$ and $o_2$ denote the overlaps on both sides.
        We show that the size of at least one overlap $o_1$
        and $o_2$ is greater than $(r + 1) |\alpha|$ by expressing the length~$|\cinfix|$ using
        its definition and the schematic above:
        \vspace{-3mm}
        \begin{align}
          &&r(4|\beta| + 4|\alpha|) &= |o_1| + |s| + |\sigma(\haystack_i)| + |p| + |o_2| &&\reasonof{def.\ of $\cinfix$}&&&&&&&&&&&&&&&&&&&&&&&&&&&&& \nonumber\\
          &\Rightarrow\hspace*{-10mm}&7r|\alpha| + r|\alpha| &\leq |o_1| + r|\alpha| + |o_2| &&\reasonof{since $|\alpha| = |\beta|$ and def.\ of $r$} \\
          &\Leftrightarrow\hspace*{-10mm}&7r|\alpha| &\leq |o_1| + |o_2|\nonumber\\[-8mm]\nonumber
        \end{align}
        %
        We have that $|o_1| + |o_2| \geq 7r|\alpha|$, and, thus,
        at least one of~$|o_1|$ and~$|o_2|$ is bigger than $3r|\alpha|$.
        Since $r \ge 2$, we have $3r|\alpha| \geq (r+1)|\alpha|$
        and hence $\cinfix$ has an overlap of the required size.
    \end{proof}

    \vspace{-2mm}

    It remains to show that $\sigma$ is a model of $\varphi$.
    For the sake of contradiction, 
    assume that $\sigma$ is not a model, meaning that $\sigma(\needle)$ is a factor of $\sigma(\haystack)$. From \cref{cl:align,cl:overlap} we have that each occurrence of $\cinfix$ in $\sigma(\needle)$ is 
    perfectly aligned with some $\cinfix$ in $\sigma(\haystack)$, which also means that 
    $w_z$'s are perfectly aligned. Furthermore, we have that $w_z$'s in $\sigma(\needle)$
    are aligned with consecutive $w_z$'s in $\sigma(\haystack)$, i.e., any
    $\sigma(z_{\needle, i})$ is aligned with some $\sigma(z_{\haystack, i + k})$ for some
    $0 \le k \le n - m$. If this were not the case and we had
    $\sigma(z_{\needle, 1})$ overlapping with $\sigma(z_{\haystack, 1 + k})$ while
    there were some $\sigma(z_{\needle, i})$ matching with $\sigma(z_{\haystack, i + k + l})$
    for $l \ge 1$, there would have to be some $\sigma(\needle_{j})$ with $ 1 \le j < i$
    with $|\sigma(\needle_{j})| > |\sigma(z)|$, which is a~contradiction
    with~$w_z$ being longer than any $|\sigma(\needle_j)|$ by construction.
    Hence, for $\sigma'' = \sigma \variant \{ z \mapsto \sep \}$,
    $\sigma''(\needle)$ is also a~factor of $\sigma''(\haystack)$, which is
    a~contradiction to $\sigma'$ being a~model of $\varphi_\sep$. 
    Therefore, \cref{lemma:twoSidedRemoval} holds.
    %
    %


\vspace{-2.0mm}
\section{$\Gamma$-Expansion and Prefix/Suffix Trees} \label{sec:tools}
\vspace{-2.0mm}

At this point, we are left with a~normalized $\notcontains(\needle, \haystack)
\land \langconstr$ constraint where all variables in~$\needle$ are flat.
If all variables in~$\haystack$ are also flat, we can use
\cref{lemma:liaDecidability} and obtain the result.
In the rest of the paper, we will deal with the case when $\haystack$ contains
at least one non-flat variable.
Before we give the proof in \cref{sec:singleSided}, in this section, we
introduce two concepts that will be used later: $\Gamma$-expansion on non-flat
variables and prefix/suffix trees.


\vspace{-2.0mm}
\subsection{$\Gamma$-Expansions on Non-Flat Variables}
\vspace{-2.0mm}


%
Intuitively, non-flat languages have words with a rich internal structure
compared to flat languages. To illustrate, let $x$ be a flat variable with the
language $\lang_x = \alpha^*$ for some word~$\alpha$ and let $z$ be
a~non-flat variable with the language~$\lang_z$.
Furthermore, let $w_x \in \lang_x$ and $w_z \in \lang_z$ be two sufficiently long words.
We inspect the case when $w_x$ and $w_z$ share some long common factor $u$.
Since $w_x \in \alpha^*$, we have $u = s \alpha^k p$ for some $s \in \suffixesOf{\alpha}$,
$p \in \prefixesOf{\alpha}$, and $k \in \naturals$. As $z$ is non-flat,
the run of~$\aut_z$ corresponding to the word~$w_z$ passes through states at
which one can make a~choice of which transition to take next. Since $u$
is long, we have to make a lot of ``right'' choices during the run of $\aut_z$
in order for achieve the common factor $u$, highlighting the difference
between the complexity of $\lang_x$ and $\lang_z$, and suggesting that
there is a way to pick $w_z$ to prevent long overlaps with flat variables
occurring in $\needle$.

Guided by this intuition, we introduce a tool called \emph{$\Gamma_z$-expansion}
of a non-flat variable~$z$. Given a prefix $p \in \prefixesOf{\lang_z}$ and a
suffix $s \in \suffixesOf{\lang_z}$, the $\Gamma_z$-expansion of $(p, s)$
is the word $\Gamma_z(p, s) = p w s \in \lang_z$ for
a particular $w$ such that only a prefix or a suffix of a~bounded length can
have long overlaps with (sufficiently long) words that belong to a flat language.
This tool will play an important
role in our proofs. Loosely speaking, if we start with a model $\sigma$ and we try to find an
alternative model $\sigma' = \sigma \variant \{ z \mapsto \Gamma_z(p, s)\}$,
then the possible reasons why $\sigma'$ fails to be a model are narrowed down
to the choice of $p$ and $s$.

\newcommand{\primBase}[0]{\mathrm{Base}}

In order to define the $\Gamma_z$-expansion, we first need some auxiliary definitions.
First, as a resulting of our normalization, the map
$\primBase \colon \flatVars \rightarrow 2^{\alphabet^*}$
maps any flat variable $x$ to a singleton containing the primitive word~$\alpha$ that
forms the basis of $\lang_x$, i.e., $\primBase(x) \triangleq \{ \alpha \}$
such that $\lang_x = (\alpha^k)^*$ for some $k \in \naturals$. We lift
the definition of $\primBase$ to a set $X$ of flat variables as $\primBase(X) \triangleq \bigcup_{x \in X} \primBase(x)$,
and to a string $s \in (\alphabet \cup \flatVars)^*$ as $\primBase(s) \triangleq \primBase(\varsOf{s} \cap \flatVars)$.

Second, given a variable $z \in \vars$ with $\aut_z = (Q, \alphabet, \Delta, I, F)$, we
define the function $\con_{z}\colon Q \times Q \rightarrow \alphabet^*$ to give the lexicographically
smallest word $\con_{z}(q, s) \triangleq w$ such that $q \move{w}_\aut s$.
Having auxiliary definitions in place, we are ready to define $\Gamma$-expansion
in the context of the formula
$\varphi = \notcontains(\needle, \haystack) \land \langconstr$ with $\needle \in (\flatVars \cup \alphabet)^*$.

\begin{definition}[$\Gamma$-expansion]
    Let $z \in \varsOf{\haystack}$ be a decomposed non-flat variable and $\aut_z = (Q, \alphabet, \Delta, I, F)$ be a~DFA s.t.~$\lang(\aut_z) = \lang_z$.
    Moreover, let $q_{u|v} \in Q$
    be a~state such that $q_{u|v} \move{u} q_{u|v}$ and $q_{u|v} \move{v} q_{u|v}$ with $u, v \not \in w^*$ for any word~$w$.
    Furthermore, let $p \in \prefixesOf{\lang_z}$ be some prefix and $q_p$ be a state such that $q_0 \move{p} q_p$ for some $q_0 \in I$. Similarly, let $s \in \suffixesOf{\lang_z}$
    be a suffix and $q_s$ be a state
    such that $q_s \move{s} q_f$ for some $q_f \in F$.

    Let $\gamma_z \triangleq u^{2+k} v^2$ for a minimal $k \in \naturals$ such that $\gamma_z > |\alpha|$ for any $\alpha \in \primBase(\needle)$.
    Given $K \in \naturals$, we define the \emph{$\Gamma^K_z$-expansion} of $(p, s)$ to be the word
    $\Gamma^K_z(p, s) \triangleq p \concat \con(q_p, q_{u|v}) \concat \gamma_z^K \concat \con(q_{u|v}, q_s) \concat s$.
\end{definition}

Intuitively, $\Gamma_z$-expansion takes a prefix $p$ and finds the~shortest word $\con(q_p, q_{u|v})$
that takes the automaton to the state $q_{u|v}$ in which we have the freedom to read the
words $u$ and $v$ in any suitable sequence. We loop through $q_{u|v}$ in a specific manner so that the resulting
factor $\gamma_z$ is primitive thanks to \cref{cor:primWordConstruction}. The situation with the~suffix
is symmetric. Note that in the following section, we use prefix/suffix variants
of the $\Gamma$-expansion defined as 
$\gammaPref{z}^K \triangleq  p \concat \con(q_p, q_{u|v}) \concat \gamma_z^K$ and
$\gammaSuf{z}^K \triangleq \gamma_z^K \concat \con(q_{u|v}, q_s) \concat s$.

The following lemma shows that $\Gamma_z$-expansion can be seen almost as introducing a~fresh
symbol~$\sep$ for the infix~$w$ connecting $p$ and $s$ into a word $p w s \in \lang_z$.
Intuitively, if we have an~assignment $\sigma$ that assigns sufficiently long words to all flat variables,
then we can find $K \in \naturals$
such that any large overlap between $\sigma(\needle)$ and $\sigma(z) = \Gamma^K_z(p, s)$ contains a~conflict.
We use $\pMaxLit$ to be the length of the
longest literal in~$\varphi$, $\pMaxAut$ to be the number of states of the largest
DFA specifying the language of some variable $x \in \vars$, and
$\pMaxPrim \triangleq \max \{|\alpha| : \alpha \in
\primBase(\needle) \cup \{ \gamma_z \}\}$.


\begin{restatable}{lemma}{lemGammaUseful} \label{lemma:gammaUseful}
    Let $z \in \vars$ be a decomposed non-flat variable,
    $p \in \prefixesOf{\lang_z}$, and $s \in \suffixesOf{\lang_z}$.
    Further, let
    $K \in \nat$ be such that $K|\gamma_z| \ge 4\pMaxPrim + 2\pMaxLit$ and
    let $\sigma$ be an assignment with
    \begin{inparaenum}[(i)]
        \item $|\sigma(x)| \ge 2\pMaxPrim$ for any flat variable $x$ and
        \item $\sigma(z) = \Gamma^K_{z}(p, s)$.
    \end{inparaenum}
    Every overlap between $\sigma(z)$ and $\sigma(\needle)$ of the size at
    least~$\max(|p|, |s|) + \pMaxAut + 2\pMaxLit + 2\pMaxPrim$ contains a conflict.
\end{restatable}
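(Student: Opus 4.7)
The plan is to argue by contradiction: assume there is a conflict-free overlap of size $L \geq \max(|p|, |s|) + \pMaxAut + 2\pMaxLit + 2\pMaxPrim$ between $\sigma(z)$ and $\sigma(\needle)$. The proof will rest on three ingredients: by \cref{cor:primWordConstruction}, the word $\gamma_z = u^{2+k} v^2$ is primitive, and by construction $|\gamma_z| > |\alpha|$ for every primitive word $\alpha$ underlying a flat variable of $\needle$; by assumption~(i), $|\sigma(x)| \geq 2\pMaxPrim$ for every flat variable~$x$; and by \cref{cor:infixFineWilf}, conflict-free overlaps of repetitions of distinct primitive words cannot be long.

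First, I would work with the decomposition $\sigma(z) = p \concat \con(q_p, q_{u|v}) \concat \gamma_z^K \concat \con(q_{u|v}, q_s) \concat s$. The two connector factors each have length at most $\pMaxAut$, since they can be realized by shortest paths between two states of the DFA for $\lang_z$. Because $L > |p| + \pMaxAut$ and $L > |s| + \pMaxAut$, the contiguous overlap cannot lie entirely inside either boundary region of $\sigma(z)$, so a short length calculation shows that the portion $w'$ of the overlap sitting in $\gamma_z^K$ has length at least $2\pMaxLit + 2\pMaxPrim$.

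Next, I would examine $w'$ on the $\sigma(\needle)$ side. Since $\needle$ contains only flat variables and symbols, $\sigma(\needle)$ decomposes into literal blocks of length at most $\pMaxLit$ alternating with variable blocks $\sigma(x_j) \in \alpha_j^*$ of length at least $2\pMaxPrim$. A case analysis on how $w'$ partitions across these blocks yields a contiguous sub-factor $u$ of $w'$ of length at least $2\pMaxPrim$ lying entirely inside a single variable block $\sigma(x_j)$. Thus $u$ is simultaneously a factor of $\gamma_z^K$ (via $\sigma(z)$) and of $\alpha_j^{n_j}$ (via $\sigma(\needle)$). Because $|\gamma_z| > |\alpha_j|$ and both words are primitive, applying \cref{cor:infixFineWilf} to this overlap of size $|u| \geq 2\pMaxPrim \geq |\gamma_z| + |\alpha_j| - \gcdof{|\gamma_z|}{|\alpha_j|}$ forces a conflict in $u$, contradicting the no-conflict assumption on the full overlap.

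I expect the main obstacle to be the case analysis in the third step: one must carefully track how $w'$ may touch several consecutive literal and variable blocks of $\sigma(\needle)$ (e.g.\ partial-variable, literal, partial-variable configurations), and verify that the length bound $2\pMaxLit + 2\pMaxPrim$, combined with the minimum variable-block length $2\pMaxPrim$, always leaves one single variable block contributing a sub-factor of length at least $2\pMaxPrim$ to~$w'$. The other steps are essentially straightforward bookkeeping once this geometric fact is established.
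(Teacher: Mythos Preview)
Your proposal follows the paper's argument: isolate the portion $w'$ of the overlap lying in $\gamma_z^K$, show it meets some flat block $\sigma(x)$ on a segment of length at least $2\pMaxPrim$, and invoke \cref{cor:infixFineWilf}. The paper's two-case split (a literal sits fully inside the $\gamma_z^K$-aligned region, or it does not) is just a different packaging of the same block-by-block analysis you describe.

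One refinement your sketch needs: the bound $|w'|\ge 2\pMaxLit+2\pMaxPrim$ by itself is not quite enough for the configurations you single out as the obstacle. In a partial-$x$/literal/partial-$y$ (or partial-$x$/partial-$y$) configuration, splitting $|w'|$ only gives each variable piece roughly $\pMaxLit/2+\pMaxPrim$, which may fall short of $2\pMaxPrim$ when $\pMaxLit$ is small. The paper avoids this by using the \emph{full} length $K|\gamma_z|\ge 4\pMaxPrim+2\pMaxLit$ in its Case~1. Equivalently, you should add the dichotomy: either the overlap contains all of $\gamma_z^K$, in which case $|w'|=K|\gamma_z|\ge 4\pMaxPrim+2\pMaxLit$ and the split yields a piece of size $\ge 2\pMaxPrim$; or the overlap is a proper prefix/suffix of $\sigma(z)$, in which case $w'$ reaches one end of $\sigma(\needle)$ and therefore contains an entire boundary variable block of length $\ge 2\pMaxPrim$. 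With this extra observation your case analysis goes through.
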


\begin{proof}[Proof sketch.]
    It suffices to observe that if the overlap of size at least $N$ necessarily contains
    an overlap between $\gamma^K_z$ and $\sigma(x) = \alpha^l$ for some flat variable
    $x$ with $\alpha \in \primBase(x)$. The existence of a conflict follows from \cref{cor:infixFineWilf}.
    The full proof can be found in \ifTR{}\cref{sec:proofsTools}\else{}\cite{techrep}\fi.
\end{proof}

\smallskip

Next, we show that $\Gamma_z$-expansion can be used to facilitate modularity in
our proofs, allowing us to search for a suitable prefix~$p$ and a suitable
suffix~$s$ separately. Searching for~$p$ and~$s$ separately requires subtle modifications
to $\varphi$, resulting in us searching for $p$ and $s$ in the context of the modified
formulae $\varphi_{\pref}$ and~$\varphi_{\suf}$, respectively.
If we find models of $\varphi_{\pref}$ and~$\varphi_{\suf}$ of a particular
form, we compose them into a~model of~$\varphi$. 

Let $z \in \varsOf{\haystack}$ be a~non-flat variable, and let $z_\pref$
and $z_\suf$ be two fresh variables with their languages restricted to
$z_\pref \in \prefixesOf{\mathcal{L}_z}$ and $z_\suf \in \suffixesOf{\mathcal{L}_z}$.
Let
$\varphi_\mathrm{Pref}$ and $\varphi_\mathrm{Suf}$ be formulae defined as
$\varphi_\mathrm{Pref} \triangleq \varphi[z/z_\pref \concat \sep]$ and $\varphi_\mathrm{Suf}  \triangleq \varphi[z/\sep\concat z_\suf]$
where $\sep$ is a fresh alphabet symbol.
Further,
let $\sigma^\pref \models \varphi_\pref$ and $\sigma^\suf \models \varphi_\suf$
be two models such that:
\begin{enumerate}
    \item $\sigma^\pref$ and $\sigma^\suf$ agree on the values of variables different than $z_\pref$ and $z_\suf$,
    \item $|\sigma^\pref(x)| > 2 \pMaxPrim \land |\sigma^\suf(x)| > 2 \pMaxPrim$ for any flat variable $x \in \flatVars$,
    \item $\sigma^\pref(z_\pref) = p \gamma_z^{K}$ and $\sigma^\suf(z_\suf) = \gamma_z^{L}s$ such that
        $n |\gamma_z| \ge 4\pMaxPrim + 2\pMaxLit$ for $n \in \{K, L\}$.
\end{enumerate}

\begin{restatable}{lemma}{lemSewing} \label{lemma:sewing}
  The assignment
    $\sigma^\pref \variant \{ z \mapsto p \gamma_z^{K} s\}$ is a~model of $\varphi$
    where $K = \min(K, L)$.
\end{restatable}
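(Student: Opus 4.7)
The plan is to verify the two obligations of $\sigma \models \varphi$ separately: first confirm $\sigma(z) \in \lang_z$, and then establish that $\sigma(\needle)$ is not a~factor of $\sigma(\haystack)$. Constraints on variables other than~$z$ are automatic since $\sigma$, $\sigma^\pref$, and $\sigma^\suf$ agree on them and each was a~model of its respective formula. For membership, I would use the decomposed form of~$\aut_z$ and the loop state~$q_{u|v}$. Since $\sigma^\pref(z_\pref) = p\gamma_z^{K}$ is a~prefix of some word in $\lang_z$, the word $p\gamma_z^{K}$ labels a~path ending at~$q_{u|v}$; symmetrically, $\gamma_z^{L}s$ labels a~path from~$q_{u|v}$ to a~final state. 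Splicing at~$q_{u|v}$ and truncating the $\gamma_z$-loop to $\min(K,L)$ iterations yields an accepting run for $\sigma(z) = p\gamma_z^{\min(K,L)}s$.

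For the $\notcontains$ part, suppose towards a~contradiction that $\sigma(\haystack) = \pi\sigma(\needle)\chi$. Because $\needle$ contains only flat variables after the removal of two-sided non-flat variables, $\sigma(\needle) = \sigma^\pref(\needle[z/z_\pref\sep]) = \sigma^\suf(\needle[z/\sep z_\suf])$. Each $z$-occurrence in $\haystack$ contributes a~$z$-region $p\gamma_z^{K}s$ to~$\sigma(\haystack)$. Applying \cref{lemma:gammaUseful} with assignment~$\sigma$, any overlap between a~$z$-region and the matched window exceeding $\max(|p|,|s|)+\pMaxAut+2\pMaxLit+2\pMaxPrim$ must contain a~conflict; since the window is conflict-free, the intersection of the window with each $z$-region is at most this bound. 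As $|\sigma(z)|$ was arranged to vastly exceed the bound, no $z$-region is fully contained in the window, so the matched window straddles at most one literal piece~$\haystack_k$, touching the end of some~$z_k$ on the left (by a~bounded suffix) and the start of $z_{k+1}$ on the right (by a~bounded prefix).

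I would then split into two transfer cases. In the \emph{prefix case}, the matched portions of all touched $z$-regions lie inside $p\gamma_z^{K}$, i.e., the $s$ tail of every touched $z$-region sits outside the window. Then $\sigma(\haystack)$ and $\sigma^\pref(\haystack[z/z_\pref\sep])$ coincide on every position inside the window --- they differ only inside $s$ tails --- so $\sigma^\pref(\needle[z/z_\pref\sep])$ would be a~factor of $\sigma^\pref(\haystack[z/z_\pref\sep])$, contradicting $\sigma^\pref \models \varphi_\pref$. Symmetrically, if the matched portion avoids every $p$ head, I transfer to~$\varphi_\suf$ for a~contradiction with~$\sigma^\suf$.

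The main obstacle, and the step requiring the most care, is ruling out the ``mixed'' configuration where the left-touched $z$-region is entered deeply enough to cover its $s$ tail, while the right-touched $z$-region's $p$ head is simultaneously inside the window. To exclude it, I would observe that in this configuration the matched window, viewed as a~substring of $\sigma(\needle)$, would have to contain two long $\gamma_z^{K}$-fragments (one coming from $z_k$ just before its $s$ tail, one from $z_{k+1}$ just after its $p$ head) separated by a~fixed gap of length $|s|+|\sigma(\haystack_k)|+|p|$. Because $\needle$ is built only of literals and flat variables whose primitive bases are strictly shorter than $|\gamma_z|$ (by the definition of~$\gamma_z$), applying \cref{cor:infixFineWilf} to either $\gamma_z^{K}$-fragment against the surrounding flat/literal content forces a~conflict in~$\sigma(\needle)$ at that position, contradicting the putative match; equivalently, \cref{lemma:gammaUseful} applied in this configuration directly produces a~long conflict-free overlap between~$\sigma(z)$ and~$\sigma(\needle)$ that the lemma forbids. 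Hence only the prefix or suffix case remains, and \cref{lemma:sewing} follows.
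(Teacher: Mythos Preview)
Your route differs substantially from the paper's. The paper's proof is four sentences: assume $\sigma\not\models\varphi$; since $\sigma$, $\sigma^\pref$, $\sigma^\suf$ agree off~$z$, some occurrence of $\sigma(z)$ in $\sigma(\haystack)$ has a conflict-free overlap with $\sigma(\needle)$; because $\sigma^\pref\models\varphi_\pref$ and $\sigma^\suf\models\varphi_\suf$, this overlap must contain the \emph{entire} $\gamma_z^K$; and then, since $K|\gamma_z|\ge 4\pMaxPrim+2\pMaxLit$ and every flat variable in~$\needle$ has value of length $>2\pMaxPrim$, that $\gamma_z^K$ overlaps some $\sigma(x)=\alpha^\ell$ by at least $2\pMaxPrim$, whence \cref{cor:infixFineWilf} produces the contradiction. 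The paper never invokes \cref{lemma:gammaUseful} here; it argues the dichotomy ``transferable or contains the full $\gamma_z^K$'' directly and applies Fine--Wilf once.

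Your detour---first bounding every $z$-overlap via \cref{lemma:gammaUseful}, then a prefix/suffix/mixed case split---creates a genuine gap in the mixed case. Having capped each overlap at $N=\max(|p|,|s|)+\pMaxAut+2\pMaxLit+2\pMaxPrim$, you then assert the window ``would have to contain two long $\gamma_z^{K}$-fragments''. But the $\gamma_z$-portion of the left overlap has length at most $N-|s|$ and of the right at most $N-|p|$; nothing forces either to reach the Fine--Wilf threshold $|\gamma_z|+|\alpha|$ (for instance, the left overlap may lie almost entirely inside~$s$, leaving only a few characters of $\gamma_z^K$). So \cref{cor:infixFineWilf} need not fire. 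Your fallback ``equivalently, \cref{lemma:gammaUseful}\ldots directly produces a long conflict-free overlap\ldots that the lemma forbids'' is circular: you have just used that lemma to push every overlap \emph{below}~$N$, and now want it to certify an overlap \emph{above}~$N$. (A minor point: \cref{lemma:gammaUseful} is stated for $\sigma(z)=\Gamma_z^K(p,s)$ with the $\con$-words present, not for the bare $p\gamma_z^K s$ you have; the adaptation is easy but should be made explicit.)

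The cleaner line is the paper's: do not bound the overlaps first. Show that if the occurrence of $\sigma(\needle)$ in $\sigma(\haystack)$ covers no full $\gamma_z^K$ of any $z$-block, then it is a factor of some $\sigma(\haystack_i)\,p\gamma_z^K$ or of some $\gamma_z^K s\,\sigma(\haystack_i)$, contradicting $\sigma^\pref$ or $\sigma^\suf$ respectively; and if it does cover a full $\gamma_z^K$, apply \cref{cor:infixFineWilf} once to that block. Your membership verification for $\sigma(z)\in\lang_z$ is a worthwhile addition that the paper leaves implicit.
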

\begin{proof}[Proof sketch.]
    The idea behind the proof is that if $\sigma \not \models \varphi$, then
    $\sigma(z)$ would need to have a large conflict-free overlap with $\sigma(x)$
    for some flat variable $x \in \vars$. Applying \cref{cor:infixFineWilf}, we
    reach a contradiction.
    The full proof of \cref{lemma:sewing} can be found in
    \ifTR{}\cref{sec:proofsTools}\else{}\cite{techrep}\fi.
\end{proof}

\vspace{-4.0mm}
\subsection{Prefix (Suffix) Enumeration through Prefix (Suffix) Trees}
\vspace{-0.0mm}

Having defined $\Gamma^K$-expansion that acts similarly to inserting a fresh
symbol~$\sep$ between a~chosen $p \in \prefixesOf{\lang_z}$ and a suffix $s
\in \suffixesOf{\lang_z}$ of a non-flat variable $z \in \varsOf{\haystack}$, we can start enumerating prefixes $p \in \prefixesOf{\lang_z}$ (or
suffixes) up to a~certain bound, while searching for a~model. We introduce
the concept of prefix (suffix) trees that play a major role in our proofs.
Below, we give only the definition of a~prefix tree; a~suffix tree is
defined symmetrically.

\begin{definition}[Choice state]
    Let $\aut = (Q, \Delta, I, F)$ be a~DFA. We say that a state $q \in Q$ is a \emph{choice state}
    if $\big| \{ (q, a, r) \in \Delta : a\in \Sigma, r \in Q\} \big| > 1$. We write $C(\aut)$ to denote the set
    of all choice states of~$\aut$.
\end{definition}

\begin{definition}[Prefix tree]
    Let $z \in \vars$ be a variable with its language $\lang_z$ given by a~DFA $\aut_z = (Q_z, \Delta_z, \{q_0\}, F_z)$.
    We define $z$'s \emph{prefix tree} $T_z = (V_z, E_z, r_z, \stateLabelingFn_z,
    \edgeLabelingFn_z)$ as an (infinite finitely-branching) tree with
    vertices~$V_z$ rooted in $r_z \in V_z$ such that
    \begin{itemize}
        \item $\stateLabelingFn_z\colon V_z \rightarrow Q_z$ is a function that
            labels non-root vertices of $T_z$ with $\aut_z$'s choice
            states, i.e., $\stateLabelingFn_z(v) \in C(\aut_z)$
            for any $v \neq r_z$ and $\stateLabelOf{r_z} = q_0$,
        \item $E_z \subseteq V_z \times \Sigma^+ \times V_z$ is a~set of
          labelled edges such that
          $(v, a_1\ldots a_n, v') \in E_z$ iff there is a~run
          $\stateLabelingFn_z(v) \tran{a_1} q_1 \tran{a_2} \cdots \tran{a_n}
          \stateLabelingFn_z(v')$ in~$\aut_z$ where for all $0 < i < n$ it
          holds that $q_i \notin C(\aut)$.
        %
        \item $\edgeLabelingFn_z\colon V_z \times V_z \rightharpoonup \Sigma^*$
          is a function that maps any two vertices connected by an edge to the
          label on the edge, i.e., $\edgeLabelingFn_z(v, v') = w$ iff there
          exists an edge $(v,w,v') \in E_z$ and is undefined otherwise.
          %
    \end{itemize}
\end{definition}

Intuitively, vertices of the tree are labeled by $\mathcal{A}_z$'s choice
states $C(\aut_z)$, i.e., states in which we can choose between
multiple outgoing transitions along different alphabet symbols.
Vertices~$s$ and~$s'$
are connected by an edge in~$T_z$ if 
$\stateLabelOf{s'}$ is reachable from $\stateLabelOf{s}$ without passing through any choice
state.

A path $\pi$ in $T_z$ is a sequence of vertices $\pi = s_0 \dots s_n$ where 
$(s_i, w_i, s_{i+1}) \in E_z$ for any $0 \le i < n$. We lift the definition of $\edgeLabelingFn$
to paths as $\edgeLabelingFn(s_0 \ldots s_n) \triangleq \edgeLabelingFn\big( (s_0, s_1) \big)\concat \cdots \concat \edgeLabelingFn\big( (s_{n-1}, s_n) \big)$.

\begin{definition}[Dead-end vertex of a prefix tree]
    Let $\varphi = \notcontains(\needle, \haystack) \land \langconstr$
    and $T_z = (V, E, v_0, \stateLabelingFn, \edgeLabelingFn)$ be the prefix tree for~$z \in \vars$,
    and let $\sigma\colon (\vars \setminus \{z\}) \rightarrow \Sigma^*$ be a~partial
    assignment. 
    A vertex $v_n \in V$ is called a \emph{dead end} in~$T_z$ w.r.t.\ $\sigma$ if $\sigma' \not \models
    \varphi[z/z\sep]$ where
    $\sigma' \defeq \sigma \variant \{ z \mapsto \edgeLabelingFn(v_0 \ldots v_n)\}$ for $v_0 \ldots v_n$ being the (single) path between~$v_0$ and~$v_n$ in~$T_z$.
\end{definition}

Intuitively, dead-end vertices (and all vertices that are below them in the
prefix tree) are not interesting for obtaining a~$\notcontains$ model.
Consider, e.g., $\varphi \triangleq \notcontains(\texttt{ab}x, xz) \land \langconstr$
with $\lang_x = (\texttt{ab})^+$ and $\lang_z = (\letter{a} \{\letter{b}, \letter{c}\} \letter{c})^*$.
We have $\varphi[z/z\sep] = \notcontains(\needle', \haystack') = \notcontains(\letter{ab}x, xz\sep)$
and, thus, the vertex $v \in V_z$ corresponding to the prefix $\letter{abca}$ is a dead end in~$T_z$ w.r.t.
$\sigma = \{x \mapsto \letter{ab} \}$ since
$\sigma'(\needle') = \letter{abab}$ is a~factor of
$\sigma'(\haystack') = \letter{ababca}\sep$.

\begin{definition}[$H$-reaching path]
    Let $\pi = v_0 \dots v_n$ be a path in a prefix tree $T_z = (V, E, v_0,
    \stateLabelingFn, \edgeLabelingFn)$ and $H \in \nat$.
    We say that $\pi$ 
    is \emph{$H$-reaching} if $|\edgeLabelOf{v_0 \ldots v_n}| \ge H \ge |\edgeLabelOf{v_0 \ldots v_{n-1}}|$.
\end{definition}

In our proof, we explore all prefixes of words in a language up
to a certain bound~$H$.
As we have a~prefix tree with edges labelled with
words of (possibly) different lengths, stating that we have explored all
prefixes of the length precisely~$H$ is problematic.
Hence, the concept of
$H$-reaching paths is a relaxation allowing paths (prefixes) to slightly
vary in length.


\vspace{-2.0mm}
\section{Underapproximating Non-Flat Variables} \label{sec:singleSided}
\vspace{-2.0mm}

In this section, we give the main lemma allowing to underapproximate 
the language of non-flat variables with a~flat language.
Throughout this section we use three constants $\boundFlat, \boundAut, \boundGamma \in \naturals$ with the following semantics:
\begin{itemize}
    \item The constant $\boundAut$ is the length of prefixes (suffixes) of non-flat variables
        that we will enumerate in our proofs,
        searching a~model that is shorter w.r.t. some non-flat variable.
    \item The constant $\boundFlat$ is the minimal size of words assigned to flat variables occurring in $\needle$.
    \item $\boundGamma$ is used as the value of the parameter $K$ in every application of $\gammaPref{z}^K$ or 
        $\gammaSuf{z}^K$.
\end{itemize}

First, let us define parameters of $\notcontains(\needle, \haystack)$ that we use to define the above bounds.
Let $\pMaxLit$ be the length of the longest string literal in $\needle$ and $\haystack$, let $\pMaxAut$
be the largest number of states of a~DFA associated with some variable. Furthermore, let $\pMaxPrim$
be the length of the longest word in the set $W_\alpha(\needle) \cup W_\gamma$
where $W_\gamma$ is the set of the primitive words~$\gamma_z$ used to define the $\Gamma_z$-expansion
for every non-flat variable $z \in \varsOf{\haystack}$.

Since $\boundFlat$ and $\boundGamma$ depend on the value of $\boundAut$, we start by fixing
$\boundAut \triangleq 2 \pMaxPrim \pMaxAut + \pMaxLit \label{eq:boundAutDef}$.
Intuitively, for any non-flat variable $z \in \vars$, we set up $\boundAut$ in
a way so that if we consider all prefixes in $T_z$ up to the length $\pMaxPrim \pMaxAut$,
then $T_z$ will contain paths through any state $q \in Q_z$ since~$\aut_z$ is
a~single SCC.
After extending these paths up to the length~$\boundAut$, we can guarantee that~$T_z$
will contain all words read from any state $q$ of the length
at least~$\pMaxPrim$. Considering all possible words of the length~$|\beta|$ for some $\beta \in \primBase(\needle)$
readable from a state 
will be crucial later, as we will show that there can be only a~few such words
if we fail to find an alternative model $\sigma' \triangleq \sigma \variant \{z \mapsto w_z \}$
s.t.\ $|\sigma'(z)| < |\sigma(z)|$, assuming the existence of a model $\sigma$.

The remaining bounds $\boundFlat$ and $\boundGamma$ are defined as $\boundFlat  \triangleq \boundAut + 4 \pMaxPrim + \pMaxAut$ and
$\boundGamma \triangleq \boundFlat + 2 \pMaxPrim + 2 \pMaxLit$.
Ignoring some technical details and due to reasons that will be revealed shortly, we need $\boundFlat$ to be
slightly longer than $\boundAut$, so that when we later construct $\sigma'
\triangleq \sigma \variant \{z \mapsto p \}$ for some particular
prefix $|p| \le \boundAut + \pMaxAut$, we can establish some
of the string that precedes an~occurrence of $z$ in $\sigma_{|\vars \setminus \{z\}}(\haystack)$
in the case~$\sigma'$ fails to be a model.
Finally, $\boundGamma$ is set up so that together with $\boundFlat$ they allow
\cref{lemma:sewing} to be applied, where $\boundAut$ and $\boundFlat$
play the role of $K_0$ and $N_0$, respectively.

We remark that the exact values of $\boundAut$, $\boundFlat$, and $\boundGamma$ are
not important when reading the proof for the first time. It is sufficient to note
that $\boundAut < \boundFlat < \boundGamma$, and that the difference in sizes between
these bounds is sufficiently large.

\vspace{-3.0mm}
\subsection{Overcoming the Infinite by Equivalence with a Finite Index}
\vspace{-2.0mm}

Our procedure to decide $\varphi = \notcontains(\needle, \haystack)$ containing
non-flat variables in $\haystack$ originates in enumeration of partial
assignments $\eta \colon \varsOf{\needle} \rightarrow \alphabet^*$, since it is
easy to find suitable values for non-flat variables when $\needle$ is a literal
due to us fixing values of all variables in $\needle$. The problem is that
there is an infinite number of such assignments. Our key observation allowing
us prove that we can underapproximate non-flat languages using flat ones is
that the precise values of flat variables occurring in $\needle$ does not matter as
long as these variables have assigned sufficiently long words.
In general, however, a~model $\sigma \models \varphi$ might assign long words only
to a subset of flat variables. Therefore, in our decision procedure, we first guess
the set $X$ of flat variables that are assigned words shorter than $\boundFlat$. Since
there are finitely many such words, we have a finite number of possible choices $\tau\colon X \rightarrow \Sigma^*$
of values these variables can attain. We enumerate all possible valuations $\tau$,
and for every such a valuation $\tau$ we produce a new constraint $\varphi_\tau$
in which we replace every short variable $x \in X$ by the word $\tau(x)$.
The regular constraints restricting the remaining flat variables are modified
to permit only words longer than $\boundFlat$, allowing us to assume in our proofs
that flat variables in $\varphi_\tau$ have assigned sufficiently long words.

Let us formalize our observation that the precise length of words assigned to flat variables
does not matter as long as they are sufficiently long. Let $\eta \colon
\varsOf{\needle} \rightarrow \alphabet^*$ be a partial assignment. We define
the set $X_{< \lambda}$ as $X_{< \lambda}(\eta) \triangleq \big\{x \in
\varsOf{\needle} : |\eta(x)| < \lambda \big\}$. Given a~constant $\lambda \in
\naturals$, we say that two partial assignments~$\eta$ and~$\vartheta$ are
$\lambda$-equivalent denoted by $\eta \sim_{\lambda} \vartheta$ iff $\eta
\sim_\lambda \vartheta \defequiv \eta_{|X_{< \lambda}(\eta)} =
\vartheta_{|X_{< \lambda}(\vartheta)}$.

Clearly, $\sim_\lambda$ has a finite index and if there exists a model $\sigma$ of $\varphi$, then its
restriction $\sigma_{|\varsOf{\needle}}$ will fall into one of the equivalence classes induced by $\sim_\lambda$.
Setting $\lambda = \boundFlat$, we inspect all equivalence classes, checking
whether any of them contains a model. Given a representative $\eta$ of an
equivalence class, we replace all variables $x \in \varsOf{\needle}$
with $\eta(x)$ if $|\eta(x)| < \boundFlat$, producing a new constraint $\varphi_\eta$. Furthermore, we need to include
the fact that the remaining variables in $\needle$ have assigned long words.
Therefore, the languages of all variables $y \in \varsOf{\needle}$ such that $|\eta(y)| \ge \boundFlat$ will
have their their languages restricted to a~new language $\mathcal{L}'_y = \mathcal{L}_y \cap \{|w| \ge \boundFlat \mid w \in \Sigma^* \}$
in $\varphi_\eta$.
The resulting constraint $\varphi_\eta$ is clearly equisatisfiable to $\varphi$
with models restricted to be $\sim_{\boundFlat}$-equivalent to $\eta$.


Some of these instances can be decided without any additional work. In particular,
if we have an assignment $\eta$ such that $X_{< \boundFlat}(\eta) = \varsOf{\needle}$,
i.e., all variables occurring in $\needle$ are short, we fix values of
all variables in $\needle$, and, thus, the $\needle$eedle of $\varphi_\eta$ is a word.
The remaining instances with $X_{< \boundFlat}(\tau) \subset \varsOf{\needle}$
contain at least one occurrence of a~(long) variable in $\tau(\needle)$, and, thus,
their decidability requires investigation. 

\vspace{-2.0mm}
\subsection{Inspecting the Structure of Non-flat Variables in the Presence of Long Flat Variables}
\vspace{-1.0mm}

Throughout this section, we fix $\varphi$ be a $\notcontains$ instance resulting
from the previous section, i.e., $\varphi \triangleq \varphi_\eta = \notcontains(\needle, \haystack
\land \langconstrof{\lang ,\eta}$
for some equivalence class representative $\eta$
such that $\varsOf{\needle} \neq \emptyset$. We start by stating the key theorem for our decidability result.
\begin{theorem} \label{thm:constrflat}
    Let $z$ be a (decomposed) non-flat variable present in $\haystack$.
    There is a flat language $\langflatunderapp_z \subset \lang_z$ s.t.\ if there exists
    a model $\sigma \models \varphi$,
    then there exists a model $\sigma' \models \varphi[z/z\sep]$
    s.t.\ $\sigma' \triangleq \sigma \variant \{z \mapsto w_z \}$ for some word $w_z \in \langflatunderapp_z$.
\end{theorem}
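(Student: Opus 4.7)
The plan is to define
\[
\langflatunderapp_z \;\triangleq\; \bigcup_{(\pi,\chi)\in P}\; \{w_\pi\concat\con(q_\pi,q_{u|v})\} \cdot \gamma_z^{*} \cdot \{\con(q_{u|v},q_\chi)\concat w_\chi\},
\]
where $P$ ranges over pairs of $\boundAut$-reaching paths: $\pi$ in the prefix tree $T_z$, yielding a~prefix $w_\pi$ and endpoint state $q_\pi$, and $\chi$ in the symmetrically defined suffix tree, yielding a~suffix $w_\chi$ and entry state $q_\chi$. Since both trees are finitely branching and $\boundAut$-reaching vertices lie at bounded depth, $P$ is finite, so $\langflatunderapp_z$ is a~finite union of concatenations of literals with $\gamma_z^{*}$ and is thus flat by definition. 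The inclusion $\langflatunderapp_z\subseteq\lang_z$ is witnessed by the explicit runs of $\aut_z$ built from $\con(\cdot,\cdot)$ together with the fact that $q_{u|v}$ reads both $u$ and $v$, hence all iterates of $\gamma_z=u^{2+k}v^2$.

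For the preservation direction, starting from $\sigma\models\varphi$, the plan proceeds in three steps. First, follow the path in $T_z$ induced by $\sigma(z)$ up to its first $\boundAut$-reaching vertex $v_\pi$, with label $w_\pi$ and state $q_\pi=\stateLabelOf{v_\pi}$; because $\sigma$ models $\varphi$, $v_\pi$ cannot be a~dead end of $T_z$ w.r.t.\ $\sigma_{|\vars\setminus\{z\}}$. Form $\varphi_\pref\triangleq\varphi[z/z_\pref\sep]$ with $z_\pref$ restricted to $\{w_\pi\concat\con(q_\pi,q_{u|v})\}\cdot\gamma_z^{*}$, and extract a~model $\sigma^\pref\models\varphi_\pref$ agreeing with $\sigma$ on $\vars\setminus\{z\}$ and with $\sigma^\pref(z_\pref)=w_\pi\concat\con(q_\pi,q_{u|v})\concat\gamma_z^{\boundGamma}$; here \cref{lemma:gammaUseful} rules out any new occurrence of $\needle$ whose overlap with the $\gamma_z^{\boundGamma}$-block is long, while matches localised inside the $w_\pi$-part are excluded by the non-dead-end property and the $\sep$-marker at the end of $z_\pref$. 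Second, symmetrically construct $\sigma^\suf\models\varphi_\suf\triangleq\varphi[z/\sep\concat z_\suf]$ with $\sigma^\suf(z_\suf)=\gamma_z^{\boundGamma}\concat\con(q_{u|v},q_\chi)\concat w_\chi$ for a~corresponding $\boundAut$-reaching suffix path $\chi$.

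Third, the pair $(\sigma^\pref,\sigma^\suf)$ meets the three preconditions of \cref{lemma:sewing} by construction: agreement on all variables other than $z_\pref,z_\suf$, length $\ge\boundFlat>2\pMaxPrim$ for all flat variables (guaranteed by the $\sim_{\boundFlat}$-equivalence preprocessing), and $\boundGamma\cdot|\gamma_z|\ge 4\pMaxPrim+2\pMaxLit$. Invoking \cref{lemma:sewing} yields a~model $\sigma'$ of $\varphi$ with $\sigma'(z)=w_\pi\concat\con(q_\pi,q_{u|v})\concat\gamma_z^{\boundGamma}\concat\con(q_{u|v},q_\chi)\concat w_\chi\in\langflatunderapp_z$; since a~model of $\varphi$ is in particular a~model of the weaker $\varphi[z/z\sep]$, the theorem follows. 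The main obstacle lies in justifying Steps 1 and 2 simultaneously: proving that $v_\pi$ is not a~dead end and that replacing the tail of $\sigma(z)$ by $\con(q_\pi,q_{u|v})\concat\gamma_z^{\boundGamma}\concat\sep$ does not create new occurrences of $\needle$ inside $\haystack$. Both rely on the careful calibration $\boundAut<\boundFlat<\boundGamma$: $\boundAut\ge 2\pMaxPrim\pMaxAut+\pMaxLit$ guarantees that every $\boundAut$-reaching vertex of $T_z$ can reach $q_{u|v}$ via $\con$, $\boundFlat$ makes flat variables in $\needle$ long enough for \cref{cor:infixFineWilf} to force conflicts against any $\gamma_z$-power, and $\boundGamma$ is tuned so that \cref{lemma:gammaUseful} and \cref{lemma:sewing} apply together.
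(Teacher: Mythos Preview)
Your plan has a genuine gap in Step~1 (and symmetrically Step~2). You claim that
$\sigma^\pref\defeq\sigma\variant\{z_\pref\mapsto w_\pi\concat\con(q_\pi,q_{u|v})\concat\gamma_z^{\boundGamma}\}$
is a model of $\varphi_\pref$, arguing that (a)~\cref{lemma:gammaUseful} kills any occurrence of $\sigma(\needle)$ whose overlap with $\sigma^\pref(z_\pref)$ is large, and (b)~the non-dead-end property kills any occurrence ending inside $w_\pi$. But these two regions do not cover everything: \cref{lemma:gammaUseful} only applies to overlaps of size at least $|w_\pi|+\pMaxAut+2\pMaxLit+2\pMaxPrim$, so an occurrence of $\sigma(\needle)$ that ends in the first $\pMaxAut+2\pMaxLit+2\pMaxPrim$ characters of $\con(q_\pi,q_{u|v})\concat\gamma_z^{\boundGamma}$ is handled by neither (a) nor (b). This case is not vacuous. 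Write $\needle=\needle'x\alpha^MpW$ with $x$ the rightmost flat variable and $\primBase(x)=\{\alpha\}$. Since $|\sigma(x)|\ge\boundFlat>\boundAut\ge|w_\pi|$, whenever $w_\pi$ happens to lie in $\factorsOf{\alpha^+}$ (which nothing in your argument excludes), the suffix $\sigma(x)\alpha^MpW$ of $\sigma(\needle)$ can align conflict-free with $w_\pi$ and spill only a few characters into $\con\cdot\gamma_z^{\boundGamma}$, making $\sigma^\pref$ fail to be a model.

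This is precisely the case that occupies the bulk of the paper's proof of \cref{lemma:constrflat}. The paper shows that if the $\gammaPref{z}$-expansion of \emph{every} $\boundAut$-reaching path fails to give a model, then all such paths are forced to share a common prefix in $\factorsOf{\alpha^+}$, and it analyses the structure of $\aut_z$ along that prefix (Claims~\ref{claim:succCount}--\ref{claim:transitionStructure}) to produce two further flat languages $\lAlphaFlat=\factorsOf{\alpha^+}\cap\lang_z$ and $\lGammaFlat$ that catch $\sigma(z)$ in this situation. Your $\langflatunderapp_z$ contains neither of these components (nor the finite language of short words for the case $|\sigma(z)|<\boundAut$, which you also do not address), so as stated it is too small: there are models $\sigma$ of $\varphi$ for which no $w_z\in\langflatunderapp_z$ yields a model of $\varphi[z/z\sep]$.
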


Before presenting quite technical lemmas that allowed us to obtain the result,
let us derive some intuition on why the theorem holds. Assume that we have a
model $\sigma$ of $\varphi$ and we we pick some long prefix $p$ and a long suffix $s$
for the variable $z$, and we glue them together using $\Gamma_z$-expansion to
produce a word $w \triangleq \Gamma^{\boundGamma}_z(p, s)$ and an altered assignment
$\sigma' \triangleq \{ z \mapsto w \}$. The core of the theorem lies in analyzing
the situation when $\sigma'$ fails to be a model. By symmetry, we focus on the case
when our choice of the prefix $p$ is problematic. We have two possibilities.
\begin{itemize}
    \item There is a short prefix of $p$ due to which $\sigma'$ fails to be a model. We address this
        by systematically exploring the prefix tree of $z$ up to a certain bound,
        marking the vertices that correspond to such prefixes as dead ends.
    \item Our choice of $p$ does not cause $\sigma'$ to immediately fail to be a model, however,
        by applying $\Gamma_z$-expansion we introduce an infix due to which $\sigma' \not \models \varphi$.
        Since we assume that $\varphi$ results from a previous section, we know that all flat variables have assigned a long word, i.e.,
        $\sigma'(\needle)$ contains long factors of the form $\alpha^k$ for some $\alpha$ which
        forms the basis of a flat variable $x \in \varsOf{\needle}$ and $k \in \naturals$.
        $\Gamma_z$-expansion glues together a prefix and a suffix using a word $\gamma^K_z$
        where $|\gamma_z| \neq |\alpha|$ for any base $\alpha$. Therefore, we know that only a limited part of the infix
        introduced by $\Gamma_z$-expansion is problematic, otherwise we would have a long overlap
        between $\gamma^K_z$ and some factor $\alpha^k$ of $\sigma'(\needle)$.
        Thus, $p$ contains a long factor $\alpha^k$ for some $k \ge 1$ and a primitive $\alpha$
        word $\alpha$ that forms the base of a flat variable present in $\needle$. We carefully analyze
        the effect of such a factor on the structure of $\aut_z$.
\end{itemize}

We now provide an overview of lemmas that lead to \cref{thm:constrflat}. Since
these lemmas are quite technical, we accompany them with intuition and
only sketch their proofs. Full proofs can be found in
\ifTR{}\cref{sec:proofsSingleSided}\else{}\cite{techrep}\fi.
To simplify the presentation, we focus primarily on attempting to find a suitable
prefix of a non-flat variable, and hence, our results are formulated in the
context of a modified formula that contains a fresh alphabet symbol $\sep$.
Since the situation is symmetric for suffixes, we can use the properties
of $\Gamma_z$-expansion (\cref{lemma:sewing}) and glue together a suitable prefix and a suitable
suffix to produce an altered model.

We start with a technical lemma used frequently in our proofs.
The lemma shows that if we know that $\alpha$ is a factor of $\haystack$,
and we know that a~part of $\haystack$ in the proximity of the factor $\alpha$
is incompatible with $\alpha$, then we can show that a~large number of overlaps
between~$\sigma(\needle)$ and~$\sigma(\haystack)$ must
contain a~conflict if $\needle$ contains a large factor of the form $\alpha^N$.

\begin{restatable}{lemma}{lemConflictKaboom} \label{lemma:conflictKaboom}
    Let $\alpha$ and $\gamma$ be two primitive words, such that $|\alpha| \neq |\gamma|$.
    Let $\haystack = t \alpha u \gamma^{\boundGamma}$ and $\needle = v \alpha^{N} w$
    where $t, u, v$ and $w$ are (possibly empty) words such that $N |\alpha| > \boundFlat$ and $u < \boundFlat - 2\max(|\alpha|, |\gamma|)$.
    If
    \begin{enumerate*}
        \item the prefix $p$ of $\haystack$ of the size $|p| = |t| + |\alpha| + |w|$
            does not contain the word $\needle$, \label{cond1}
        \item $\alpha \not \in \prefixesOf{v_0}$ and $v_0 \not \in \prefixesOf{\alpha}$, \label{cond2} 
    \end{enumerate*}
    then $\needle$ is not a factor of $\haystack$.
\end{restatable}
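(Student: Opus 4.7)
The plan is to argue by contradiction. Suppose $\needle = v\alpha^N w$ does occur as a factor of $\haystack = t\alpha u \gamma^{\boundGamma}$, starting at some position $i$, and let $j = i + |v|$ denote the starting position of the long $\alpha^N$-block inside $\haystack$. The proof proceeds by a case split on how the interval $[j, j+N|\alpha|)$ sits relative to the three regions $t\alpha$, $u$, and $\gamma^{\boundGamma}$ of $\haystack$.

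The core case, which drives the argument, is when the $\alpha^N$-block overlaps $\gamma^{\boundGamma}$. Combining $N|\alpha| > \boundFlat$ with $|u| < \boundFlat - 2\max(|\alpha|, |\gamma|)$ yields
\[
  N|\alpha| - |u| \;>\; 2\max(|\alpha|, |\gamma|) \;\geq\; |\alpha| + |\gamma| \;\geq\; |\alpha| + |\gamma| - \gcdof{|\alpha|}{|\gamma|},
\]
so the portion of the $\alpha^N$-block that lies inside $\gamma^{\boundGamma}$ must have length at least $|\alpha| + |\gamma| - \gcdof{|\alpha|}{|\gamma|}$. Since $\alpha$ and $\gamma$ are both primitive with $|\alpha| \neq |\gamma|$, \cref{cor:infixFineWilf} produces a conflict inside this overlap, contradicting the assumed occurrence of $\needle$.

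It then remains to rule out alignments where the $\alpha^N$-block sits entirely inside $t\alpha u$ and therefore never reaches $\gamma^{\boundGamma}$. The first hypothesis of the lemma says the prefix of $\haystack$ of length $|t| + |\alpha| + |w|$ cannot contain $\needle$, so any such occurrence must extend past position $|t|+|\alpha|+|w|$; combined with the small-$u$ bound, this pins the $\alpha^N$-block into a position in which its last copies of $\alpha$ must cross the $t\alpha$--$u$ boundary. The second hypothesis then kicks in at exactly this boundary: any such placement would force the word $v_0$ at the boundary to be compatible with $\alpha$ as prefix in either direction, contradicting $\alpha \notin \prefixesOf{v_0}$ and $v_0 \notin \prefixesOf{\alpha}$.

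The main obstacle I anticipate is the bookkeeping on the borderline placements where the final copy of $\alpha$ in the $\alpha^N$-block straddles the $t\alpha$--$u$ or $u$--$\gamma^{\boundGamma}$ boundary: here the Fine--Wilf overlap with $\gamma$ may fall just short of the threshold $|\alpha|+|\gamma|-\gcdof{|\alpha|}{|\gamma|}$ while at the same time the prefix-incompatibility of condition~2 applies only to a few positions of the boundary. Handling such corners requires invoking the small-$u$ bound and the local incompatibility at $v_0$ simultaneously, rather than each one in isolation, so that every alignment either produces a long enough $\alpha$-vs-$\gamma$ overlap for \cref{cor:infixFineWilf} or is ruled out directly by the boundary mismatch.
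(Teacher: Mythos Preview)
Your overall strategy matches the paper's---use the Fine--Wilf corollary once the $\alpha^N$-block overlaps $\gamma^{\boundGamma}$ by enough, and use the boundary mismatch from condition~2 otherwise---but there is a genuine gap in how you handle the ``otherwise'' case, and it is exactly where the paper's proof invests its main effort.

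Your claim is that whenever the $\alpha^N$-block crosses the $t\alpha$--$u$ boundary, condition~2 (``$\alpha \notin \prefixesOf{v_0}$ and $v_0 \notin \prefixesOf{\alpha}$'') produces a conflict. But condition~2 only fires when some copy of $\alpha$ inside the $\alpha^N$-block starts \emph{exactly} at position $|t|+|\alpha|$, i.e., at shifts where the designated $\alpha$ in $\haystack$ lines up with one of the $\alpha$'s in $\needle$. At all other (non-aligned) shifts with $j \le |t|$, the copy of $\alpha$ that crosses the boundary straddles it, and what you would need is that a suffix of $\alpha$ concatenated with a short prefix of $v_0$ cannot equal $\alpha$---which is \emph{not} what condition~2 says. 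The paper closes this gap with \cref{lemma:primitiveAlignment}: whenever the designated $\alpha$ in $\haystack$ is fully covered by two consecutive copies from $\alpha^N$ at a non-trivial offset, primitivity of $\alpha$ forces a conflict. The paper's proof therefore \emph{alternates}: \cref{lemma:primitiveAlignment} for non-aligned shifts, condition~2 for aligned shifts, sweeping through all positions until the $\alpha^N$-block has slid far enough right that the Fine--Wilf overlap with $\gamma^{\boundGamma}$ takes over. Without \cref{lemma:primitiveAlignment} you have no argument for the (vast majority of) non-aligned shifts.

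A second, related issue: your Fine--Wilf calculation $N|\alpha| - |u| > 2\max(|\alpha|,|\gamma|)$ bounds the part of the $\alpha^N$-block lying \emph{outside~$u$}, but that part can be split between $t\alpha$ on the left and $\gamma^{\boundGamma}$ on the right. It gives the overlap with $\gamma^{\boundGamma}$ only when $j \ge |t|+|\alpha|$, i.e., when the block starts past the designated $\alpha$. So the correct case split is not ``$\alpha^N$ reaches $\gamma^{\boundGamma}$'' versus ``$\alpha^N \subseteq t\alpha u$'', but rather on $j$ relative to $|t|$ and $|t|+|\alpha|$: for $j \le |t|$ use the \cref{lemma:primitiveAlignment}/condition-2 alternation, and for $j \ge |t|+|\alpha|$ your Fine--Wilf bound is valid as written.
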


\begin{proof}[Proof sketch.]
    Since $\needle$ contains a long factor $\alpha^N$ and $\haystack$ contains at least one $\alpha$,
    we can apply \cref{lemma:primitiveAlignment} to rule out a lot of overlaps that might be conflict-free.
    All of the remaining overlaps contain a conflict thanks to Condition~\ref{cond2}.
    The full proof is available in \ifTR{}\cref{sec:proofsSingleSided}\else{}\cite{techrep}\fi. 
\end{proof}
\vspace{-5mm}

\medskip
\begin{restatable}{lemma}{lemFlatify}\label{lemma:flatify}
    Let $x \in \vars$ be a flat variable with $\primBase(x) = \{ \alpha \}$, and let $z \in \varsOf{\haystack}$ be a (decomposed) non-flat variable.
    Let $\varphi$ $\varphi \triangleq \notcontains(\needle,
    \haystack) = \notcontains(\needle' x \alpha^M p W, \haystack)$ be formula with $p \neq
    \alpha$ being a prefix of $\alpha$ and $W = a_0 \dots a_n$ being a
    non-empty word such that the word $p a_0$ is not a prefix of $\alpha$.

    If there exists a model $\sigma$ with $\sigma(z)$ being of the form $\sigma(z) = s \alpha^k p W V$
    for some word $V$, $k \ge 1$, and a suffix $s$ of $\alpha$,
    then $\sigma \variant \big\{ z \mapsto \gammaPref{z}^{\boundGamma}(s \alpha^k p W) \big\}$
    is a model of $\varphi[z/z\sep]$.
\end{restatable}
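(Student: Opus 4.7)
The plan is to argue by contradiction, reducing to an application of \cref{lemma:conflictKaboom}. Suppose that $\sigma' \defeq \sigma \variant \{ z \mapsto \gammaPref{z}^{\boundGamma}(s\alpha^k pW) \}$ is not a model of $\varphi[z/z\sep]$, i.e.\ $\sigma(\needle)$ occurs as a factor of $\sigma'(\haystack[z/z\sep])$. Since $x$ is flat with $\primBase(x) = \{\alpha\}$ we have $\sigma(x) \in \alpha^*$, so the needle reads $\sigma(\needle) = \sigma(\needle')\alpha^N pW$ with $N|\alpha| \ge |\sigma(x)| \ge \boundFlat$ by the standing hypothesis that long flat variables in $\needle$ carry words of length at least $\boundFlat$. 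Because $\sep$ is fresh and does not appear in $\sigma(\needle)$, the supposed factor occurrence must lie entirely in one maximal $\sep$-free block of $\sigma'(\haystack[z/z\sep])$. A block that contains no occurrence of $z$ is a substring of $\sigma(\haystack)$, and an occurrence inside it would contradict $\sigma \models \varphi$; the remaining blocks have the form $A \cdot s\alpha^k pW \cdot \con(q_{s\alpha^k pW}, q_{u|v}) \cdot \gamma_z^{\boundGamma}$, where $A = \sigma(\haystack_i)$ for some $i$ and, crucially, the prefix $A \cdot s\alpha^k pW$ is itself a substring of $\sigma(\haystack)$.

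For each non-trivial block I would instantiate \cref{lemma:conflictKaboom} taking the haystack piece $t = A \cdot s \cdot \alpha^{k-1}$, the isolated $\alpha$ just before $pW$, the middle $u = pW \cdot \con(q_{s\alpha^k pW}, q_{u|v})$, and the trailing $\gamma_z^{\boundGamma}$; this is matched against the needle split as $v = \sigma(\needle')$, the long $\alpha^N$, and $w = pW$. The length separation $|\alpha| \neq |\gamma_z|$ is guaranteed by the construction of $\gamma_z$ in the $\Gamma_z$-expansion, which forces $|\gamma_z| > |\alpha|$ for every base in $\primBase(\needle)$. The short-middle requirement $|u| < \boundFlat - 2\max(|\alpha|, |\gamma_z|)$ follows from $|u| \le \pMaxLit + \pMaxAut$ combined with the defining identities $\boundAut \ge \pMaxLit$ and $\boundFlat = \boundAut + 4\pMaxPrim + \pMaxAut$, leaving room of roughly $2\pMaxPrim$.

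Condition~1 of \cref{lemma:conflictKaboom} is immediate: the prefix of the block of length $|t| + |\alpha| + |w|$ is precisely $A \cdot s\alpha^k pW$, a substring of $\sigma(\haystack)$, so any occurrence of $\sigma(\needle)$ inside it would again contradict $\sigma \models \varphi$. Condition~2 should follow from the hypotheses that $p$ is a proper prefix of $\alpha$ and $pa_0$ is not a prefix of $\alpha$: the characters immediately following $\alpha^N$ in the needle begin with $pa_0$, which neither extends to a full $\alpha$ nor is a prefix of $\alpha$, blocking any phase-shifted alignment with the $\alpha^N$-pattern in the haystack. The principal obstacle is precisely this alignment verification --- one must pin down exactly what role `$v_0$' plays in the statement of \cref{lemma:conflictKaboom} and map our hypotheses on $p$ and $a_0$ onto it without slack; a secondary bookkeeping task is keeping the bound on $|u|$ comfortably below $\boundFlat - 2\max(|\alpha|, |\gamma_z|)$ uniformly across all possible primitive bases of length at most $\pMaxPrim$.
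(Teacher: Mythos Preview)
Your proposal is correct and follows essentially the same route as the paper: reduce to \cref{lemma:conflictKaboom}, verify Condition~1 from the fact that $\sigma'(z)$ and $\sigma(z)$ share the prefix $s\alpha^k pW$ (so any occurrence confined to that prefix would already contradict $\sigma\models\varphi$), and verify Condition~2 from the hypothesis $pa_0\notin\prefixesOf{\alpha}$. Your explicit $\sep$-block decomposition and the explicit mapping of $t,u,v,w$ onto the parameters of \cref{lemma:conflictKaboom} make the argument clearer than the paper's two-line sketch; the minor slip in your bound on $|u|$ (you omit the $|p|<|\alpha|\le\pMaxPrim$ contribution) is harmless, since $\boundFlat-2\pMaxPrim$ still dominates $\pMaxPrim+\pMaxLit+\pMaxAut$ with room to spare, and your reading of the undefined ``$v_0$'' in \cref{lemma:conflictKaboom} as the word $u$ immediately following the distinguished $\alpha$ in the haystack is the intended one.
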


Intuitively, the rightmost variable in $\needle$ is the flat variable $x$ with $\primBase(x) = \{\alpha\}$. To the right of $x$, there is a~literal with the prefix
$\alpha^M p$ that resembles the flat language $\lang_x$. Moreover, $\sigma(z)$
also starts with a prefix $s \alpha^k p$ resembling $\lang_x$, followed by $W$.
Thus, the prefix $s \alpha^k p W$ of the word $\sigma(z)$ mimics
the suffix of the right-hand side $\sigma(\needle)$. Hence, if we look
solely on the prefix of $\sigma(z)$ and the suffix of $\sigma(\needle)$, there are no
obvious conflicts.
\begin{center}
\tikzstyle{wordBNode} = [draw, scale=0.8, inner sep=1mm, text height=3mm, minimum height=7mm, baseline]
\begin{tikzpicture}
    \node[wordBNode, anchor=west, minimum width=8mm, gray] (a0) at (0, 0) {$\dots$};
    \node[wordBNode, anchor=west, minimum width=4mm]  (a1) at (a0.east) {$s_\alpha$};
    \node[wordBNode, anchor=west, minimum width=10mm] (a2) at (a1.east) {$\dots$};
    \node[wordBNode, anchor=west, minimum width=12mm] (a3) at (a2.east) {$\alpha$};
    \node[wordBNode, anchor=west, minimum width=4mm]  (a4) at (a3.east) {$p_\alpha$};
    \node[wordBNode, anchor=west, minimum width=8mm]  (a5) at (a4.east) {$W$};
    \node[wordBNode, anchor=west, minimum width=12mm]  (a6) at (a5.east) {$\dots$};

    \node[wordBNode, anchor=west, minimum width=8mm, gray] (r0) at (-0.50, -0.7) {$ \dots  $};
    \node[wordBNode, anchor=west, minimum width=12mm]      (r1) at (r0.east)     {$ \alpha  $};
    \node[wordBNode, anchor=west, minimum width=10mm]       (r2) at (r1.east)     {$ \dots  $};
    \node[wordBNode, anchor=west, minimum width=12mm]      (r3) at (r2.east)     {$ \alpha  $};
    \node[wordBNode, anchor=west, minimum width=4mm]      (r4) at (r3.east)     {$ p_\alpha  $};
    \node[wordBNode, anchor=west, minimum width=8mm]       (r5) at (r4.east)     {$ W $};

    \node[] at (-2, 0)    {$\sigma(\haystack)$};
    \node[] at (-2, -0.7) {$\sigma(\needle)$};
    
    \coordinate (z_end) at ($(a5.north east)+(0.2, 0)$);
    \coordinate (center) at ($(a1.north west)!0.5!(z_end)$);
    \node[scale=0.7] (sigma_z) at ($(center)+(0, 0.5)$) {$\sigma(z)$};
    \def\shift{0.05}
    \draw (a1.north west) edge[->, out=40,in=-130] ($(sigma_z.south)+(-\shift, 0)$);
    \draw (z_end) edge[->, out=140,in=-50] ($(sigma_z.south)+(\shift, 0)$);
\end{tikzpicture}
\end{center}
However, $\sigma \models \varphi$, and, therefore, there must be a conflict
outside of $z$ when considering the above alignment.  The rest of the proof can
be found in \ifTR{}\cref{sec:proofsSingleSided}\else{}\cite{techrep}\fi.

Next, we derive a lemma formalizing that we can restrict languages of non-flat variables to flat ones,
producing an~equisatisfiable instance. Stating a symmetric lemma for suffixes, and applying
\cref{lemma:sewing} would give us the entire proof of \cref{thm:constrflat}.
\begin{restatable}{lemma}{lemConstrFlat}\label{lemma:constrflat}
    Let $x$ be the rightmost flat variable with $\primBase(x) = \{\alpha\}$, and let $z$ be a non-flat
    variable occurring in $\haystack$.
    Let $\varphi$ be a formula $\varphi \triangleq \notcontains(\needle,
    \haystack) \land \langconstr = \notcontains(\needle' x \alpha^M p W,
    \haystack) \land \langconstr$
    where $\needle' \in (\vars \cup \alphabet)^*$, 
    $M \ge 0$, $p \neq \alpha$ is a prefix of $\alpha$,
    and $W = a_0 \dots a_n$ is a non-empty word such that $p a_0$ is not a prefix of $\alpha$.
    There exists a flat language $\langflatunderapp_z \subset \lang_z$ s.t.\ if there is
    a~model $\sigma \models \varphi$, then
    $\sigma \variant \{z \mapsto w_z \} \models \varphi[z/z\sep]$ for some word $w_z \in \langflatunderapp_z$.
\end{restatable}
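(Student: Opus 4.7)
The plan is to construct $\langflatunderapp_z$ as a finite union of flat components, collecting one component for every non-dead-end vertex reached by an $\boundAut$-reaching path in the prefix tree $T_z$, and additionally including a family of flat components shaped to fit the right end $x\alpha^M p W$ of $\needle$. Concretely, for each $\boundAut$-reaching path $\pi$ in $T_z$ terminating at a vertex $v_\pi$ that is \emph{not} a dead end with respect to $\sigma_{|\vars\setminus\{z\}}$, I would add the flat component $\edgeLabelOf{\pi}\concat\con(\stateLabelOf{v_\pi},q_{u|v})\concat\gamma_z^*\concat t_\pi$, where $t_\pi$ is a fixed continuation from $q_{u|v}$ to a final state of $\aut_z$ (which exists because $z$ is decomposed). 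Since $T_z$ is finitely branching and we descend only to depth $\boundAut$, finitely many such paths exist, so $\langflatunderapp_z$ is flat, and each component is contained in $\lang_z$ by construction.

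Given a model $\sigma\models\varphi$, let $\pi^\sigma$ be the unique $\boundAut$-reaching path that $\sigma(z)$ traces in $T_z$, and let $v^\sigma$ be its terminal vertex. If $v^\sigma$ is \emph{not} a dead end, then the definition of dead end yields $\sigma\variant\{z\mapsto\edgeLabelOf{\pi^\sigma}\}\models\varphi[z/z\sep]$; I would then $\Gamma$-expand this prefix by replacing the trailing marker with $\con(\stateLabelOf{v^\sigma},q_{u|v})\concat\gamma_z^{\boundGamma}\concat t_{\pi^\sigma}$, producing a word $w_z$ in the first family of components. The fact that the resulting assignment still satisfies $\varphi[z/z\sep]$ follows from \cref{lemma:gammaUseful}, which guarantees that any sufficiently long overlap between $\sigma(\needle)$ and the inserted $\gamma_z^{\boundGamma}$-block contains a conflict, while the bounded prefix $\edgeLabelOf{\pi^\sigma}$ is already safe by the non-dead-end hypothesis.

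The main obstacle is the dead-end case, where the argument must exploit the concrete shape of $\needle$. If $v^\sigma$ is a dead end, then $\sigma(\needle)$ is a factor of the haystack obtained by replacing $z$ with $\edgeLabelOf{\pi^\sigma}\concat\sep$; because $\needle$ ends with $x\alpha^M p W$ and $\sigma(x)$ is an $\alpha$-power of length at least $\boundFlat$, this witness factor carries a long suffix of the form $\alpha^{\ell} p W$ that must straddle the $\sep$ marker. The conditions $p\neq\alpha$ and that $p a_0$ is not a prefix of $\alpha$, together with \cref{lemma:conflictKaboom}, rule out alternative alignments of this $\alpha$-power against $\sigma(\haystack)$, forcing $\sigma(z)$ itself to begin with a block of the form $s\alpha^k p W V$ that satisfies the hypotheses of \cref{lemma:flatify}. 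Applying \cref{lemma:flatify} then supplies the replacement $\gammaPref{z}^{\boundGamma}(s\alpha^k p W)\concat t$ for a suitable final-state-reaching tail $t$, and I would arrange $\langflatunderapp_z$ to contain the corresponding flat family, parameterised by the finitely many choices of $s\in\suffixesOf{\alpha}$, $k$ bounded by $\boundAut/|\alpha|$, and appropriate tails, thereby closing the argument.
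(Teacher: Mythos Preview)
Your case analysis hides the genuine difficulty of the lemma. The vertex $v^\sigma$ (the endpoint of the $\boundAut$-reaching path traced by $\sigma(z)$) can \emph{never} be a dead end when $\sigma\models\varphi$: if $\sigma\variant\{z\mapsto\edgeLabelOf{\pi^\sigma}\}$ violated $\varphi[z/z\sep]$, then $\sigma(\needle)$ (which contains no $\sep$, since $z\notin\varsOf{\needle}$) would be a factor of some $\sep$-free segment $\sigma(\haystack_{j})\concat\edgeLabelOf{\pi^\sigma}$; but that segment is a prefix of the factor $\sigma(\haystack_{j})\concat\sigma(z)$ of $\sigma(\haystack)$, contradicting $\sigma\models\varphi$. (Your dead-end paragraph even asserts the witnessing occurrence ``must straddle the $\sep$ marker,'' which is impossible for a $\sep$-free word.) So your second case is vacuous and the structural analysis you placed there --- the derivation that $\sigma(z)=s\alpha^k pWV$ and the appeal to \cref{lemma:flatify} --- never fires.

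That pushes all the work into your first case, where the argument has a real gap at the junction between the explored prefix and the inserted block. The non-dead-end property only rules out occurrences of $\sigma(\needle)$ whose last letter lies inside $\edgeLabelOf{\pi^\sigma}$, and \cref{lemma:gammaUseful} only rules out occurrences with a large overlap with the $\gamma_z^{\boundGamma}$-block. Neither covers an occurrence whose last letter lands in the short connective $\con(\cdot,\cdot)$ or in the first couple of copies of $\gamma_z$, while the long $\sigma(x)\in\alpha^+$ block of $\needle$ sits entirely inside $\edgeLabelOf{\pi^\sigma}$. This is precisely the failure mode the paper spends most of the proof on: it may happen that $\gammaPref{z}^{\boundGamma}$-expanding \emph{every} $\boundAut$-reaching non-dead-end path fails to yield a model, and only from this joint failure (over all paths, not just $\pi^\sigma$) does one deduce that every such path --- hence also $\sigma(z)$ --- has a prefix of the form $s\alpha^k pW\cdots$, at which point \cref{lemma:flatify} applies. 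Your proposal never triggers that analysis. Two smaller points: the exponent $k$ in $s\alpha^k pW$ is determined by $\sigma$ and is unbounded, so ``$k$ bounded by $\boundAut/|\alpha|$'' does not suffice (the paper absorbs this into a flat component of the form $(\factorsOf{\alpha^+}\cap L_{\rightarrow q})\concat\cdots$); and appending a tail $t_\pi$ after $\gamma_z^{\boundGamma}$ reintroduces exactly the kind of junction you would then have to argue about separately.
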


The intuition behind \cref{lemma:constrflat} is the same as the intuition behind \cref{thm:constrflat}.
We note that the lemma requires the word $W$ to be non-empty. The case for when $W = \varepsilon$ has a~similar,
but simpler proof.

\smallskip

\tikzstyle{wordBNode} = [draw, scale=0.8, inner sep=1mm, text height=3mm, minimum height=7mm, baseline, fill=white, draw=black]

\newcommand{\prefl}[0]{u}
\vspace{-3mm}
\begin{proof}[Proof sketch]
    We assume the existence of a model $\sigma \models \varphi$, and we
    systematically explore the prefix tree $T_z$ of the variable $z$ in a
    breadth-first fashion up to the bound $\boundAut$, searching for the word
    $w_z$. When inspecting any prefix $\prefl$, we check whether $\sigma \variant \{
    z \mapsto \prefl \}$ is a model of $\varphi[z/z\sep]$, and if not we mark the
    last vertex corresponding to $\prefl$ as a dead end and we do not explore it
    further. At the end of the exploration, we inspect the set
    $\mathcal{P}_{\ge \boundAut}$ of all $\boundAut$-reaching paths in $T_z$.
    If there are no such paths, we know that $|\sigma(z)| < \boundAut$, and
    hence, $w_z$ can be found in the finite (flat) language $
    \{ w \in \mathcal{L}_z \mid |w| < \boundAut \}$. Alternatively, we check
    for every path $\pi$ in $\mathcal{P}_{\ge \boundAut}$ whether $\sigma \variant \{z
    \mapsto \gammaPref{z}^{\boundGamma}(\prefl_\pi) \} \models \varphi[z/z\sep]$
    where $\prefl_\pi$ is the prefix corresponding to $\pi$. Since, the number of
    possible $\boundAut$-reaching paths is finite, we have that $w_z$ can be found in the flat language
    $\{ \gammaPref{z}^{\boundGamma}(p_\pi) \mid \pi \in \mathcal{P}_{\ge \boundAut}\}$
    in the case that $\gammaPref{z}$-expansion of some $\prefl_\pi$ leads to a~model of $\varphi[z/z\sep]$.

    Next, it might be that none of the paths in $\mathcal{P}_{\ge \boundAut}$ can be $\gammaPref{z}$-expanded
    into a model.
    Let $x$ be the rightmost variable in $\mathcal{N}$ with $\primBase(x) = \{ \alpha \}$.
    Recall that none of the paths in $\mathcal{P}_{\ge \boundAut}$ contain
    a dead-end, and that all variables in $\mathcal{N}$ are flat, and, thus, they have assigned long words.
    Combined with the properties of $\gammaPref{z}$-expansion, we
    know the reason why $\sigma' \triangleq \sigma \variant \{z \mapsto \gammaPref{z}^{\boundGamma}(p_\pi) \}$
    fails to be a model, i.e., we almost accurately know the position of $\sigma'(\needle)$ in
    $\sigma'(\haystack)$. We show that all paths in $\mathcal{P}_{\ge \boundAut}$ share
    the same prefix of the form $s \alpha^M p$ for some large $k \in \naturals$
    and $s \in \suffixesOf{\alpha}$ and $p \in \prefixesOf{\alpha}$.
    Since $z$ is non-flat, and $\boundAut$ is larger than the number of states of $\aut_z$,
    we have opportunities to diverge from the shared prefix $s \alpha^M p$ in $T_z$.
    We show that diverging must immediately lead to a dead-end vertex, and in such a case $\sigma(z)$ has a prefix $s \alpha^M p W$.
    Hence, we apply \cref{lemma:conflictKaboom} and obtain that $\sigma \variant \{z \mapsto w_{z,M} \}$
    is a model of $\varphi[z/z\sep]$ where $w_{z, M} = \gammaPref{z}^{\boundGamma}(s \alpha^M p W)$. Note that $w_{z, M}$ depends
    on an unknown integer $M$, however, the language containing all $w_{z, M}$s for every possible choice of $k$ is flat.
    Alternatively, not-diverging from the path implies that $\sigma(z) \in s \alpha^* p'$ for some $p' \in \prefixesOf{\alpha}$,
    which is again a flat language.
\end{proof}

\vspace{-2mm}
A careful analysis of the proof of \cref{lemma:constrflat} reveals that the
lemma, and, therefore, \cref{thm:constrflat}, is not effective in a sense that
one cannot directly construct~$\langflatunderapp_z$. However, we can obtain a
decision procedure at the cost of a producing larger flat language.
We construct $T_z$ and all paths up to the bound $\boundAut$ without having
$\sigma$ available, losing the ability to mark dead-end vertices. In the
resulting flat language $\langflatunderapp_z$ we include all words shorter than
$\boundAut$, and $\gammaPref{z}$-expansions of all $\boundAut$-reaching paths.
The remaining parts of $\langflatunderapp_z$ that correspond to the situation
when no $\boundAut$-reaching paths can be $\gammaPref{z}$-expanded into a model
can be computed from $\aut_z$ without requiring access to the original model $\sigma$.
For details, we refer the reader to the full proof of \cref{lemma:constrflat} in
\ifTR{}\cref{sec:proofsSingleSided}\else{}\cite{techrep}\fi{}.

\SetKwComment{tcc}{$\triangleright~$}{}
\SetKwComment{tcp}{$\triangleright~$}{}

\vspace{-2.0mm}
\section{Decision Procedure}\label{sec:label}
\vspace{-1.0mm}

Finally, we summarize the approach described in previous sections
into a decision procedure for $\notcontains$. The (nondeterministic)
algorithm is shown in \cref{alg:decproc}. 
%
In~the algorithm, for a negated containment $\varphi$ and a (partial) assignment~$\sigma$, 
we use~$\sigma(\varphi)$ to denote the $\notcontains$ predicate obtained from $\varphi$ replacing 
variables whose assignment is defined with the corresponding assignment.

\newcommand\lPrefLang[0]{\lang^\mathrm{Pref}_x}
\newcommand\lPrefPrefixes[0]{\mathrm{P}_{\mathrm{inc}}}
\newcommand\lPrefWords[0]{\mathrm{P}_{w}}

\newcommand\lSufLang[0]{\lang^\mathrm{Suf}_x}
\newcommand\lSufPrefixes[0]{\mathrm{S}_{\mathrm{inc}}}
\newcommand\lSufWords[0]{\mathrm{S}_{w}}

\begin{algorithm}[t]\small
    \SetAlgoLined
    \KwIn{$\varphi \defeq \notcontains(\needle, \haystack) \land \langconstr$}
    \KwOut{Satisfiability of $\varphi$}
     
    \smallskip
    Normalize $\varphi$ into $\bigvee_i \varphi_i$ and \textbf{Guess} $\varphi_i$ \tcc*[r]{\cref{sec:normalization}}
    \lIf(\tcc*[f]{\cref{sec:easy}}){$\varphi_i$ is easy}{\Return
      $\mathit{solve}(\varphi_i)$}
    Remove all two-sided $x \in \vars \setminus \flatVars$ from $\varphi_i$ \tcc*[r]{\cref{lemma:twoSidedRemoval}}
    \lIf(\tcc*[f]{\cref{sec:tools}}){$\varphi_i$ is easy}{\Return
      $\mathit{solve}(\varphi_i)$}
    \textbf{Guess} $P \subseteq \flatVars$ \tcc*[r]{vars.\ with assignments longer than $\boundFlat$}
    \textbf{Guess} assignment $\sigma$ s.t. $\forall x \in \flatVars\setminus P\colon |\sigma(x)| < \boundFlat$\;
    Let $\varphi' := \sigma(\varphi_i) = \notcontains (\needle', \haystack')$\; 

    \ForEach{$x \in (\vars \setminus \flatVars) \cap \varsOf{\haystack'}$}{
        Constrain prefix/suffix trees $T_x$/$S_x$ up to bound $\boundAut$\;
        Constrain prefix language $\lPrefLang$ \tcc*[r]{\cref{lemma:constrflat}}
        Constrain suffix language $\lSufLang$ \tcc*[r]{symmetric to \cref{lemma:constrflat}}
        Construct flat $\lang_x' := \glue(\lPrefLang, \lSufLang)$\;
    }
    \Return $\mathit{solve}(\varphi' \land \{ x \mapsto \lang_x' \mid x \in \vars\setminus\flatVars \})$ \tcc*[r]{\cref{lemma:liaDecidability}}
    \caption{Decision Procedure for $\notcontains$}
    \label{alg:decproc}
\end{algorithm}

The set $\lPrefLang$ ($\lSufLang$) contains prefixes (suffixes) of words from
$\lang_z$ that might be used to find an alternative model. The
$\glue(\lPrefLang, \lSufLang)$ procedure glues together prefixes and suffixes,
resulting in a language consisting of entire words (not just prefixes or suffixes) from~$\lang_z$.
The procedure partitions the language $\lang^\mathrm{Pref}_x$ into $\lang^\mathrm{Pref}_x = \lPrefWords \cup \lPrefPrefixes$
such that $\lPrefPrefixes$ consists of words resulting from an application of $\gammaPref{z}$-expansion. 
Intuitively, the words in $\lPrefWords$ are words from $\lang_x$
whereas~$\lPrefPrefixes$ are only prefixes that need to be completed
into full words from $\lang_x$ by concatenating suitable suffixes.
We decompose $\lang^{\mathrm{Suf}}_x$ in the same way into
$\lang^{\mathrm{Suf}}_x = \lSufWords \cup \lSufPrefixes$.
The procedure then returns
    $ \lang'_x = \lPrefWords \cup \lSufWords \cup
    \{ p \gamma^{\boundGamma} s \mid (p \gamma^{\boundGamma}, \gamma^{\boundGamma} s) \in \lPrefPrefixes \times \lSufPrefixes \}. $

\begin{theorem}[Soundness]
    If \cref{alg:decproc} terminates with an assignment $\sigma$, then $\sigma \models \varphi$.
\end{theorem}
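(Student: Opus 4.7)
The plan is to verify soundness by unwinding \cref{alg:decproc} from its terminal solve call back to the original $\varphi$, showing that each transformation it performs can be reversed on the returned model. Since soundness is the easier direction --- we only need to witness that the returned assignment works --- the argument is essentially careful bookkeeping across the sequence of reductions.

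First I would appeal to \cref{lemma:liaDecidability} to conclude that the assignment $\hat\sigma$ produced by the final call $\mathit{solve}(\varphi' \land \{x \mapsto \lang_x'\})$ satisfies $\varphi'$ together with the flat membership constraints $\hat\sigma(x) \in \lang_x'$ for each remaining non-flat $x$. Then I would observe that $\lang_x' \subseteq \lang_x$ by construction of $\glue$: the words in $\lPrefWords \cup \lSufWords$ are drawn directly from $\lang_x$, while each glued word of the form $p \gamma^{\boundGamma} s$ is a $\Gamma_x$-expansion and hence a member of $\lang_x$ by the definition of $\Gamma$-expansion. Consequently $\hat\sigma$ already satisfies the original membership constraints of $\varphi'$.

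Next I would compose $\hat\sigma$ with the partial assignment $\sigma$ guessed in lines~5--7 to obtain an assignment $\tilde\sigma$ covering every variable of $\varphi_i$ that survives the two-sided-removal step. Because $\varphi'$ was defined as $\sigma(\varphi_i)$, this composition yields a model of $\varphi_i$ after each two-sided non-flat variable $z$ has been replaced by $\sep$. I would then invoke \cref{lemma:twoSidedRemoval} once per such $z$, each time exhibiting a witness word $w_z = p \gamma s$ built exactly as in its proof (with a sufficiently large iteration count $r$), and extending $\tilde\sigma$ by $\{z \mapsto w_z\}$ to obtain a model of the pre-removal $\varphi_i$. Finally, \cref{lem:normalization} lifts this model of the chosen disjunct $\varphi_i$ to a model of the original $\varphi$.

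The main subtlety will be verifying that the side conditions of these reconstruction steps really hold on $\hat\sigma$: that $|\hat\sigma(x)| \ge \boundFlat$ for every $x \in P$ (so that the $\glue$ composition corresponds to a valid combination justified by \cref{lemma:sewing}), that the witness $w_z$ produced for each two-sided non-flat $z$ genuinely lies in $\lang_z$ (which follows from \cref{lem:nonflatCharacterization} being used to pick the words $u$ and $v$ defining $w_z$), and that the easy-fragment branches of lines~2 and~4 indeed return models of their respective current $\varphi_i$ by the correctness of $\mathit{solve}$ on easy instances. Once these routine checks are discharged, the composite assignment is seen to satisfy both $\langconstr$ and the $\notcontains$ predicate of $\varphi$, establishing soundness.
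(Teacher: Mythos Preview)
Your proposal is correct and, at its core, rests on the same observation as the paper's one-line proof: $\lang'_x \subseteq \lang_x$ for each non-flat $x$ in $\haystack'$, so any model returned by the final $\mathit{solve}$ call already respects the original language constraints. You additionally spell out the reverse translations through the earlier algorithm steps (partial-assignment instantiation, two-sided removal via \cref{lemma:twoSidedRemoval}, normalization via \cref{lem:normalization}), which the paper leaves implicit.

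One minor point: your appeal to \cref{lemma:sewing} and the side condition $|\hat\sigma(x)| \ge \boundFlat$ are unnecessary here. Those ingredients are needed for \emph{completeness}, to argue that restricting to $\lang'_x$ loses no models; for soundness the containment $\lang'_x \subseteq \lang_x$ alone suffices, and that containment follows directly from the definition of $\Gamma$-expansion (every word produced by $\glue$ is either already in $\lang_x$ or is a $\Gamma_x$-expansion, hence in $\lang_x$ by construction).
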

\begin{proof}
    Follows from the fact that $\mathcal{L}'_x \subseteq \mathcal{L}_x$ for every non-flat variable $x$ found in $\mathcal{H}'$.
\end{proof}

\begin{theorem}[Completeness]
    If $\varphi$ is satisfiable, then \cref{alg:decproc} terminates with an~assignment $\sigma$ such that $\sigma \models \varphi$.
    Otherwise \cref{alg:decproc} terminates with the answer $\mathsf{UNSAT}$.
\end{theorem}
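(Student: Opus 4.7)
The plan is to trace a hypothetical model through each step of the algorithm, showing that for a satisfiable $\varphi$ there exists a sequence of nondeterministic choices leading to the final call to $\mathit{solve}$ on an instance whose satisfiability is detected by \cref{lemma:liaDecidability}. For the negative direction, since $\mathcal{L}'_x \subseteq \mathcal{L}_x$ for every non-flat variable (soundness), no branch can return a spurious model; therefore if $\varphi$ is unsatisfiable every branch terminates with $\mathsf{UNSAT}$. Termination of each branch follows from the fact that every guess ranges over a finite set (the normalization produces finitely many disjuncts, $P \subseteq \flatVars$ is finite, short assignments lie in a finite set of words, and prefix/suffix trees are explored only up to the bound $\boundAut$), and the final $\mathit{solve}$ terminates by \cref{lemma:liaDecidability}.

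First I would carry the model through the initial preprocessing. Given $\sigma \models \varphi$, \cref{lem:normalization} yields an equisatisfiable disjunction and I pick the normalized disjunct $\varphi_i$ satisfied by the corresponding restriction of $\sigma$. If $\varphi_i$ falls in an easy fragment, soundness and completeness of the easy-case solver yield the claim directly. Otherwise, \cref{lemma:twoSidedRemoval} is applied repeatedly to eliminate two-sided non-flat variables, producing an equisatisfiable instance; a model of the reduced instance exists by the ``if'' direction of that lemma. At this point either all remaining variables are flat (and \cref{lemma:liaDecidability} finishes the job) or $\haystack$ contains only single-sided non-flat variables.

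Next I would argue correctness of the two guessing steps. Given the current model, the algorithm can guess $P$ to be exactly the set of flat variables receiving words of length at least $\boundFlat$, and guess $\sigma$ to coincide with the restriction of the model to $\flatVars \setminus P$ (a finite choice since each such word has length less than $\boundFlat$). By the $\sim_{\boundFlat}$-equivalence argument in \cref{sec:singleSided}, after substitution the resulting $\varphi' = \sigma(\varphi_i)$ is equisatisfiable to the original relative to this equivalence class and still admits a model assigning words of length at least $\boundFlat$ to every remaining flat variable in $\needle$.

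Finally I would apply the core reduction. For each non-flat variable $x$ in $\haystack'$, \cref{thm:constrflat} (together with its symmetric counterpart for suffixes) yields a flat underapproximation $\mathcal{L}'_x \subseteq \mathcal{L}_x$ such that, if $\varphi'$ has a model, then there is a model in which $x$ receives a value in $\mathcal{L}'_x$. The $\glue$ procedure combines prefix and suffix components via $\Gamma_x$-expansion and, by \cref{lemma:sewing}, the resulting concatenations lie in $\mathcal{L}_x$ and witness a model of $\varphi'$. After all non-flat variables have been replaced by flat underapproximations, only flat variables remain, so \cref{lemma:liaDecidability} decides the residual instance. The main obstacle is that \cref{thm:constrflat} is stated non-effectively: its proof uses the model to mark dead-end vertices in the prefix/suffix trees. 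I would address this exactly as sketched after \cref{lemma:constrflat}: construct a superset of the model-specific $\langflatunderapp_z$ by including all short words, $\gammaPref{z}$-expansions of all $\boundAut$-reaching paths (with no pruning), and the extra paths corresponding to the ``cannot diverge'' analysis, all computable from $\aut_z$ alone. Because this superset still contains every witness guaranteed by \cref{thm:constrflat}, completeness is preserved, while soundness is maintained because the superset remains a subset of $\mathcal{L}_x$.
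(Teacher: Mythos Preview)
Your proposal is correct and follows essentially the same approach as the paper, invoking the same three key results (\cref{lemma:twoSidedRemoval}, \cref{lemma:constrflat}, and \cref{lemma:sewing}) in the same roles. The paper's own proof is a three-line sketch that merely cites these lemmas, whereas you expand this into a full trace through the algorithm, additionally making explicit the termination argument, the $\mathsf{UNSAT}$ direction, and the handling of the non-effectiveness of \cref{thm:constrflat}; all of this is consistent with what the paper leaves implicit.
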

\begin{proof}
    Correctness for two-sided non-flat variables follows from \cref{lemma:twoSidedRemoval}. For remaining non-flat variables, correctness
    follows from \cref{lemma:constrflat}. Finally, correctness of the $\glue$ procedure follows from \cref{lemma:sewing}. 
\end{proof}

\vspace{-4.0mm}
\begin{theorem}\label{thm:mainComplexity}
A constraint $\notcontains(\needle, \haystack) \land \langconstr$ is decidable
in~$\clExpSpace$.
\end{theorem}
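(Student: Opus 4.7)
The plan is to analyze the complexity of \cref{alg:decproc} stage by stage, showing that every intermediate object stays within exponential size in the input constraint's size $n$, and that the final invocation of \cref{lemma:liaDecidability} fits within the \clExpSpace{} budget.

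First, I would bound the preprocessing steps. Normalization via \cref{lem:normalization} decomposes the input DFAs into single-initial/single-final/single-SCC components, yielding at most exponentially many disjuncts, each of size polynomial in the corresponding DFA; a nondeterministic choice selects one representative $\varphi_i$ of size $N = 2^{n^{\bigO(1)}}$. Two-sided non-flat variable removal (\cref{lemma:twoSidedRemoval}) is a polynomial substitution, and guessing the set $P \subseteq \flatVars$ together with an assignment $\sigma$ to variables in $\flatVars \setminus P$ has size $\bigO(|\vars| \cdot \boundFlat)$; since $\boundFlat = \bigO(\pMaxPrim \pMaxAut + \pMaxLit)$ is polynomial in $N$, the stored guess also has size polynomial in $N$.

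Next, I would bound the flat underapproximations $\mathcal{L}'_x = \glue(\lang^{\mathrm{Pref}}_x, \lang^{\mathrm{Suf}}_x)$. By inspecting the proof of \cref{lemma:constrflat}, $\lang^{\mathrm{Pref}}_x$ is a union of (i) words of length below $\boundAut$, (ii) the $\gammaPref{z}$-expansion of every $\boundAut$-reaching path in the prefix tree $T_x$, and (iii) additional flat components arising from the ``diverge-into-dead-end'' analysis on the base $\alpha$ of the rightmost flat variable in $\needle$. Although $T_x$ can have exponentially many $\boundAut$-reaching paths, all of them share the structure of $\aut_x$ and admit a joint NFA representation of size polynomial in $N$. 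A symmetric argument yields $\lang^{\mathrm{Suf}}_x$, and $\glue$ preserves these bounds.

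Finally, \cref{lemma:liaDecidability} reduces the residual all-flat constraint to an equisatisfiable Presburger arithmetic formula polynomial in its (polynomial-in-$N$) input, hence of size $2^{n^{\bigO(1)}}$. Invoking the known \clExpSpace{} upper bound for Presburger satisfiability gives a \clExpSpace{} procedure in $n$ for the final decision. The combined nondeterminism from the earlier guesses is absorbed via Savitch's theorem ($\mathrm{NEXPSPACE} = \clExpSpace$), producing the claimed bound.

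The main obstacle will be a careful bookkeeping argument that $\mathcal{L}'_x$ admits a polynomial-in-$N$ (hence exponential-in-$n$) representation despite $T_x$ potentially having exponentially many $\boundAut$-reaching paths; the representation must compactly encode the $\gammaPref{z}$/$\gammaSuf{z}$-expansions together with the additional dead-end-driven flat components from the proof of \cref{lemma:constrflat}. A secondary subtlety is invoking the \clExpSpace{} bound for Presburger (rather than the \clNExpTime{} bound stated in \cref{lemma:liaDecidability}) when composing with the exponential preprocessing blow-up, so as to avoid a spurious doubly exponential overall cost.
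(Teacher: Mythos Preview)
Your bookkeeping contains two slips. First, each normalized disjunct $\varphi_i$ has size \emph{polynomial} in $n$, not $2^{n^{\bigO(1)}}$: normalization replaces each original variable by a term whose length is bounded by the state count of its DFA, and the freshly introduced variables are already decomposed and are never re-expanded; only the \emph{number} of disjuncts is exponential. Second, even granting an exponential-size Presburger formula, running an \clExpSpace{} decision procedure on an input of size $2^{n^{\bigO(1)}}$ costs doubly exponential space in~$n$, not \clExpSpace{}; to land in $\clNExpTime \subseteq \clExpSpace$ from an exponential-size formula you would need the \clNP{} bound for the \emph{existential} fragment of Presburger, not the general \clExpSpace{} bound. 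With the first slip corrected this detour is unnecessary anyway: \cref{lemma:liaDecidability} applied to a polynomial-size all-flat instance already gives \clNExpTime{} directly, provided your claimed polynomial bound on~$\mathcal{L}'_x$ (which the paper also asserts) holds.

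More substantively, your argument and the paper's diverge on \emph{where} the step from \clNExpTime{} to \clExpSpace{} occurs. You attribute it to normalization blow-up composed with Presburger solving; the paper instead argues that the flat underapproximations are polynomial and \cref{lemma:liaDecidability} yields \clNExpTime{}, and identifies the \clExpSpace{} cost with \emph{reconstructing a full model} for the two-sided non-flat variables removed via \cref{lemma:twoSidedRemoval}. In that reconstruction, the word~$w_z$ built for each removed variable has length proportional to the current model, so undoing the removals one by one can double word lengths at each step, forcing exponential space just to write the model down. Your treatment of \cref{lemma:twoSidedRemoval} as ``a polynomial substitution'' is correct for the forward (equisatisfiability) direction but misses this backward-direction cost, which is the paper's stated reason for the \clExpSpace{} bound.
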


\begin{proof}[Proof sketch.]
Decidability follows from the analysis of the decision procedure in
\cref{alg:decproc} given above. As for $\clExpSpace$ membership, first notice
that languages of non-flat variables occurring only in $\mathcal{H}$ are
replaced with flat languages of polynomial size due to the bounds $\boundAut$
and $\boundGamma$. \cref{alg:decproc} then uses \cref{lemma:liaDecidability},
bringing the complexity of the procedure to $\clNExpTime$. However, obtaining
a full model that includes all two-sided non-flat variables brings the procedure
to $\clExpspace$ as the length of the word to assigned to a two-sided non-flat
variable doubles with each such a variable (cf. \cref{lemma:twoSidedRemoval}).
\end{proof}

\vspace{-4.0mm}
\subsection{Chain-Free Word Equations with $\notcontains$}\label{sec:label}
\vspace{-1.0mm}

After establishing the decidability of a~single $\notcontains$ predicate, we
immediately obtain decidability of string fragments that permit the so-called
\emph{monadic decomposition}~\cite{VeanesBNB17,ChenHLRW19}, i.e., expressing
the set of solutions as a~finite disjunction of regular membership constraints
$\bigwedge_{x\in\vars} x \in \lang_x$.
These include fragments such as the \emph{straight-line}
fragment~\cite{AnthonyTowards2016} or the more expressive \emph{chain-free}
fragment of word equations~\cite{ChainFree} (note that~\cite{ChainFree}
considers also other predicates).
We can therefore easily establish the following theorem.
\begin{theorem}\label{thm:chainFreeNotContains}
Formula $W \land \notcontains(\needle, \haystack) \land \langconstr$
  where~$W$ is a~conjunction of chain-free word equations is decidable.
\end{theorem}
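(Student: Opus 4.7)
The plan is to reduce the theorem to the already established decidability of a single $\notcontains$ predicate with regular membership constraints (\cref{thm:mainComplexity}). The key tool is the monadic decomposition property of chain-free word equations mentioned in the paragraph preceding the theorem: the set of solutions of a chain-free system $W$ over variables $\vars$ can be effectively represented as a finite disjunction
\[
  \bigvee_{i=1}^{k} \bigwedge_{x \in \vars} x \in \lang^i_x,
\]
where each $\lang^i_x$ is a regular language (and a corresponding NFA is effectively constructible from $W$). This is exactly the content of the monadic decomposition results from \cite{ChainFree,VeanesBNB17,ChenHLRW19} that the paper explicitly invokes.

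Given such a decomposition, I would rewrite the formula $W \land \notcontains(\needle,\haystack) \land \langconstr$ as the equisatisfiable disjunction
\[
  \bigvee_{i=1}^{k} \Bigl( \notcontains(\needle, \haystack) \land \bigwedge_{x \in \vars} x \in (\lang^i_x \cap \lang_x) \Bigr),
\]
using that regular languages are effectively closed under intersection, so each $\lang^i_x \cap \lang_x$ is again a regular language with a computable NFA. Observe that the $\notcontains$ predicate itself is untouched by the decomposition: it only depends on the values of the variables through~$\sigma$, and the decomposition simply refines the assignment constraints.

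Each disjunct is now of precisely the form handled by \cref{thm:mainComplexity}, namely a single $\notcontains$ predicate conjoined with purely regular membership constraints, and is therefore decidable. Since there are only finitely many disjuncts and each one is effectively constructible, deciding satisfiability of the whole formula reduces to iterating through the~$k$ disjuncts and invoking the decision procedure of \cref{alg:decproc} on each, answering $\sat$ iff at least one disjunct is satisfiable. The only non-routine ingredient is the appeal to the monadic decomposition of chain-free equations, which is an external result; once this is cited, the remainder is a direct composition of decision procedures and the theorem follows immediately.
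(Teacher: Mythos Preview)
Your proposal is correct and follows exactly the approach the paper indicates: use the monadic decomposition of chain-free word equations into a finite disjunction of regular membership constraints, intersect with~$\langconstr$, and apply \cref{thm:mainComplexity} to each disjunct. The paper does not give a more detailed proof than this outline, so your write-up is already at least as explicit as the original.
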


\vspace{-4.0mm}
\section{Future Work}\label{sec:label}
\vspace{-3.0mm}

This paper shows that chain-free word equations with regular constraints and
a~single instance of the $\notcontains$ predicate are decidable.
There are several possible future work directions.
First, we wish to investigate the fragment where the number of $\notcontains$ constraints is not limited to a~single one.
Another direction is examining combinations of~$\notcontains$ with other
predicates, such as length constraints or disequalities.
The technique for combining these from~\cite{ChenHHHL25} based on the reduction
of the constraints to reasoning over Parikh images of finite automata is not
directly applicable here.
Also, the resulting complexity of our procedure is \clExpSpace, however, we have hints
that the problem might in fact be solvable in $\clNP$.

\newcommand{\ackPhdTalent}[0]{
\noindent
The work of Michal Hečko, a~Brno Ph.D.\ Talent Scholarship
\raisebox{-6pt}{\protect\includegraphics[height=17pt]{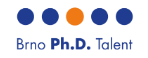}}
Holder, is funded by the Brno City Municipality.\xspace
}

\vspace{-4.0mm}
\section*{Acknowledgements}
\vspace{-3.0mm}
We thank the anonymous reviewers for careful reading of the paper and their
suggestions that greatly improved its quality.
This work was supported by 
the Czech Ministry of Education, Youth and Sports ERC.CZ project LL1908,
the Czech Science Foundation project 25-18318S, and
the FIT BUT internal project FIT-S-23-8151.
\ackPhdTalent

\bibliographystyle{plainurl}
\bibliography{literature}

\ifTR
\clearpage
\appendix

\crefalias{section}{appendix}

\tikzstyle{wordBNode} = [draw, scale=0.8, inner sep=1mm, text height=3mm, minimum height=7mm, baseline]


\vspace{-0.0mm}
\section{Proofs for \cref{sec:twoSided}}\label{sec:proofsTwoSided}
\vspace{-0.0mm}

\lemNormalization*

\begin{proof}
The transformation proceeds as follows:
\begin{enumerate}
\item  First, we decompose the language~$\lang_x$ of every variable~$x \in
  \vars$ into a~finite union
  \begin{equation}
    \bigcup_{i=1}^{N_x} w^x_{i,1} \concat \K_{i,2}^{x} \concat w^x_{i,3}
    \concat \K_{i,4}^{x} \concat w^x_{i,5} \concat \cdots \concat
    \K_{i,\ell_i-1}^{x} \concat w^x_{i,\ell_i}
  \end{equation}
  where every $w^x_{i,j}$ is a word over $\Sigma$ and every $\K_{i,j}^x
  \nsubseteq \{\epsilon\}$ is either 
  \begin{inparaenum}[(i)]
    \item a regular language represented by a DFA having single initial, single final state composed of a single nontrivial SCC in the case $x$ is non-flat, or
    \item a language of the form $w^*$ where $w\in\Sigma^+$ in the case $x$ is a flat variable.
  \end{inparaenum}
  The decomposition of non-flat variables can be done by, e.g., modifying the DFA of~$\lang_x$ into an NFA (but with all SCCs deterministic) created by copying SCCs 
  such that every (non-trivial) SCC has exactly one state that serves
  as the input and one as the output port of the SCC (accordingly also for the DFA's
  final states). The nondeterminism may occurr only in transitions connecting SCCs. The non-trivial SCCs in the obtained NFA
  would correspond to the~$\K$'s and the sequences of states connecting them
  to~$w$'s.
  The union would be over all paths from an initial to a final state in the
  SCC graph of the resulting NFA.
  
  

\item  Second, for every variable $x \in \vars$, we create the disjunction
  $\bigvee_{i=1}^{N_x} \notcontains(\needle^x_i, \haystack^x_i) \land
  \langconstrof{\lang^x_i}$
  such that every occurrence of~$x$ in~$\needle$ and~$\haystack$ is
  substituted by the term $w_{i,1}^x \concat X_{i,2} \concat w_{i,3}^x
  \concat \cdots X_{i,\ell_i-1} \concat w^x_{i, \ell_i}$
  where $X_{i,2}, \ldots, X_{i, \ell_i-1}$ are fresh variables whose
  languages in~$\lang_i^x$ are set to $\K_{i,2}^{x}, \ldots, \K_{i, \ell_i-1}^{x}$
  respectively.
  We remove~$x$ from~$\vars$ and proceed recursively for the remaining
  variables, obtaining the resulting disjunction of $\notcontains$
  constraints with all variables decomposed and no finite variable.
\item  Third, any $\notcontains$ in the resulting disjunction that does not
  include a~variable is evaluated and removed if the result
  is $\mathit{false}$.
  Otherwise, $\mathit{true}$ is returned.
\end{enumerate}
It is easy to see that the result is equisatisfiable to~$\varphi$.
\end{proof}


\corInfixFineWilf*

\begin{proof}
Consider an arbitrary overlap of $u$ and $v$ of the required size.
\begin{center}
\begin{tikzpicture}
    \node[wordBNode, minimum width=6mm, anchor=west] (a0) at (0, 0)     {$\cdots$};
    \node[wordBNode, minimum width=8mm, anchor=west] (a1) at (a0.east)  {$\alpha$};
    \node[wordBNode, minimum width=8mm, anchor=west] (a2) at (a1.east)  {$\alpha$};
    \node[wordBNode, minimum width=8mm, anchor=west] (a3) at (a2.east)  {$\alpha$};
    \node[wordBNode, minimum width=8mm, anchor=west] (a4) at (a3.east)  {$\alpha$};
    \node[wordBNode, minimum width=6mm, anchor=west] (a5) at (a4.east)  {$\cdots$};

    \node[wordBNode, minimum width=6mm, anchor=west]  (b0) at (0.2, -0.7)  {$\cdots$};
    \node[wordBNode, minimum width=10mm, anchor=west] (b1) at (b0.east)  {$\beta$};
    \node[wordBNode, minimum width=10mm, anchor=west] (b2) at (b1.east)  {$\beta$};
    \node[wordBNode, minimum width=10mm, anchor=west] (b3) at (b2.east)  {$\beta$};
    \node[wordBNode, minimum width=6mm, anchor=west]  (b4) at (b3.east)  {$\cdots$};

    \draw[red, thick] ($(a1.north)+(0.1, 0.1)$) rectangle ($(b3.south)+(0.1,-0.1)$);
\end{tikzpicture}
\end{center}
The subword $w_u$ of $u$ that belongs to this overlap can be written as $s \alpha^k p$
for some $k \in \naturals$, $s \in \suffixesOf{\alpha}$ and $p \in \prefixesOf{\alpha}$.
Applying \cref{lemma:primRotations}, we can rearrange $w_u$ into $w_u = (\alpha')^k p_\alpha'$
for some $\alpha' \in \primitiveWords$ and $p_\alpha' \in \prefixesOf{\alpha'}$. Applying the
same argument to the subword $w_v$ of $v$ corresponding to the overlap, we have
$w_v = (\beta')^l p'_\beta$ for $\beta' \in \primitiveWords$ and $p'_\beta \in \prefixesOf{\beta'}$.
Assume that there is no conflict between $w_u$ and $w_v$.
Invoking \cref{lemma:fineAndWilf}, we have that $\alpha' = \gamma^O$ and $\beta' = \gamma^P$
for some $O, P \ge 1$. Since $\beta'$ is primitive and $|\alpha'| > |\beta'|$, we have
that $O \ge 2$, a contradiction with $\alpha'$ being primitive.
\end{proof}

\lemAlign*
\begin{proof}
    Let $u = p \gamma s$ and $u' = p' \gamma' s'$ be two words for $p, p', s, s' \in \Sigma^*$
    and $\gamma = \gamma'$ such that $\gamma$ has an overlap with $\gamma'$ of size
    at least $(r+1)|\alpha|$.
    First, we show that any such an overlap of $\gamma$ and $\gamma'$
    implies the existence of a decomposition of $u$ and $u'$ into
    $u = p \gamma_p B^2 \gamma_s s$ and $u' = p' \gamma_p' B \gamma_s' s'$
    where $\gamma_p, \gamma_p' \in \prefixesOf{\gamma}$, $\gamma_s, \gamma_s' \in \suffixesOf{\gamma}$,
    $B \in \{\alpha, \beta \}$ 
    with $|\gamma_p| = k |\alpha|$ and $|\gamma_p'| = k' |\alpha|$
    for some $k, k' \in \naturals$
    such that $B^2$ has an overlap with $B$ of the size $|\alpha|$.
    Graphically, we want to show that whenever we consider 
    $(r+1)|\alpha|$-sized (or larger) overlaps of $\gamma$ with itself, we can
    observe the following situation:
    \begin{center}
    \begin{tikzpicture}
        \def\bBlockWidth{10}   
        \def\dotsBlockWidth{20}  

        \node[boundaryNode, wordScalingFactor, minimum width=\dotsBlockWidth mm, anchor=west]
            (l_block0) at (0, 0)  {$\dots$};
        \node[boundaryNode, wordScalingFactor, minimum width=\bBlockWidth mm, anchor=west]
            (l_block1) at (l_block0.east)  {$B$};
        \node[boundaryNode, wordScalingFactor, minimum width=\bBlockWidth mm, anchor=west]
            (l_block1) at (l_block1.east)  {$B$};
        \node[boundaryNode, wordScalingFactor, minimum width=\dotsBlockWidth mm, anchor=west]
            (l_block2) at (l_block1.east)  {$\dots$};

        \node[boundaryNode, wordScalingFactor, minimum width=\dotsBlockWidth mm, anchor=west]
            (l_block0) at (0.5, -0.7)  {$\dots$};
        \node[boundaryNode, wordScalingFactor, minimum width=\bBlockWidth mm, anchor=west]
            (l_block1) at (l_block0.east)  {$B$};
        \node[boundaryNode, wordScalingFactor, minimum width=\dotsBlockWidth mm, anchor=west]
            (l_block1) at (l_block1.east)  {$\dots$};
    \end{tikzpicture}
    \end{center}
    
    \tikzstyle{hl} = [fill=blue!20]
    \tikzstyle{hlr} = [fill=red!20]
    \tikzstyle{hlt} = [fill=teal!20]
    We show the existence the decomposition directly,
    starting with an overlap between $\gamma$ and $\gamma'$ of the size
    $|\gamma|$ and gradually decreasing its size down to $(r+1)|\alpha|$. We
    highlight overlapping parts of $\gamma$ that witness
    that the claim holds. Starting with an overlap of the size $|\gamma|$,
    we can find an overlap, e.g., in the $\beta^{2r}$ part of the word and the
    claim holds.
    \begin{center}
    \begin{tikzpicture}
        \def\kBlockWidth{10}
        \def\kkBlockWidth{20}
        \node[boundaryNode, wordScalingFactor, minimum width=\kBlockWidth mm, anchor=west]  (l_block0) at (0, 0)               {$\alpha^{r}$};
        \node[boundaryNode, wordScalingFactor, minimum width=\kBlockWidth mm, anchor=west]  (l_block1) at (l_block0.east)      {$\beta^{r}$};
        \node[boundaryNode, wordScalingFactor, minimum width=\kBlockWidth mm, anchor=west]  (l_block1) at (l_block1.east)      {$\alpha^{r}$};
        \node[boundaryNode, wordScalingFactor, minimum width=\kBlockWidth mm, anchor=west]  (l_block2) at (l_block1.east)      {$\beta^{r}$};
        \node[boundaryNode, wordScalingFactor, minimum width=\kkBlockWidth mm, anchor=west] (l_block3) at (l_block2.east)      {$\alpha^{2r}$};
        \node[boundaryNode, hl, wordScalingFactor, minimum width=\kkBlockWidth mm, anchor=west] (l_block4) at (l_block3.east)  {$\beta^{2r}$};

        \node[boundaryNode, wordScalingFactor, minimum width=\kBlockWidth mm, anchor=west]  (l_block0) at (0, -0.7)            {$\alpha^{r}$};
        \node[boundaryNode, wordScalingFactor, minimum width=\kBlockWidth mm, anchor=west]  (l_block1) at (l_block0.east)      {$\beta^{r}$};
        \node[boundaryNode, wordScalingFactor, minimum width=\kBlockWidth mm, anchor=west]  (l_block1) at (l_block1.east)      {$\alpha^{r}$};
        \node[boundaryNode, wordScalingFactor, minimum width=\kBlockWidth mm, anchor=west]  (l_block2) at (l_block1.east)      {$\beta^{r}$};
        \node[boundaryNode, wordScalingFactor, minimum width=\kkBlockWidth mm, anchor=west] (l_block3) at (l_block2.east)      {$\alpha^{2r}$};
        \node[boundaryNode, hl, wordScalingFactor, minimum width=\kkBlockWidth mm, anchor=west] (l_block4) at (l_block3.east)  {$\beta^{2r}$};
    \end{tikzpicture}
    \end{center}
    Decreasing the overlap to the size $|\gamma| - (2r-1)|\alpha| + 1$, we have that the overlap between the two occurrences of $\beta^{2r}$
    has the size $|\alpha|-1$ and
    we no longer can find a witness between the two occurrences of the factor $\beta^{2r}$.
    \begin{center}
    \begin{tikzpicture}
        \def\kBlockWidth{10}
        \def\kkBlockWidth{20}
        \node[boundaryNode, wordScalingFactor, minimum width=\kBlockWidth mm, anchor=west]  (l_block0) at (0, 0)           {$\alpha^{r}$};
        \node[boundaryNode, wordScalingFactor, minimum width=\kBlockWidth mm, anchor=west]  (l_block1) at (l_block0.east)  {$\beta^{r}$};
        \node[boundaryNode, wordScalingFactor, minimum width=\kBlockWidth mm, anchor=west]  (l_block1) at (l_block1.east)  {$\alpha^{r}$};
        \node[boundaryNode, wordScalingFactor, minimum width=\kBlockWidth mm, anchor=west]  (l_block2) at (l_block1.east)  {$\beta^{r}$};
        \node[boundaryNode, hl, wordScalingFactor, minimum width=\kkBlockWidth mm, anchor=west] (l_block3) at (l_block2.east)  {$\alpha^{2r}$};
        \node[boundaryNode, wordScalingFactor, minimum width=\kkBlockWidth mm, anchor=west] (l_block4) at (l_block3.east)  {$\beta^{2r}$};

        \node[boundaryNode, wordScalingFactor, minimum width=\kBlockWidth mm, anchor=west]  (r_block0) at (1.3, -0.7)        {$\alpha^{r}$};
        \node[boundaryNode, wordScalingFactor, minimum width=\kBlockWidth mm, anchor=west]  (r_block1) at (r_block0.east)  {$\beta^{r}$};
        \node[boundaryNode, hl, wordScalingFactor, minimum width=\kBlockWidth mm, anchor=west]  (r_block1) at (r_block1.east)  {$\alpha^{r}$};
        \node[boundaryNode, wordScalingFactor, minimum width=\kBlockWidth mm, anchor=west]  (r_block2) at (r_block1.east)  {$\beta^{r}$};
        \node[boundaryNode, wordScalingFactor, minimum width=\kkBlockWidth mm, anchor=west] (r_block3) at (r_block2.east)  {$\alpha^{2r}$};
        \node[boundaryNode, wordScalingFactor, minimum width=\kkBlockWidth mm, anchor=west] (r_block4) at (r_block3.east)  {$\beta^{2r}$};

        \coordinate (r0) at ($(l_block4.east)$);
        \coordinate (r1) at ($(r_block4.east)+(0, 0.7)$);
        \draw (r0) edge[<->] node[below,scale=0.6,xshift=1mm] {$(2r-1)|\alpha|+1$} (r1);
    \end{tikzpicture}
    \end{center}
    If we, however, focus on $\alpha^{2r}$ and $\alpha^{r}$ highlighted in blue above, we have that their common overlap has the size
    $2r|\alpha| + 2r|\alpha| + (2r - 1)|\alpha| + 1 - r|\alpha|
    - 2r|\alpha| - 2r|\alpha| = |\alpha|(r - 1) + 1$ (computed as the
    difference of the positions of the top left and bottom right blue
    borders).
    Since $r\ge2$, we have that the the overlap has the size at least
    $|\alpha|$ and the claim holds.

    \begin{center}
    \begin{tikzpicture}
        \def\kBlockWidth{25}
        \def\kkBlockWidth{50}
        \node[boundaryNode, wordScalingFactor, minimum width=\kkBlockWidth mm, anchor=west, hl]  (a0) at (0, 0)      {$\alpha^{2r}$};

        \node[boundaryNode, hl, wordScalingFactor, minimum width=\kBlockWidth mm, anchor=west]  (b0) at (-0.8, -0.7)        {$\alpha^{r}$};
        \node[boundaryNode, wordScalingFactor, minimum width=\kBlockWidth mm, anchor=west]  (b1) at (b0.east)    {$\beta^{r}$};
        \node[boundaryNode, wordScalingFactor, minimum width=\kkBlockWidth mm, anchor=west]  (b2) at (b1.east){$\alpha^{2r}$};
        
        \coordinate (r0) at ($(a0.south east)+(0, -1)$);
        \coordinate (r1) at ($(b1.south east)+(0, -0.3)$);
        \coordinate (r2) at ($(b0.south east)+(0, -0.3)$);
        \coordinate (r3) at ($(b0.south west)+(0, -0.3)$);
        \coordinate (r4) at ($(a0.south west)+(0, -1)$);
        
        \draw (r0) edge[<->] node[below,scale=0.6] {$|\alpha|-1$} (r1);
        \draw (r1) edge[<->] node[below,scale=0.6] {$r|\alpha|$} (r2);
    \end{tikzpicture}
    \end{center}

    Decreasing the overlap size further to $|\gamma| - |\alpha|(3r + 1) -1$, we can no longer find
    $B^2$ and $B$ between $\alpha^{2r}$ and $\alpha^r$ as above since their shared overlap has the size $|\alpha| - 1$.
    However, we can then find a witness of our claim between $\beta^{2r}$ and $\beta^r$.
    \begin{center}
    \begin{tikzpicture}
        \def\kBlockWidth{10}
        \def\kkBlockWidth{20}
        \node[boundaryNode, wordScalingFactor, minimum width=\kBlockWidth mm, anchor=west]  (l_block0) at (0, 0)           {$\alpha^{r}$};
        \node[boundaryNode, wordScalingFactor, minimum width=\kBlockWidth mm, anchor=west]  (l_block1) at (l_block0.east)  {$\beta^{r}$};
        \node[boundaryNode, wordScalingFactor, minimum width=\kBlockWidth mm, anchor=west]  (l_block1) at (l_block1.east)  {$\alpha^{r}$};
        \node[boundaryNode, wordScalingFactor, minimum width=\kBlockWidth mm, anchor=west]  (l_block2) at (l_block1.east)  {$\beta^{r}$};
        \node[boundaryNode, wordScalingFactor, minimum width=\kkBlockWidth mm, anchor=west] (l_block3) at (l_block2.east)  {$\alpha^{2r}$};
        \node[boundaryNode, hl, wordScalingFactor, minimum width=\kkBlockWidth mm, anchor=west] (l_block4) at (l_block3.east)  {$\beta^{2r}$};

        \node[boundaryNode, wordScalingFactor, minimum width=\kBlockWidth mm, anchor=west]  (l_block0) at (3.1, -0.7)        {$\alpha^{r}$};
        \node[boundaryNode, wordScalingFactor, minimum width=\kBlockWidth mm, anchor=west]  (l_block1) at (l_block0.east)  {$\beta^{r}$};
        \node[boundaryNode, wordScalingFactor, minimum width=\kBlockWidth mm, anchor=west]  (l_block1) at (l_block1.east)  {$\alpha^{r}$};
        \node[boundaryNode, hl, wordScalingFactor, minimum width=\kBlockWidth mm, anchor=west]  (l_block2) at (l_block1.east)  {$\beta^{r}$};
        \node[boundaryNode, wordScalingFactor, minimum width=\kkBlockWidth mm, anchor=west] (l_block3) at (l_block2.east)  {$\alpha^{2r}$};
        \node[boundaryNode, wordScalingFactor, minimum width=4 mm, anchor=west] (l_block4) at (l_block3.east)  {$\dots$};
    \end{tikzpicture}
    \end{center}
    We continue showing the existence of $B^2$ and $B$ sharing an~overlap of the size $|\alpha|$
    by alternating between examining overlaps of $\alpha^{2r}$ with $\alpha^r$
    and $\beta^{2r}$ with $\beta^{r}$. Eventually, we reach the overlap of the size $|\alpha|(r+1)$.

    For the purpose of reaching a contradiction, assume
    that there is $u = p \gamma s$ and $u' = p' \gamma' s'$ such that
    the overlap between $\gamma$ and $\gamma'$ is conflict-free
    and it has a size at least $(r+1)|\alpha|$.
    Let us denote by $O$ and $O'$ the factors of $\gamma$ and $\gamma'$,
    respectively, located at the shared positions in the overlap.

    As both $\alpha$ and $\beta$ are primitive, we can apply
    \cref{lemma:primitiveAlignment} to the above reasoning and conclude that
    the factor $B$ overlaps trivially with itself. In other words, when we have
    $B^2$ overlapping with $B$, then the overlap must look in the following way
    in order for the overlap of $B^2$ and $B$ to contain no conflict.
    \begin{center}
    \begin{tikzpicture}
        \def\bBlockWidth{10}   
        \def\dotsBlockWidth{20}  

        \node[boundaryNode, wordScalingFactor, minimum width=\dotsBlockWidth mm, anchor=west]
            (l_block0) at (0, 0)  {$\dots$};
        \node[boundaryNode, wordScalingFactor, minimum width=\bBlockWidth mm, anchor=west]
            (l_block1) at (l_block0.east)  {$B$};
        \node[boundaryNode, wordScalingFactor, minimum width=\bBlockWidth mm, anchor=west]
            (l_block1) at (l_block1.east)  {$B$};
        \node[boundaryNode, wordScalingFactor, minimum width=\dotsBlockWidth mm, anchor=west]
            (l_block2) at (l_block1.east)  {$\dots$};

        \node[boundaryNode, wordScalingFactor, minimum width=\dotsBlockWidth mm, anchor=west]
            (l_block0) at (0.0, -0.7)  {$\dots$};
        \node[boundaryNode, wordScalingFactor, minimum width=\bBlockWidth mm, anchor=west]
            (l_block1) at (l_block0.east)  {$B$};
        \node[boundaryNode, wordScalingFactor, minimum width=\dotsBlockWidth mm, anchor=west]
            (l_block2) at (l_block1.east)  {$\dots$};
        \node[anchor=west] at ($(l_block2.east)+(1.0, 0)$) {or};

        \node[boundaryNode, wordScalingFactor, minimum width=\dotsBlockWidth mm, anchor=west]
            (l_block0) at (0.8, -1.4)  {$\dots$};
        \node[boundaryNode, wordScalingFactor, minimum width=\bBlockWidth mm, anchor=west]
            (l_block1) at (l_block0.east)  {$B$};
        \node[boundaryNode, wordScalingFactor, minimum width=\dotsBlockWidth mm, anchor=west]
            (l_block2) at (l_block1.east)  {$\dots$};
    \end{tikzpicture}
    \end{center}

    Since $|\alpha| = |\beta|$, we have $|p| = \ell |\alpha|$ and
    $p' = \ell' |\alpha|$ for some $\ell, \ell' \in \naturals$. 
    Therefore, $|O| = |O'| = n |\alpha|$ for $n \in \naturals$, and
    the shared factors have the form $O = O' = C_1 \cdots C_n$ for $C_i \in \{\alpha, \beta \}$, $1 \le i \le n$.
    We introduce the notation $|O|_{\beta} = \{ C_i \mid C_i = \beta, 1 \le i \le |O|/|\alpha| \}$.
    We show that for any non-trivial overlap of $\gamma$ and $\gamma'$ of size at least $(r+1)|\alpha|$, 
    we have $|O|_\beta \neq |O'|_\beta$ except for the case when $|O| = \frac{1}{2} |\gamma|$, which we 
    handle separately.

    Our argument is direct and it is based on counting the number of
    $\alpha$'s and $\beta$'s in the overlap. Observe that when we write
    $\gamma$ as $\gamma = C_1 \cdots C_{8r}$ where $C_i \in \{\alpha, \beta\}$,
    $1 \le i \le 8r$, we have $|\gamma|_\alpha = |\gamma|_\beta$.
    Roughly speaking, the number of $\alpha$'s and the number of $\beta$'s in $\gamma$ are the same.

    \begin{center}
    \begin{tikzpicture}
        \def\kBlockWidth{10}
        \def\kkBlockWidth{20}
        \node[boundaryNode, wordScalingFactor, minimum width=\kBlockWidth mm, anchor=west]  (l_block0) at (0, 0)           {$\alpha^{r}$};
        \node[boundaryNode, wordScalingFactor, minimum width=\kBlockWidth mm, anchor=west]  (l_block1) at (l_block0.east)  {$\beta^{r}$};
        \node[boundaryNode, wordScalingFactor, minimum width=\kBlockWidth mm, anchor=west]  (l_block1) at (l_block1.east)  {$\alpha^{r}$};
        \node[boundaryNode, wordScalingFactor, minimum width=\kBlockWidth mm, anchor=west]  (l_block2) at (l_block1.east)  {$\beta^{r}$};
        \node[boundaryNode, wordScalingFactor, minimum width=\kkBlockWidth mm, anchor=west] (l_block3) at (l_block2.east)  {$\alpha^{2r}$};
        \node[boundaryNode, wordScalingFactor, minimum width=\kkBlockWidth mm, anchor=west] (l_block4) at (l_block3.east)  {$\beta^{2r}$};

        \node[boundaryNode, wordScalingFactor, minimum width=\kBlockWidth mm, anchor=west]  (r_block0) at (0.3, -0.7)        {$\alpha^{r}$};
        \node[boundaryNode, wordScalingFactor, minimum width=\kBlockWidth mm, anchor=west]  (r_block1) at (r_block0.east)  {$\beta^{r}$};
        \node[boundaryNode, wordScalingFactor, minimum width=\kBlockWidth mm, anchor=west]  (r_block1) at (r_block1.east)  {$\alpha^{r}$};
        \node[boundaryNode, wordScalingFactor, minimum width=\kBlockWidth mm, anchor=west]  (r_block2) at (r_block1.east)  {$\beta^{r}$};
        \node[boundaryNode, wordScalingFactor, minimum width=\kkBlockWidth mm, anchor=west] (r_block3) at (r_block2.east)  {$\alpha^{2r}$};
        \node[boundaryNode, hl, wordScalingFactor, minimum width=\kkBlockWidth mm, anchor=west] (r_block4) at (r_block3.east)  {$\beta^{2r}$};

        \coordinate (r0) at ($(l_block4.east)$);
        \coordinate (r1) at ($(r_block4.east)+(0, 0.7)$);
        \draw (r0) edge[<->] node[above,scale=0.6] {$|\alpha|$} (r1);
    \end{tikzpicture}
    \end{center}
    Starting with the largest non-trivial overlap shown above, we have $|O|_{\beta} > |O'|_{\beta}$.
    As we decrease the size of the overlap by multiples of $|\alpha|$, we see that $|O|_{\beta} > |O'|_{\beta}$
    until we reach an overlap of size $|O| = \frac{1}{2}|\gamma|$ in which case we have $O \neq O'$.
    \begin{center}
    \begin{tikzpicture}
        \def\kBlockWidth{10}
        \def\kkBlockWidth{20}
        \node[boundaryNode, wordScalingFactor, minimum width=\kBlockWidth mm, anchor=west]  (l_block0) at (0, 0)           {$\alpha^{r}$};
        \node[boundaryNode, wordScalingFactor, minimum width=\kBlockWidth mm, anchor=west]  (l_block1) at (l_block0.east)  {$\beta^{r}$};
        \node[boundaryNode, wordScalingFactor, minimum width=\kBlockWidth mm, anchor=west]  (l_block1) at (l_block1.east)  {$\alpha^{r}$};
        \node[boundaryNode, wordScalingFactor, minimum width=\kBlockWidth mm, anchor=west]  (l_block2) at (l_block1.east)  {$\beta^{r}$};
        \node[boundaryNode, wordScalingFactor, minimum width=\kkBlockWidth mm, anchor=west] (l_block3) at (l_block2.east)  {$\alpha^{2r}$};
        \node[boundaryNode, wordScalingFactor, minimum width=\kkBlockWidth mm, anchor=west] (l_block4) at (l_block3.east)  {$\beta^{2r}$};

        \node[boundaryNode, wordScalingFactor, minimum width=\kBlockWidth mm, anchor=west]  (r_block0) at (3.25, -0.7)        {$\alpha^{r}$};
        \node[boundaryNode, wordScalingFactor, minimum width=\kBlockWidth mm, anchor=west]  (r_block1) at (r_block0.east)  {$\beta^{r}$};
        \node[boundaryNode, wordScalingFactor, minimum width=\kBlockWidth mm, anchor=west]  (r_block1) at (r_block1.east)  {$\alpha^{r}$};
        \node[boundaryNode, wordScalingFactor, minimum width=\kBlockWidth mm, anchor=west]  (r_block2) at (r_block1.east)  {$\beta^{r}$};
        \node[boundaryNode, wordScalingFactor, minimum width=\kkBlockWidth mm, anchor=west] (r_block3) at (r_block2.east)  {$\alpha^{2r}$};
        \node[boundaryNode, wordScalingFactor, minimum width=\kkBlockWidth mm, anchor=west] (r_block4) at (r_block3.east)  {$\beta^{2r}$};
    \end{tikzpicture}
    \end{center}
    The rest of the argument proof follows by symmetry as if there is an
    overlap such that $(r+1)|\alpha| \le |O| < \frac{1}{2}|\gamma|$ with
    $|O|_\beta = |O'|_\beta$, then the factors of $\gamma$ and $\gamma'$ that are
    not part of the overlap would also contain the same number of $\beta$'s.
\end{proof}

\vspace{-0.0mm}
\section{Proofs for \cref{sec:tools}}\label{sec:proofsTools}
\vspace{-0.0mm}

\lemGammaUseful*

\begin{proof}
First, we show that any overlap between $\sigma(z)$ and $\sigma(\needle)$ of the size at least $N$
contains an overlap of the size at least~$2 M_\alpha$ between~$\gamma^K_z$ and~$\sigma(x)$ for some flat variable $x \in \flatVars$.

There are two cases to consider based on whether there is a~literal $W$ in $\needle$
surrounded by variables sharing an overlap with $\gamma^K_z$ of the
size~$|W|$.
If there is such a literal, then let us decompose $\needle$ into $\needle = w x W y w'$ for some
words $w, w' \in (\vars \cup \alphabet)^*$ and $x, y \in \flatVars$.
The size of the overlap between $\sigma(x)$ and $\gamma^K_z$ 
together with the size of the overlap between~$\sigma(y)$ and~$\gamma^K_z$ 
is
\begin{center}
\begin{tikzpicture}
    \def\alphaXWidth{8}
    \def\alphaYWidth{8}

    \node[wordBNode, anchor=west, minimum width=11mm] (a0)  at (0, 0)    {$\dots$};
    \node[wordBNode, anchor=west, minimum width=8mm] (a1)  at (a0.east) {$\gamma_z$};
    \node[wordBNode, anchor=west, minimum width=8mm] (a2)  at (a1.east) {$\gamma_z$};
    \node[wordBNode, anchor=west, minimum width=8mm] (a3)  at (a2.east) {$\gamma_z$};
    \node[wordBNode, anchor=west, minimum width=10mm] (a4)  at (a3.east) {$\dots$};
    \node[wordBNode, anchor=west, minimum width=8mm] (a5)  at (a4.east) {$\gamma_z$};
    \node[wordBNode, anchor=west, minimum width=8mm] (a6)  at (a5.east) {$\gamma_z$};
    \node[wordBNode, anchor=west, minimum width=8mm] (a7)  at (a6.east) {$\gamma_z$};
    \node[wordBNode, anchor=west, minimum width=11mm] (a8)  at (a7.east) {$\dots$};

    \node[fill=white, wordBNode, anchor=west, minimum width=11 mm] (b0) at (-0.7, -0.8) {$ \dots $};
    \node[fill=white, wordBNode, anchor=west, minimum width=\alphaXWidth mm] (b1) at (b0.east) {$ \alpha_x $};
    \node[fill=white, wordBNode, anchor=west, minimum width=\alphaXWidth mm] (b2) at (b1.east) {$ \alpha_x $};
    \node[fill=white, wordBNode, anchor=west, minimum width=\alphaXWidth mm] (b3) at (b2.east) {$ \alpha_x $};
    \node[fill=white, wordBNode, anchor=west, minimum width=20 mm]           (b4) at (b3.east) {$ W $};
    \node[fill=white, wordBNode, anchor=west, minimum width=\alphaYWidth mm] (b5) at (b4.east) {$ \alpha_y $};
    \node[fill=white, wordBNode, anchor=west, minimum width=\alphaYWidth mm] (b6) at (b5.east) {$ \alpha_y $};
    \node[fill=white, wordBNode, anchor=west, minimum width=\alphaYWidth mm] (b7) at (b6.east) {$ \alpha_y $};
    \node[fill=white, wordBNode, anchor=west, minimum width=11 mm] (b8) at (b7.east) {$ \dots $};
    
    \coordinate (c0) at ($(a0.south east)+(0, -1.3 )$);
    \coordinate (c1) at ($(a8.south west)+(0, -1.3 )$);
    \draw[teal] (c0) edge[<->] node[below,scale=0.8,yshift=1mm] {$K|\gamma_z|$} (c1);

    \coordinate (c2) at ($(b4.south west)+(0, -0.35)$);
    \coordinate (c3) at ($(b4.south east)+(0, -0.35)$);
    \draw[blue] (c2) edge[<->] node[above,scale=0.8,yshift=-1mm] {$|W|$} (c3);
    
    \begin{pgfonlayer}{bg}
        \draw[dotted, thick, blue] (b4.south west) -- (c2);
        \draw[dotted, thick, blue] (b4.south east) -- (c3);

        \draw[dotted, thick, teal] (a0.south east) -- (c0);
        \draw[dotted, thick, teal] (a8.south west) -- (c1);

        \draw[very thick, gray] ($(a1.north west)+(-0.01, 0.01)$) rectangle ($(a7.south east)+( 0.01,-0.01)$);
        \node[anchor=south, scale=0.8] at (a4.north) {$\gamma^K_z$};
    \end{pgfonlayer}
\end{tikzpicture} 
\end{center}
\begin{equation}
    K|\gamma_z| - |W| \ge 4 M_\alpha + 2 M_{\mathrm{lit}} - |W| > 4 M_\alpha \, .
\end{equation}
Therefore, either $\sigma(x)$ or $\sigma(y)$ have an overlap of the size~$2M_\alpha$ with $\gamma^K_z$.

Alternatively, none of the literals surrounded by variables overlaps with $\gamma^K_v$.
Due to the choice of~$N$, we see that there must be a variable $y \in \flatVars$ in $\needle$
such that $\sigma(y)$ overlaps with $\gamma^K_v$. If we denote the size of the maximal overlap between $\sigma(y)$
and $\gamma^K_z$ by $O$, we have roughly the following situation:

\begin{center}
\tikzstyle{wordBNode} = [draw, scale=0.8, inner sep=1mm, text height=3mm, minimum height=7mm, baseline]
\begin{tikzpicture}
    \node[wordBNode, anchor=west, minimum width=8mm] (a0)  at (0, 0)    {$p_z$};
    \node[wordBNode, anchor=west, minimum width=6mm] (a1)  at (a0.east) {$U$};
    \node[wordBNode, anchor=west, minimum width=8mm] (a2)  at (a1.east) {$\gamma_z$};
    \node[wordBNode, anchor=west, minimum width=8mm] (a3)  at (a2.east) {$\gamma_z$};
    \node[wordBNode, anchor=west, minimum width=8mm] (a4)  at (a3.east) {$\gamma_z$};
    \node[wordBNode, anchor=west, minimum width=8mm] (a5)  at (a4.east) {$\gamma_z$};
    \node[wordBNode, anchor=west, minimum width=8mm] (a6)  at (a5.east) {$\gamma_z$};
    \node[wordBNode, anchor=west, minimum width=8mm] (a7)  at (a6.east) {$\dots$};

    \node[fill=white, wordBNode, anchor=west, minimum width=5mm]  (b0) at (-1.5, -0.8) {$ \dots $};
    \node[fill=white, wordBNode, anchor=west, minimum width=7mm]  (b1) at (b0.east) {$ \alpha_x $};
    \node[fill=white, wordBNode, anchor=west, minimum width=15mm] (b2) at (b1.east) {$ W_0 $};
    \node[fill=white, wordBNode, anchor=west, minimum width=9mm]  (b3) at (b2.east) {$ \alpha_y $};
    \node[fill=white, wordBNode, anchor=west, minimum width=9mm]  (b4) at (b3.east) {$ \alpha_y $};
    \node[fill=white, wordBNode, anchor=west, minimum width=9mm]  (b5) at (b4.east) {$ \alpha_y $};
    \node[fill=white, wordBNode, anchor=west, minimum width=12mm] (b6) at (b5.east) {$ W_1 $};
    
    \coordinate (c0) at ($(a0.south west)+(0, -1.1)$);
    \coordinate (c1) at ($(a1.south east)+(0, -1.1)$);
    \coordinate (c2) at ($(b5.south east)+(0, -0.3)$);
    \coordinate (c3) at ($(b6.south east)+(0, -0.3)$);
    
    \draw[blue] (c1) edge[<->] node[below] {$O$} (c2);
    
    \begin{pgfonlayer}{bg}
        \draw[dotted, thick, blue] (a1.south east) -- (c1);
        \draw[dotted, thick, blue] (b5.south east) -- (c2);
    \end{pgfonlayer}
\end{tikzpicture} 
\end{center}
for $U = \con(q_0, q_{u|v})$ and
some literals $W_0, W_1 \in \alphabet^*$, or we have a symmetrical situation with a prefix of $\sigma(\needle)$.

Since $|U| \le |Q|$, we have 
\begin{equation}
O \ge N - |p| - |Q| - |W_0| - |W_1|\,.
\end{equation}
Substituting for $N$ its definition and noting that $\pMaxLit \ge |W_0|$ and $\pMaxLit \ge |W_1|$,
we have $O \ge 2\pMaxPrim$.

We have shown that any overlap of the size at least~$N$ contains an overlap between $\gamma^K_z$
and $\sigma(x)$ for some flat variable $x \in \flatVars$.
Using \cref{cor:infixFineWilf},
we have that this overlap contains a conflict since $|\gamma_z| > |\alpha|$ for any
$\alpha \in \primBase(\needle)$.
\end{proof}

\lemSewing*

\begin{proof}
    For the sake of contradiction assume that $\sigma \not \models \varphi$.
    Since $\sigma^\pref$, $\sigma^\suf$, and $\sigma$ agree on all variables except $z$,
    it must be that there is an overlap between $\sigma(\needle)$ and $\sigma(z)$ containing
    no conflicts. Since $\sigma^\pref \models \varphi^\pref$ and $\sigma^\suf \models \varphi^\suf$,
    it must be that this overlap includes the word $\gamma^K$. Note that $\needle$ contains only flat
    variables, and we have $|\sigma(x)| > 2 \pMaxPrim$ for any flat variable $x \in \flatVars$.
    Since, $K |\gamma_z| \ge 4\pMaxPrim + 2\pMaxLit$, we have that the overlap
    between $\sigma(z)$ and $\sigma(\needle)$ contains an overlap between $\gamma_z^K$
    and $\sigma(x)$ of the size at least $2 \pMaxPrim$ for some flat variable $x \in \flatVars$.
    Invoking \cref{cor:infixFineWilf}, we have that the overlap contains a conflict. Contradiction.
\end{proof}

\vspace{-0.0mm}
\section{Proofs for \cref{sec:singleSided}}\label{sec:proofsSingleSided}
\vspace{-0.0mm}

\lemConflictKaboom*

\begin{proof}
We will take a direct approach, showing conflicts between $\haystack$ and $p' \needle$
while incrementally increasing the length of the prefix $p' \in \prefixesOf{\haystack}$.
Condition~\ref{cond1} allows us to start with $|p'| = |p| - |\needle| + 1$ where $p$
is the prefix from Condition~\ref{cond1}.
\begin{center}
\begin{tikzpicture}
    \node[wordBNode, minimum width=6mm, anchor=west] (a0) at (0, 0)     {$\cdots$};
    \node[wordBNode, minimum width=10mm, anchor=west] (a1) at (a0.east)  {$\alpha$};
    \node[wordBNode, minimum width=16mm, anchor=west] (a2) at (a1.east) {$v_0$};
    \node[wordBNode, minimum width=6mm, anchor=west] (a3) at (a2.east)  {$\cdots$};

    \node[wordBNode, minimum width=6mm, anchor=west]  (b0) at (-0.7, -0.7){$\cdots$};
    \node[wordBNode, minimum width=10mm, anchor=west] (b1) at (b0.east)  {$\alpha$};
    \node[wordBNode, minimum width=10mm, anchor=west] (b2) at (b1.east)  {$\alpha$};
    \node[wordBNode, minimum width=13mm, anchor=west]  (b3) at (b2.east)  {$w$};
\end{tikzpicture}
\end{center}
Hence, we have $\alpha$ in $\haystack$ overlapping with a block $\alpha^2$ in $\needle$.
Invoking \cref{lemma:primitiveAlignment}, we have that for this choice of $p$, we have a conflict.

The first situation when we cannot apply \cref{lemma:primitiveAlignment} to show the existence
of a conflict is when $|p'| = |p| - |\needle| + |\alpha|$. But in such a case we have
\begin{center}
\begin{tikzpicture}
    \node[wordBNode, minimum width=6mm, anchor=west] (a0) at (0, 0)     {$\cdots$};
    \node[wordBNode, minimum width=10mm, anchor=west] (a1) at (a0.east)  {$\alpha$};
    \node[wordBNode, minimum width=16mm, anchor=west] (a2) at (a1.east) {$v_0$};
    \node[wordBNode, minimum width=6mm, anchor=west] (a3) at (a2.east)  {$\cdots$};

    \node[wordBNode, minimum width=6mm, anchor=west]  (b0) at (-0.0, -0.7){$\cdots$};
    \node[wordBNode, minimum width=10mm, anchor=west] (b1) at (b0.east)  {$\alpha$};
    \node[wordBNode, minimum width=10mm, anchor=west] (b2) at (b1.east)  {$\alpha$};
    \node[wordBNode, minimum width=13mm, anchor=west]  (b3) at (b2.east)  {$w$};
\end{tikzpicture}
\end{center}
and thanks to Condition~\ref{cond2}, we have that there is a conflict between $v_0$ and $\alpha$
from $\needle$.

Inspecting different choices of $p'$, we alternate between applying \cref{lemma:primitiveAlignment}
and Condition~\ref{cond2}, until we exhaust all possible $p'$s or we can no longer apply any of these
two conditions. In such a case, we have
\begin{center}
\begin{tikzpicture}
    \node[wordBNode, minimum width=6mm, anchor=west] (a0) at (0, 0)     {$\cdots$};
    \node[wordBNode, minimum width=10mm, anchor=west] (a1) at (a0.east)  {$\alpha$};
    \node[wordBNode, minimum width=16mm, anchor=west] (a2) at (a1.east) {$v_0$};
    \node[wordBNode, minimum width=9mm, anchor=west] (a3) at (a2.east)  {$\gamma$};
    \node[wordBNode, minimum width=9mm, anchor=west] (a4) at (a3.east)  {$\gamma$};
    \node[wordBNode, minimum width=9mm, anchor=west] (a5) at (a4.east)  {$\gamma$};
    \node[wordBNode, minimum width=9mm, anchor=west] (a6) at (a5.east)  {$\dots$};

    \node[wordBNode, minimum width=13mm, anchor=west]  (b0) at (-0.4, -0.7){$u_1$};
    \node[wordBNode, minimum width=10mm, anchor=west] (b1) at (b0.east)  {$\alpha$};
    \node[wordBNode, minimum width=10mm, anchor=west] (b2) at (b1.east)  {$\dots$};
    \node[wordBNode, minimum width=10mm, anchor=west]  (b3) at (b2.east)  {$\alpha$};
    \node[wordBNode, minimum width=10mm, anchor=west]  (b4) at (b3.east)  {$\alpha$};
    \node[wordBNode, minimum width=10mm, anchor=west]  (b5) at (b4.east)  {$\alpha$};
    \node[wordBNode, minimum width=13mm, anchor=west]  (b6) at (b5.east)  {$w$};

    \draw[red, thick] ($(a3.north west)+( 0.05, 0.1)$) rectangle ($(b5.south)+( 0.2,-0.1)$);
\end{tikzpicture}
\end{center}
Due to the condition $|v_0| < \boundFlat - 2\max(|\alpha|, |\gamma|)$ together with $\boundFlat < N|\alpha|$,
we have that there is an overlap between $\gamma^{\boundGamma}$ and $\alpha^{\boundFlat}$
of size at least $2 \max(|\alpha|, |\gamma|) > |\alpha| + |\gamma|$. Therefore,
we invoke \cref{cor:infixFineWilf} and see that such an overlap contains a conflict.
Taking a~longer $p'$ only increases the size of the overlap between $\gamma^{\boundGamma}$ and $\alpha^{\boundFlat}$,
and, thus, we obtain the lemma.
\end{proof}

\lemFlatify*


\begin{proof}
Applying the definition of $\gammaPref{z}^{\boundGamma}$, we have
    $\gammaPref{z}^{\boundGamma}(s_\alpha \alpha^k p_\alpha W) = s_\alpha \alpha^k p_\alpha W \concat U_z \gamma^{\boundGamma}$.
As $\sigma(z)$ and $\sigma'(z)$ share a common prefix, we know that if
$\sigma'$ fails to be a model then its due to the suffix $U_z \gamma^{\boundGamma}$ introduced by $\Gamma$-expansion
(otherwise we reach a contradiction with $\sigma$ being a model of $\varphi$). Therefore,
we have Condition~\ref{cond1} of \cref{lemma:conflictKaboom} satisfied. 
The requirements
on $p_\alpha$ being maximal set forth by the lemma  imply $p_\alpha a_0 \not \in \prefixesOf{\alpha}$
where $a_0$ is the first letter of $W$. Hence we have $p_\alpha W \not \in \prefixesOf{\alpha}$
and $\alpha \not \in \prefixesOf{p_\alpha W}$, and we also have Condition~\ref{cond2}
of \cref{lemma:conflictKaboom} satisfied. Invoking \cref{lemma:conflictKaboom}, we obtain the proof.
\end{proof}

\lemConstrFlat*

\begin{proof}
    Let $\varphi' \triangleq \varphi[z/z\sep]$, and let $T_z$ be the prefix tree for variable $z$.
    Starting from the root, we explore paths in $T_z$ in a breadth-first fashion up the bound $\boundAut$.
    For any path $\pi$, we first check whether $\sigma \variant \{ z \mapsto \edgeLabelingFn(\pi)\}
    \models \varphi'$. If the modified assignment fails to be a model of $\varphi'$,
    we mark the last vertex $s_n$ of $\pi$
    as a dead end, and we do not explore any longer paths passing through $s_n$. Let $\shortPathsSet$
    be the set of all paths explored in this process. Clearly, if there are no $\boundAut$-reaching
    paths in $\shortPathsSet$, then $\sigma(z) \in \shortPrefixes$ where
    $\shortPrefixes \triangleq \{ w \in \lang_z \mid |w| \le \boundAut \} \cap \lang_z$.
    
    Alternatively, there are some $\boundAut$-reaching paths in
    $\shortPathsSet$. Let $\boundReachingPaths \subseteq \shortPathsSet$
    be the set of $\boundAut$-reaching paths in $\shortPathsSet$.
    The question is what if for every path
    $\pi \in \boundReachingPaths$, we have that the altered assignment
    $\sigma_{w_\pi} \triangleq \sigma \variant \{z \mapsto \gammaPref{z}^{\boundGamma}(\edgeLabelingFn(z))\} $ fails to be a model of $\varphi'$.

    Let $\pi \in \boundReachingPaths$ be such a path, and we investigate why
    $\sigma_{w_\pi}$ fails to be a~model of $\varphi'$. Since $\piSigma$ is not
    a model, we have that $\sigma_{w_{\pi}}(\haystack)$ must contain an
    occurrence of the word $\piSigma(\needle)$. Furthermore, since $\sigma$ and
    $\piSigma$ agree on the valuation of all variables except for $z$, we have
    that the occurrence of the word $\piSigma(\needle)$
    contains a part of $\piSigma(z)$. Due to the way in which we explore $T_z$, there are no dead-end vertices
    in $\pi$, and, hence, we know that any overlap in which the last letter of $\piSigma(\needle)$ occurs
    before the last letter of $\edgeLabelingFn(\pi)$ contains a~conflict, i.e., the following is impossible:
    \begin{center}
    \begin{tikzpicture}
        \node[wordBNode, anchor=west, minimum width=8mm, draw=gray] (a0) at (0, 0) {$\dots$};
        \node[wordBNode, anchor=west, minimum width=42mm]      (a1) at (a0.east) {$\edgeLabelingFn(\pi)$};
        \node[wordBNode, anchor=west, minimum width=8mm, draw=gray] (a2) at (a1.east) {$\dots$};

        \node[wordBNode, anchor=west, minimum width=8mm, draw=gray] (b0) at (-0.7, -0.7)    {$ \dots  $};
        \node[wordBNode, anchor=west, minimum width=12mm]      (b1) at (b0.east) {$ \alpha  $};
        \node[wordBNode, anchor=west, minimum width=12mm]      (b2) at (b1.east) {$ \alpha  $};
        \node[wordBNode, anchor=west, minimum width=5mm]      (b3) at (b2.east) {$ p_\alpha  $};
        \node[wordBNode, anchor=west, minimum width=18mm]      (b4) at (b3.east) {$ W $};
    \end{tikzpicture}
    \end{center}
    Therefore, we know that the last letter of $W$ must be located somewhere in the suffix introduced
    by applying $\gammaPref{z}^{\boundGamma}$. Recall, that we have fixed values of all variables in $\needle$ that are short,
    and therefore, the remaining variables have assigned words longer than
    $\boundFlat$. Thanks to the properties of $\Gamma$-expansion (\cref{lemma:gammaUseful}), we have
    that the last letter of $\piSigma(\needle)$ cannot be located too far off to the right, as we would
    have the following.
    \begin{center}
    \begin{tikzpicture}
        \node[wordBNode, anchor=west, minimum width=8mm, draw=gray] (a0) at (0, 0) {$\dots$};
        \node[wordBNode, anchor=west, minimum width=35mm] (a1) at (a0.east) {$\edgeLabelingFn(\pi)$};
        \node[wordBNode, anchor=west, minimum width=8mm] (a2) at (a1.east) {$U_z$};
        \node[wordBNode, anchor=west, minimum width=12mm] (a3) at (a2.east) {$\gamma_z$};
        \node[wordBNode, anchor=west, minimum width=12mm] (a4) at (a3.east) {$\gamma_z$};
        \node[wordBNode, anchor=west, minimum width=12mm] (a5) at (a4.east) {$\gamma_z$};
        \node[wordBNode, anchor=west, minimum width=8mm, draw=gray] (a6) at (a5.east) {$\dots$};

        \node[wordBNode, anchor=west, minimum width=20mm, draw=gray] (r0) at (1.2, -0.7)    {$ \dots  $};
        \node[wordBNode, anchor=west, minimum width=12mm]       (r1) at (r0.east) {$ \alpha  $};
        \node[wordBNode, anchor=west, minimum width=12mm]       (r2) at (r1.east) {$ \alpha  $};
        \node[wordBNode, anchor=west, minimum width=12mm]       (r3) at (r2.east) {$ \alpha  $};
        \node[wordBNode, anchor=west, minimum width=12mm]       (r4) at (r3.east) {$ \alpha  $};
        \node[wordBNode, anchor=west, minimum width=5mm]        (r5) at (r4.east) {$ p_\alpha  $};
        \node[wordBNode, anchor=west, minimum width=18mm]       (r6) at (r5.east) {$ W $};
    \end{tikzpicture}
    \end{center}
    Invoking \cref{cor:infixFineWilf}, we see that such an overlap contains a~conflict.

    Perhaps more importantly, we have chosen $\boundAut$ such that $\boundFlat
    - \boundAut > \pMaxLit$ (cf. \cref{eq:boundAutDef}), i.e., the word
    $\edgeLabelingFn(\pi)$ is longer than any literal $w \in \alphabet^*$ in
    $\varphi$. As the last letter of $W$ must be located somewhere after the
    last letter of $\edgeLabelingFn(\pi)$, $\edgeLabelingFn(\pi)$ starts with a
    prefix $p \in \factorsOf{\alpha^*}$ of the length $|p| = \boundAut - |W|$.
    Graphically,
    we have the following.
    \begin{center}
    \begin{tikzpicture}
        \node[wordBNode, anchor=west, minimum width=8mm, draw=gray] (a0) at (0, 0) {$\dots$};
        \node[wordBNode, anchor=west, minimum width=42mm] (a1) at (a0.east) {$\edgeLabelingFn(\pi)$};
        \node[wordBNode, anchor=west, minimum width=8mm]  (a2) at (a1.east) {$U_z$};
        \node[wordBNode, anchor=west, minimum width=12mm] (a3) at (a2.east) {$\gamma_z$};
        \node[wordBNode, anchor=west, minimum width=12mm] (a4) at (a3.east) {$\gamma_z$};
        \node[wordBNode, anchor=west, minimum width=8mm, draw=gray] (a5) at (a4.east) {$\dots$};

        \node[wordBNode, anchor=west, minimum width=5mm, draw=gray] (r0) at (-0.2, -0.7)    {$ \dots  $};
        \node[wordBNode, anchor=west, minimum width=12mm]       (r1) at (r0.east) {$ \alpha  $};
        \node[wordBNode, anchor=west, minimum width=12mm]       (r2) at (r1.east) {$ \alpha  $};
        \node[wordBNode, anchor=west, minimum width=12mm]       (r3) at (r2.east) {$ \alpha  $};
        \node[wordBNode, anchor=west, minimum width=5mm]        (r4) at (r3.east) {$ p_\alpha  $};
        \node[wordBNode, anchor=west, minimum width=18mm]       (r5) at (r4.east) {$ W $};

        \coordinate (botLeft) at ($(a1.south west)+(0, -1.1)$);
        \coordinate (botRight) at ($(r4.south east)+(0, -0.4)$);

        \draw[dashed, blue, thick] (a1.south west) -- (botLeft);
        \draw[dashed, blue, thick] ($(r4.south east)+(0, 0.7)$) -- (botRight);
        \draw (botLeft) edge[<->, blue] node[below] {$p$} (botRight);
    \end{tikzpicture}
    \vspace*{-5mm}
    \end{center}
    As $\pi$ was an arbitrary $\boundAut$-reaching path,
    we have that every $\pi_i \in \boundReachingPaths$ is labeled with a prefix $p_i \in \factorsOf{\alpha^*}$
    of size $|p_i| = \boundAut - |W|$, otherwise we would reach a contradiction with there being
    no $\boundAut$-reaching paths that can be $\Gamma$-expanded into a model.

    In fact, we show that all $\boundAut$-reaching paths share the same prefix $p$. To see this,
    we first decompose $\haystack$ into 
    $\haystack = \haystack_0 \concat z_1 \concat \cdots \concat z_n \concat \haystack_n$
    for $\haystack_i \in (\vars \setminus \{z\} \cup \alphabet)^*$ for any $0 \le i \le n$, and where
        the subscript $z_j$ denotes $j$-th occurrence of the variable $z$. Recall that $\piSigma \not \models \varphi'$, and thus, there is an occurrence $z_j$ 
    that contains a part of $\piSigma(\needle)$. As we have $\boundFlat - \boundAut \ge 2\pMaxLit$,
    we know that a suffix $s$ of $\sigma'(\haystack_{j-1})$ of size $|s| = 2|\alpha|$ contains
    $\alpha$ as a~factor. Therefore, we have the following.

    \begin{center}
    \begin{tikzpicture}
        \node[wordBNode, anchor=west, minimum width=8mm, draw=gray] (a0) at (0, 0) {$\dots$};
        \node[wordBNode, anchor=west, minimum width=20mm] (a1) at (a0.east) {$s$};
        \node[wordBNode, anchor=west, minimum width=42mm] (a2) at (a1.east) {$\edgeLabelingFn(\pi)$};
        \node[wordBNode, anchor=west, minimum width=8mm]  (a3) at (a2.east) {$U_z$};
        \node[wordBNode, anchor=west, minimum width=12mm] (a4) at (a3.east) {$\gamma_z$};
        \node[wordBNode, anchor=west, minimum width=8mm, draw=gray] (a5) at (a4.east) {$\dots$};

        \node[wordBNode, anchor=west, minimum width=5mm, draw=gray] (r0) at ( 0.6, -0.7)    {$ \dots $};
        \node[wordBNode, anchor=west, minimum width=12mm, draw=red, thick]       (r1) at (r0.east) {$ \alpha  $};
        \node[wordBNode, anchor=west, minimum width=12mm]       (r2) at (r1.east) {$ \alpha  $};
        \node[wordBNode, anchor=west, minimum width=9mm]       (r3) at (r2.east) {$ \dots  $};
        \node[wordBNode, anchor=west, minimum width=12mm]       (r4) at (r3.east) {$ \alpha  $};
        \node[wordBNode, anchor=west, minimum width=5mm]        (r5) at (r4.east) {$ p_\alpha  $};
        \node[wordBNode, anchor=west, minimum width=18mm]       (r6) at (r5.east) {$ W $};
    \end{tikzpicture}
    \end{center}
    
    Therefore, if we have two $\boundAut$-reaching paths $\pi$ and $\pi'$ labeled
    with words starting with maximal $s, s' \in \suffixesOf{\alpha}$, respectively, such that $s \neq s'$,
    while both $\piSigma$ and $\sigma_{w_{\pi'}}$ are failing to be models, we would have w.l.o.g. the following
    \begin{center}
    \begin{tikzpicture}
        \def\alphaWidth{12}

        \node[wordBNode, anchor=west, minimum width=8mm, draw=gray] (a0) at (0, 0) {$\dots$};
        \node[wordBNode, anchor=west, minimum width=20mm] (a1) at (a0.east) {$ $};
        \node[wordBNode, anchor=west, minimum width=42mm] (a2) at (a1.east) {$ $};
        \node[wordBNode, anchor=west, minimum width=8mm]  (a3) at (a2.east) {$U_z$};
        \node[wordBNode, anchor=west, minimum width=12mm] (a4) at (a3.east) {$\gamma_z$};
        \node[wordBNode, anchor=west, minimum width=8mm, draw=gray] (a5) at (a4.east) {$\dots$};

        \node[wordBNode, anchor=west, minimum width=5mm, draw=gray] (r0) at ( 0.6, -0.7)    {$ \dots $};
        \node[wordBNode, anchor=west, minimum width=\alphaWidth mm]       (r1) at (r0.east) {$ \alpha  $};
        \node[wordBNode, anchor=west, minimum width=\alphaWidth mm]       (r2) at (r1.east) {$ \alpha  $};
        \node[wordBNode, anchor=west, minimum width=\alphaWidth mm]       (r3) at (r2.east) {$ \alpha  $};
        \node[wordBNode, anchor=west, minimum width=9mm]       (r4) at (r3.east) {$ \dots  $};
        \node[wordBNode, anchor=west, minimum width=5mm]        (r5) at (r4.east) {$ p_\alpha  $};
        \node[wordBNode, anchor=west, minimum width=18mm]       (r6) at (r5.east) {$ W $};

        \node[wordBNode, anchor=west, minimum width=10mm] (s) at (a2.west) {$s_\alpha$}; 
        \node[wordBNode, anchor=west, minimum width=\alphaWidth mm] (i0) at (s.east) {$\alpha$}; 
        \node[wordBNode, anchor=west, minimum width=\alphaWidth mm] (i1) at (i0.east) {$\alpha$}; 

        \node[wordBNode, anchor=west, minimum width=\alphaWidth mm] at ($(a1.west)+(0.5, 0)$) {$\alpha$}; 

        \draw[draw=black, thick] (a1.north west) rectangle (a1.south east);
        \draw[draw=red, thick] (a2.north west) rectangle (a2.south east);

        \node[scale=0.8, red] at ($(a2.north)+(0, 0.2)$) {$\edgeLabelingFn(\pi)$};
        \node[scale=0.8] at ($(a1.north)+(0, 0.2)$) {$s$};

        \coordinate (r1_mid) at ($(a0.south)!0.5!(r0.north)$);
        \node[red] at ($(r1_mid)+(-1.0, 0)$) {$\pi$};

        \node[wordBNode, anchor=west, minimum width=8mm, draw=gray] (a0) at (0, -2) {$\dots$};
        \node[wordBNode, anchor=west, minimum width=20mm] (a1) at (a0.east) {$ $};
        \node[wordBNode, anchor=west, minimum width=42mm] (a2) at (a1.east) {$ $};
        \node[wordBNode, anchor=west, minimum width=8mm]  (a3) at (a2.east) {$U_z$};
        \node[wordBNode, anchor=west, minimum width=12mm] (a4) at (a3.east) {$\gamma_z$};
        \node[wordBNode, anchor=west, minimum width=8mm, draw=gray] (a5) at (a4.east) {$\dots$};

        \node[wordBNode, anchor=west, minimum width=5mm, draw=gray] (r0) at ( 0.6, -2.7)    {$ \dots $};
        \node[wordBNode, anchor=west, minimum width=\alphaWidth mm]       (r1) at (r0.east) {$ \alpha  $};
        \node[wordBNode, anchor=west, minimum width=\alphaWidth mm, fill=blue!30]       (r2) at (r1.east) {$ \alpha  $};
        \node[wordBNode, anchor=west, minimum width=\alphaWidth mm, fill=blue!30]       (r3) at (r2.east) {$ \alpha  $};
        \node[wordBNode, anchor=west, minimum width=9mm]       (r4) at (r3.east) {$ \dots  $};
        \node[wordBNode, anchor=west, minimum width=5mm]        (r5) at (r4.east) {$ p_\alpha  $};
        \node[wordBNode, anchor=west, minimum width=18mm]       (r6) at (r5.east) {$ W $};

        \node[wordBNode, anchor=west, minimum width=7mm] (s) at (a2.west) {$s_\alpha$}; 
        \node[wordBNode, anchor=west, minimum width=\alphaWidth mm, fill=blue!30] (i0) at (s.east) {$\alpha$}; 
        \node[wordBNode, anchor=west, minimum width=\alphaWidth mm] (i1) at (i0.east) {$\alpha$}; 

        \node[wordBNode, anchor=west, minimum width=\alphaWidth mm] at ($(a1.west)+(0.5, 0)$) {$\alpha$}; 

        \draw[draw=black, thick] (a1.north west) rectangle (a1.south east);
        \draw[draw=blue, thick] (a2.north west) rectangle (a2.south east);

        \node[scale=0.8, blue] at ($(a2.north)+(0, 0.2)$) {$\edgeLabelingFn(\pi')$};
        \node[scale=0.8] at ($(a1.north)+(0, 0.2)$) {$s$};

        \coordinate (r2_mid) at ($(a0.south)!0.5!(r0.north)$);
        \node[blue] at ($(r2_mid)+(-1.0, 0)$) {$\pi'$};
    \end{tikzpicture}
    \end{center}
    Invoking \cref{lemma:primitiveAlignment}, we have that $\pi'$ is a model, contradicting that
    none of the $\boundAut$-reaching paths can be $\Gamma$-expanded into a~model.

    We are now interested in the path $\rho$ corresponding
    to the prefix $p$ shared by all $\boundAut$-reaching paths in $\shortPathsSet$,
    examining what happens if we deviate from $\rho$.
    Let $\rho$ be the following sequence of vertices
    $\rho = s_0, \dots, s_n$.
    We now prove a series of claims that derive structural properties of $\aut_z$ by inspecting
    the possibility of picking a~path passing through an alternative successor of some vertex $s_j$ for $i \le j \le n$.
    Since we have $\boundAut < \boundFlat$, we have that all flat variables from $\needle$
    have assigned a word that is longer than any of the prefixes $p_z \in \prefixesOf{z}$
    we will consider in the following. Combined with the fact that $\varphi'$ contains a~fresh alphabet symbol
    $\sep$, we conclude that we only have to consider overlaps with $\needle$'s rightmost variable
    $x$.
    Recall, that the lemma statement assumes that the $\needle$eedle of $\varphi$ can
    be written as $\needle = \needle' x \alpha^M p W$ where $p \neq \alpha$ is a prefix of $\alpha$
    and $p a_0 \not \in \prefixesOf{\alpha}$ with $a_0$ being the first letter of $W$.
    In the follow-ing, we write $s_j$ to denote a vertex of $\rho$ for $0 \le j \le n$,
    and  by $s'_{j+1}$ we denote a~vertex adjacent to $s_j$ not taken by $\rho$.
    
    \begin{claim} \label{claim:succCount}
        If $\rho$ contains a vertex $s_j$ with $s'_{j+1}$ such that the words
        $\edgeLabelingFn(s_j, s'_{j+1})$ and $W$ differ at some position, then
        any sequence $\chi = s_0, \dots, s_j, s'_{j+1}, \dots, s'_{m}$ of a
        suitable length is a $\boundAut$-reaching path in $T_z$ with $\chiSigma
        \models \varphi'$.
    \end{claim}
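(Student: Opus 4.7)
My plan splits into two parts: first, showing that $\chi$ can actually be realized as a $\boundAut$-reaching path in $T_z$; and second, showing that $\chiSigma \models \varphi'$. For the first part, I will appeal to the fact that $z$ is decomposed, so $\aut_z$ is a single nontrivial SCC; hence from the state $\stateLabelingFn(s'_{j+1})$ one can continue along edges in $T_z$ indefinitely, and I can pick $m$ large enough that $|\edgeLabelingFn(\chi)| \ge \boundAut$, making $\chi$ a $\boundAut$-reaching path.

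For the second part, I will argue by contradiction. Suppose $\chiSigma \not\models \varphi'$, so that $\chiSigma(\needle)$ occurs as a factor of $\chiSigma(\haystack)$. Since $\sigma \models \varphi$ and $\chiSigma$ differs from $\sigma$ only on the value of $z$, this occurrence must intersect $\chiSigma(z)$. I will then replay the alignment analysis from the preceding paragraphs of the proof of \cref{lemma:constrflat} with $\chi$ in place of a generic $\pi \in \boundReachingPaths$: the same reasoning forces $\edgeLabelingFn(\chi)$ to start with the shared prefix $p \in \factorsOf{\alpha^*}$ of length $\boundAut - |W|$, and it further forces the window of $\edgeLabelingFn(\chi)$ of length $|W|$ following $p$ to coincide with $W$ letter-by-letter, for otherwise a conflict against $\chiSigma(\needle)$ would already arise. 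However, $\chi$ and $\rho$ agree on the vertices $s_0, \dots, s_j$, so the deviation edge contributes the label $\edgeLabelingFn(s_j, s'_{j+1})$ exactly within the region where $W$ is required to appear. Since this label differs from $W$ at some position $i$ by hypothesis, the alignment exhibits a conflict at position $i$, contradicting the assumption that $\chiSigma(\needle)$ is a factor of $\chiSigma(\haystack)$.

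The main obstacle I expect is importing the alignment analysis of the surrounding proof faithfully rather than redeveloping it. I need to ensure that none of the vertices $s_0, \ldots, s_j$ of $\chi$ is a dead end (which follows from $\rho$ being dead-end-free by its construction as the path of the shared prefix) and that the extension $s'_{j+1}, \ldots, s'_m$ can be chosen to avoid dead ends (which holds as these vertices lie outside the already-explored BFS region and the SCC property gives us freedom in choosing successors). Furthermore, I must pin down that the position $i$ at which $\edgeLabelingFn(s_j, s'_{j+1})$ and $W$ disagree truly falls inside the $W$-aligned window of $\edgeLabelingFn(\chi)$; this depends on the precise offset dictated by where the last letter of $W$ must land relative to $\edgeLabelingFn(\chi)$ in the supposed failing alignment, and is the step where the accounting needs the most care.
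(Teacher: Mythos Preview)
There is a genuine gap in how you handle dead ends. Your claim that the vertices $s'_{j+1}, \ldots, s'_m$ ``lie outside the already-explored BFS region'' is false: $s_j$ sits on $\rho$, so $|\edgeLabelingFn(s_0 \ldots s_j)| < \boundAut$, and the BFS therefore visits $s'_{j+1}$ as a child of $s_j$ and is free to mark it (and its descendants up to depth $\boundAut$) as a dead end. You cannot ``choose'' your way around this, and the claim asserts it for \emph{every} such $\chi$, so even a weakening to ``some $\chi$'' would still require proving that $s'_{j+1}$ itself is not a dead end.

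This breaks the core of your argument, because the alignment analysis you want to replay for $\chi$ explicitly uses the absence of dead ends to exclude alignments where the last letter of $\sigma(\needle)$ falls before the last letter of $\edgeLabelingFn(\pi)$. Without first ruling those out, nothing forces the occurrence of $\sigma(\needle)$ to land so that your $W$-window lines up with $\edgeLabelingFn(s_j, s'_{j+1})$; if some $s'_k$ were a dead end, the offending occurrence could sit entirely inside $\edgeLabelingFn(s_0 \ldots s'_k)$ with no relation to $W$ at all. So the replay is circular: you need no-dead-ends to localise the alignment, but you have not established no-dead-ends.

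The paper proceeds differently. It shows \emph{directly}, by a short case analysis on alignments (using \cref{lemma:primitiveAlignment} to force exact $\alpha$-alignment against the $\alpha$ sitting in the suffix $S$ of $\sigma(\haystack_{i-1})$, and then the $W$-mismatch hypothesis to kill the one remaining alignment), that no occurrence of $\sigma(\needle)$ ends at or before the last letter of $\edgeLabelingFn(\chi)$. This is precisely Condition~\ref{cond1} of \cref{lemma:conflictKaboom} and simultaneously certifies that no vertex of $\chi$ is a dead end. It then checks Condition~\ref{cond2} (the incompatibility of the post-$\alpha$ word with $\alpha$) and invokes \cref{lemma:conflictKaboom} to obtain $\chiSigma \models \varphi'$. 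The engine you are missing is \cref{lemma:conflictKaboom}: it disposes in one shot of the entire range of alignments where $\sigma(\needle)$ ends inside the $\gamma_z^{\boundGamma}$ tail, which is exactly the region your informal $W$-window accounting would otherwise have to track position by position.
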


    \begin{proof}
        We first establish that Point~\ref{cond1} of \cref{lemma:conflictKaboom} holds for the
        word $\edgeLabelingFn(\chi)$, implying that none of the vertices in $\chi$ would be marked
        as dead ends.

        We have shown that $\alpha \in \factorsOf{\sigma(\haystack_i)}$, and, thus,
        let $S$ be a~minimal suffix of $\sigma(\haystack_i)$ such that $\alpha \in \factorsOf{S}$.

        First, before deviating from $\rho$, we had $S \concat \edgeLabelingFn(s_0, s_{j+1}) \in \factorsOf{\alpha^+}$.
        Due to the construction of $T_z$, we have that $\edgeLabelingFn(s_j, s'_{j+1})$ starts with a different
        letter than $\edgeLabelingFn(s_j, s_{j+1})$, and, therefore, $S \concat \edgeLabelingFn(s_0, s_{j+1}) \not \in
        \factorsOf{\alpha^+}$. Graphically, if we consider the following in which we $\alpha$ in $S$ perfectly aligned some $\alpha$ from $\sigma(\needle)$ due to \cref{lemma:primitiveAlignment}
        \begin{center}
        \begin{tikzpicture}
            \node[wordBNode, anchor=east, minimum width=8mm, draw=gray] (a0) at (0, 0) {$\dots$};
            \node[wordBNode, anchor=west, minimum width=20mm, thick]    (a1) at (a0.east) {$ $};
            \node[wordBNode, anchor=west, minimum width=40mm]           (a2) at (a1.east) {$\edgeLabelingFn(s_0, \dots, s_j)$};
            \node[wordBNode, anchor=west, minimum width=30mm]           (a3) at (a2.east) {$\textcolor{red}{\edgeLabelingFn(s_j, s'_{j+1})} $};
            \node[wordBNode, anchor=west, minimum width=8mm, draw=gray] (a4) at (a3.east) {$\dots$};

            \node[wordBNode, anchor=west, minimum width=8mm, draw=gray] (b0) at (-0.3, -0.7)    {$ \dots  $};
            \node[wordBNode, anchor=west, minimum width=12mm]      (b1) at (b0.east) {$ \alpha  $};
            \node[wordBNode, anchor=west, minimum width=12mm]      (b2) at (b1.east) {$ \alpha  $};
            \node[wordBNode, anchor=west, minimum width=12mm]      (b3) at (b2.east) {$ \alpha  $};
            \node[wordBNode, anchor=west, minimum width=12mm]      (b4) at (b3.east) {$ \alpha  $};
            \node[wordBNode, anchor=west, minimum width=12mm] (b5) at (b4.east) {$ \textcolor{red}{\alpha}  $};
            \node[wordBNode, anchor=west, minimum width=12mm]      (b6) at (b5.east) {$ \alpha  $};
            \node[wordBNode, anchor=west, minimum width=5mm]       (b7) at (b6.east) {$ p_\alpha  $};
            \node[wordBNode, anchor=west, minimum width=5mm]       (b8) at (b7.east) {$ W  $};

            \node[] at ($(a1.north)+(0, 0.2)$) {$ S $};
            \node[wordBNode, minimum width=12mm, anchor=west, minimum height=5.5mm] at ($(a1.west)+(0.35, 0)$) {$ \alpha $};
        \end{tikzpicture}
        \end{center}
        then there would be a conflict between $\alpha$ and a prefix of $\edgeLabelingFn(s_j, s'_{j+1})$.
        Thus, no $\alpha$ from $\sigma(\needle)$ can overlap with $\edgeLabelingFn(s_j, s'_{m})$,
        and it remains for us to investigate the remaining case when there is no $\alpha$ overlapping
        with $\edgeLabelingFn(s_j, \dots, s'_{m})$. In such a case, we have the following.
        \begin{center}
        \tikzstyle{wordBNode} = [draw, scale=0.8, inner sep=1mm, text height=3mm, minimum height=7mm, baseline]
        \begin{tikzpicture}
            \node[wordBNode, anchor=west, minimum width=50mm] (l_alpha0) at (0, 0)          {$\edgeLabelingFn(s_0, \dots, s_j)$};
            \node[wordBNode, anchor=west, minimum width=30mm] (l_alpha1) at (l_alpha0.east) {$\edgeLabelingFn(s_j, \dots, s'_{m})$};
            \node[wordBNode, anchor=east, minimum width=8mm, gray] (dots) at (l_alpha0.west) {$\dots$};
            \node[wordBNode, anchor=west, minimum width=8mm, gray] (dots) at (l_alpha1.east) {$\dots$};

            \node[wordBNode, anchor=west, minimum width=8mm, gray] (r_alpha0) at (-1.0, -0.7)    {$ \dots  $};
            \node[wordBNode, anchor=west, minimum width=12mm]      (r_alpha1) at (r_alpha0.east) {$ \alpha  $};
            \node[wordBNode, anchor=west, minimum width=12mm]      (r_alpha2) at (r_alpha1.east) {$ \alpha  $};
            \node[wordBNode, anchor=west, minimum width=12mm]      (r_alpha3) at (r_alpha2.east) {$ \alpha  $};
            \node[wordBNode, anchor=west, minimum width=12mm]      (r_alpha4) at (r_alpha3.east) {$ \alpha  $};
            \node[wordBNode, anchor=west, minimum width=5mm]       (r_alpha5) at (r_alpha4.east) {$ p_\alpha  $};
            \node[wordBNode, anchor=west, minimum width=5mm]       (r_alpha6) at (r_alpha5.east) {$ W $};

            \node[] at ($(l_alpha0.north east)+(0, 0.3)$) {$s_j$};
        \end{tikzpicture}
        \end{center}
        But such an alignment contains a conflict between $W$ and the prefix $\edgeLabelingFn(s_j, s'_{j+1})$
        due to the premise of this claim.
        Therefore, we have established that Point~\ref{cond1} of \cref{lemma:conflictKaboom} holds.

        Observing that the above reasoning above $\edgeLabelingFn(s_j, s'_{j+1})$ and $\edgeLabelingFn(s_j, s_{j+1})$
        starting with a different letters allows us to write $S \concat \edgeLabelingFn(s_0, s_j)$ as
        \begin{equation}
            S \concat \edgeLabelingFn(s_0, s_j) = u_0 \alpha v'_0
        \end{equation}
        for $|v'_0| < |\alpha|$ such that $v'_0 \concat \edgeLabelingFn(s_{j}, s'_{j+1}) \not \in \prefixesOf{\alpha}$
        and $\alpha \not \in \prefixesOf{v'_0 \concat \edgeLabelingFn(s_{j}, s'_{j+1})}$, and, therefore, we can apply
        \cref{lemma:conflictKaboom} and conclude that $\chiSigma \models \varphi'$.
    \end{proof}

    \begin{claim} \label{claim:edgeLength}
        If $\rho$ contains a vertex $s_j$ such that $|\edgeLabelingFn(s_j,
        s'_{j+1})| < |W|$, then any $\chi
        = s_0, \dots, s_j, s'_{j+1}, \dots, s'_{m}$ of suitable
        length is a $\boundAut$-reaching path in $T_z$. Moreover,
        $\chiSigma \models \varphi'$.

    \end{claim}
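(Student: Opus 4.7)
The plan is to mirror the proof of Claim~\ref{claim:succCount}, invoking Lemma~\ref{lemma:conflictKaboom} on $\edgeLabelingFn(\chi)$ together with the corresponding decomposition of $\haystack$. The two hypotheses of that lemma are established in the same spirit as before, with Condition~2 inherited directly from the choice-state argument of the previous claim and Condition~1 obtained through a refinement that uses the bound $|\edgeLabelingFn(s_j, s'_{j+1})| < |W|$. Since $|\edgeLabelingFn(s_j, s'_{j+1})| < |W| \le \pMaxLit \ll \boundAut$, the deviating edge alone is far from reaching length~$\boundAut$, so the ``suitable length'' in the statement is realised by appending further vertices $s'_{j+2}, \ldots, s'_m$; this is always possible because $\aut_z$ consists of a single SCC.

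The first step is to observe that $\edgeLabelingFn(s_j, s'_{j+1})$ and $\edgeLabelingFn(s_j, s_{j+1})$ must start with different letters because $s_j$ is a choice state; since the latter continues the $\alpha$-consistent prefix shared by $\rho$, the deviation at $s_j$ immediately disrupts the $\alpha$-pattern. This decomposes $S \concat \edgeLabelingFn(s_0, \ldots, s'_{j+1})$ as $u_0 \alpha v'_0$ with $|v'_0| < |\alpha|$ and $v'_0$ extended by a non-matching letter, yielding Condition~2 of Lemma~\ref{lemma:conflictKaboom} exactly as in Claim~\ref{claim:succCount}.

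For Condition~1, any hypothetical occurrence of $\needle = \needle' x \alpha^M p W$ inside $\chiSigma(\haystack[z/z\sep])$ must avoid the fresh symbol~$\sep$, hence must lie within $\edgeLabelingFn(\chi)$ together with adjacent literal segments. I would split on whether the extended label $\edgeLabelingFn(s_j, \ldots, s'_m)$ contains a letter disagreeing with $W$ within its first $|W|$ positions. If yes, the conflict argument of Claim~\ref{claim:succCount} applies verbatim. If no, then the first $|W|$ letters of the deviation coincide with~$W$, but the $\alpha$-pattern break established above forces any alignment placing $W$ inside the deviation to drive $\alpha^M$ across the breaking position; consequently an $\alpha$-block of $\needle$ must overlap with a window that is no longer a factor of $\alpha^+$, and Lemma~\ref{lemma:primitiveAlignment} produces the required conflict.

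The main obstacle, compared with Claim~\ref{claim:succCount}, lies in the second subcase: one must argue uniformly over \emph{all} $\boundAut$-reaching extensions $\chi$ that, regardless of how $s'_{j+2}, \ldots, s'_m$ are selected, the pattern break at position $|\edgeLabelingFn(s_0,\ldots,s_j)|+1$ is preserved and exposed to every candidate placement of $\alpha^M p W$. The strict inequality $|\edgeLabelingFn(s_j, s'_{j+1})| < |W|$ is essential here because it forces any matching~$W$ to cross the first deviating edge and therefore expose the~$\alpha^M$ factor of $\needle$ to the already-disrupted $\alpha$-structure of $\sigma(z)$. Once both conditions of Lemma~\ref{lemma:conflictKaboom} are verified, the conclusion $\chiSigma \models \varphi'$ follows.
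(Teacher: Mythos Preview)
Your second subcase is where the argument breaks. Suppose the first $|W|$ letters after the deviation point agree with~$W$. Then the alignment in which $\needle$'s suffix $W$ is placed exactly on those letters puts the block $\alpha^{M}p$ of $\needle$ entirely to the \emph{left} of the break, i.e.\ inside $\edgeLabelingFn(s_0,\ldots,s_j)$, which is a factor of $\alpha^{+}$. So $\alpha^{M}$ does not cross the breaking position at all, and neither \cref{lemma:primitiveAlignment} nor the $\alpha$-pattern disruption gives you a conflict there. This alignment can genuinely be conflict-free; it is exactly the configuration that later, in \cref{claim:transitionStructure}, is shown to produce a dead end rather than a model. Hence the universal claim ``any $\chi$ works'' cannot be salvaged along your line.

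The paper sidesteps this entirely by exploiting the hypothesis $|\edgeLabelingFn(s_j,s'_{j+1})|<|W|$ at the \emph{next} vertex: $s'_{j+1}$ is itself a choice state, so it has two children $t_1,t_2$ whose outgoing edge labels start with different letters. Since the branching happens strictly before position $|W|$, at least one of the two extensions $\edgeLabelingFn(s_j,s'_{j+1},t_i)$ disagrees with $W$ at some position, and \cref{claim:succCount} applies directly to that extension. In other words, the role of the length bound is not to constrain an arbitrary $\chi$ but to guarantee that a further choice point is reached inside the $W$-window, letting one \emph{select} a $\chi$ that lands in the easy case. For the surrounding contradiction argument this existential conclusion is all that is needed.
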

    \begin{proof}
        We have that $s'_{j+1}$ has at least two adjacent vertices, $t_1, t_2 \in V$. As
        $|\edgeLabelingFn(s_j, s'_{j+1})| < |W|$, we have at least one of $\edgeLabelingFn(s_j, s'_{j+1}, t_1)$
        and $\edgeLabelingFn(s_j, s'_{j+1}, t_2)$ disagreeing with $W'$ on some position.
        The rest is the same as in the proof of \cref{claim:succCount}.
    \end{proof}
    
    Next, we apply these claims to the structure of $\aut_z$. The core of our argument
    is based on the fact that none of the $\boundAut$-reaching paths is a model.
    
    \begin{restatable}{claim}{claimTransitionStructure}\label{claim:transitionStructure}
        Any vertex $s_j$ has only two adjacent vertices $s_{j+1}$ and $s'_{j+1}$.
        Moreover:
        \begin{enumerate}
            \item $\edgeLabelingFn(s_j, s_{j+1})$ is a factor of some word $w_\alpha \in \alpha^*$, and
            \item $\edgeLabelingFn(s_j, s'_{j+1}) = W w$ for some word $w \in \alphabet^*$. Moreover,
                $s'_{j+1}$ was marked as a dead end.
        \end{enumerate}
    \end{restatable}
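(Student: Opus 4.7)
The plan is to derive all three parts of the claim from the two auxiliary claims (\cref{claim:succCount} and \cref{claim:edgeLength}) together with the standing hypothesis that no $\boundAut$-reaching path in $\shortPathsSet$ can be $\gammaPref{z}$-expanded into a model of $\varphi'$. Since $s_j$ is labelled with a choice state of $\aut_z$, it has at least two children in $T_z$; I will write $s_{j+1}$ for the child lying on $\rho$ and $s'_{j+1}$ for any other child. Part~(1) is then immediate from the already-established fact that $\edgeLabelOf{\rho} \in \factorsOf{\alpha^+}$, so every edge along $\rho$, in particular $(s_j, s_{j+1})$, carries a label that is a factor of some word from $\alpha^*$.

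For part~(2), I would use \cref{claim:edgeLength} contrapositively to force $|\edgeLabelOf{s_j, s'_{j+1}}| \ge |W|$ and then \cref{claim:succCount} contrapositively to force the first $|W|$ letters of $\edgeLabelOf{s_j, s'_{j+1}}$ to agree with $W$; if either condition failed, the corresponding claim would hand us a $\boundAut$-reaching path through $s'_{j+1}$ whose associated assignment already satisfies $\varphi'$, contradicting the hypothesis. Together these give $\edgeLabelOf{s_j, s'_{j+1}} = W w$ for some (possibly empty) $w \in \alphabet^*$. Uniqueness of $s'_{j+1}$, and hence the ``exactly two neighbours'' assertion, then follows from the determinism of $\aut_z$: every candidate sibling of $s_{j+1}$ must begin with the first letter $a_0$ of $W$, while a choice state of a DFA admits at most one outgoing transition per alphabet symbol, and the existence of a genuinely distinct sibling is ensured by the lemma hypothesis $pa_0 \notin \prefixesOf{\alpha}$, which forces the leading letters of $\edgeLabelOf{s_j, s_{j+1}}$ and $\edgeLabelOf{s_j, s'_{j+1}}$ to differ.

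The remaining task is to argue that $s'_{j+1}$ was actually marked as a dead end during the BFS exploration of $T_z$. Because $\rho$ is the common prefix of every member of $\boundReachingPaths$, no $\boundAut$-reaching path passes through $s'_{j+1}$; and because $\aut_z$ has a single nontrivial SCC, any undead descendant of $s'_{j+1}$ could be extended into a $\boundAut$-reaching path, so $s'_{j+1}$ itself has to be dead. I expect this final verification to be the main obstacle: a fully self-contained proof has to exhibit explicitly that $\sigma \variant \{z \mapsto \edgeLabelOf{s_0 \ldots s_j s'_{j+1}}\} \not\models \varphi'$, which I would do by aligning the literal suffix $\alpha^M p W$ of $\needle$ with the trailing segment $\alpha^M p W w \sep$ appearing inside $\haystack'$, using that $\edgeLabelOf{\rho_{\le j}}$ already ends in enough consecutive occurrences of $\alpha$ to absorb both $\sigma(x) = \alpha^k$ and the literal factor $\alpha^M p$ separating $x$ from $W$ on the needle side.
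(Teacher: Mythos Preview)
Your handling of part~(1), the $W$-prefix property in part~(2), and the ``exactly two neighbours'' count is essentially the paper's argument; your determinism shortcut for the count is in fact tidier than the paper's three-successor case split (the appeal to $p a_0 \notin \prefixesOf{\alpha}$ is unnecessary there --- the choice-state property already gives a second child, and distinct children of a DFA-based prefix tree automatically differ on their first letter).

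The gap is in the dead-end part. Your high-level observation that no member of $\boundReachingPaths$ passes through $s'_{j+1}$ is correct, but it only tells you that \emph{some} vertex on every branch below $s'_{j+1}$ is dead, not that $s'_{j+1}$ itself is. Your proposed repair --- exhibiting an occurrence of $\sigma(\needle)$ inside $\sigma'(\haystack')$ for $\sigma' = \sigma \variant \{z \mapsto \edgeLabelOf{s_0 \ldots s_j s'_{j+1}}\}$ --- does not go through. First, the alignment you describe matches only the suffix $\sigma(x)\,\alpha^M p\, W$ of the needle; for $s'_{j+1}$ to be dead the \emph{entire} $\sigma(\needle)$, including $\sigma(\needle')$, must embed, and you have no control over $\sigma(\needle')$. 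Second, $\edgeLabelOf{s_0 \ldots s_j}$ is an arbitrary prefix of $s_\alpha \alpha^k p_\alpha$ and therefore ends at an arbitrary position inside $\alpha$, not necessarily at the specific prefix $p$ that precedes $W$ on the needle side; so even the suffix match you claim is not guaranteed.

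The paper proceeds in the opposite direction: it assumes $s'_{j+1}$ is \emph{not} dead and derives a contradiction. From $|\edgeLabelOf{s_j, s'_{j+1}}| > |W|$ and the non-deadness assumption one extracts a conflict in the alignment that places the needle's trailing $W$ at the start of $\edgeLabelOf{s_j, s'_{j+1}}$. This conflict, together with $S \concat \edgeLabelOf{s_0 \ldots s'_{j+1}} \notin \factorsOf{\alpha^*}$, supplies both hypotheses of \cref{lemma:conflictKaboom}. Applying that lemma to any extension $\chi = s_0 \ldots s_j\, s'_{j+1} \ldots s'_m$ simultaneously shows (via Condition~1) that no vertex on $\chi$ is a dead end and that $\chiSigma \models \varphi'$ --- yielding a $\boundAut$-reaching path off $\rho$ whose $\gammaPref{z}$-expansion is a model, the required contradiction. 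The point is that the non-deadness assumption is what manufactures the extra conflict needed to feed \cref{lemma:conflictKaboom}; a direct attack cannot produce it.
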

    \begin{proof}

    First, let us establish that every successor $s_j$ has two adjacent vertices.
    For a moment, let $s_j$ be a vertex with at least three successors $s_{j+1}, s'_{j+1}$
    and $s''_{j+1}$, where $s_{j+1}$ is taken by $\rho$.
    If $\edgeLabelingFn(s_{j}, s'_{j+1})$ differs at some position from $W$, then we can apply
    \cref{claim:succCount} and see that there is a $\boundAut$-reaching path $\chi$ with $\chiSigma \models \varphi'$, a contradiction.
    Alternatively, we have that there is no position at which $\edgeLabelingFn(s_{j}, s'_{j+1})$ and $W$ differ.
    But due to the construction of $T_z$, we have that $\edgeLabelingFn(s_{j}, s''_{j+1})$ differs from
    $\edgeLabelingFn(s_{j}, s'_{j+1})$ at the first letter, and, thus, it also differs from $W$ at the first position.
    The rest of the argument is as before.

    Dwelling on the fact that no $\boundAut$-reaching paths are a model, we
    have that for every vertex $s_j$ the edge $(s_j, s'_{j+1})$ not taken
    by $\pi$ cannot be labeled by a word shorter than $W'$, thanks to
    \cref{claim:edgeLength}. Otherwise, we would again reach a contradiction
    with there being no $\boundAut$-reaching paths that would allow us to produce an altered model.

    Combining \cref{claim:succCount} with \cref{claim:edgeLength}, we have that
    every edge $(s_j, s'_{j+1})$ not taken by $\rho$ must be labeled by a~word with the prefix $W$, i.e.,
    $\edgeLabelingFn(s_j, s'_{j+1}) = W \concat w_j$ for some word $w_{j} \in \Sigma^*$. 

    It remains to show that $s'_{j+1}$ is a dead end. We show that if $s'_{j+1}$ is not a dead end,
    then there is a $\boundAut$-reaching path $\chi$ such that $\chiSigma \models \varphi'$,
    reaching a contradiction.
     
    Assuming that $s'_{j+1}$ is not a dead end, we argue that any sequence of vertices $\chi = s_0, \dots, s'_{j+1}, \dots, s'_{m}$
    of a suitable length forms a $\boundAut$-reaching path and $\chiSigma \models \varphi'$.
    First, recall that we have $|\edgeLabelingFn(s_j, s'_{j+1})| > |W|$. Observe that if we consider the following alignment
    \begin{center}
    \tikzstyle{wordBNode} = [draw, scale=0.8, inner sep=1mm, text height=3mm, minimum height=7mm, baseline]
    \begin{tikzpicture}
        \node[wordBNode, anchor=west, minimum width=8mm, gray] (a0) at (0, 0) {$\dots$};
        \node[wordBNode, anchor=west, minimum width=30mm]      (a1) at (a0.east) {$\edgeLabelingFn(s_0 \ldots s_j)$};
        \node[wordBNode, anchor=west, minimum width=22mm]      (a2) at (a1.east) {$\edgeLabelingFn(s_j, s'_{j+1})$};
        \node[wordBNode, anchor=west, minimum width=14mm]      (a3) at (a2.east) {$\cdots$};

        \node[wordBNode, anchor=west, minimum width=8mm, gray] (r0) at (-1.0, -0.7)    {$ \dots $};
        \node[wordBNode, anchor=west, minimum width=12mm]      (r1) at (r0.east) {$ \alpha $};
        \node[wordBNode, anchor=west, minimum width=12mm]      (r2) at (r1.east) {$ \alpha $};
        \node[wordBNode, anchor=west, minimum width=12mm]      (r3) at (r2.east) {$ \alpha $};
        \node[wordBNode, anchor=west, minimum width=5mm]       (r4) at (r3.east) {$ p_\alpha $};
        \node[wordBNode, anchor=west, minimum width=5mm]       (r5) at (r4.east) {$ W $};
    \end{tikzpicture}
    \end{center}
    then there must be a conflict somewhere, otherwise $s'_{j+1}$ would be a dead end.

    The rest of the proof is almost identical to the proof of \cref{claim:edgeLength}, thus, we provide
    just a sketch. First,
    as we have shown above, there is a minimal suffix $S \in \suffixesOf{\sigma(\haystack_i)}$
    such that $\alpha \in \factorsOf{S}$. Therefore, if we consider any
    alignment such that $\sigma(\needle)$ is shifted more to the right,
    we can apply \cref{lemma:primitiveAlignment} and conclude that the $\alpha$ in $S$
    must be perfectly aligned with an $\alpha$ from $\sigma(\needle)$. But then we have
    an $\alpha$ overlapping with a prefix of $\edgeLabelingFn(s_{j}, s'_{j+1})$. Since
    $S \concat \edgeLabelingFn(s_0 \ldots s_{j+1}) \in \factorsOf{\alpha^*}$ but 
    $S \concat \edgeLabelingFn(s_0 \ldots s'_{j+1}) \not \in \factorsOf{\alpha^*}$,
    such an alignment must contain a conflict, and hence, Point~\ref{cond1}
    of \cref{lemma:conflictKaboom} is satisfied.

    Moreover, as we have 
    $S \concat \edgeLabelingFn(s_0 \ldots s_{j+1}) \in \factorsOf{\alpha^*}$ but 
    $S \concat \edgeLabelingFn(s_0 \ldots s'_{j+1}) \not \in \factorsOf{\alpha^*}$,
    we can write
    \begin{equation}
        S \concat \edgeLabelingFn(s_0 \ldots s'_{j+1}) = u_0 \circ \alpha \circ v'_0
    \end{equation}
    for $\alpha \not \in \prefixesOf{v'_0}$, $v'_0 \not \in \prefixesOf{\alpha}$, and some word $u_0 \in \alphabet^*$.
    Since $\edgeLabelingFn(s_0, s'_{j+1})$ is a prefix of $\edgeLabelingFn(s_0, s'_m)$
    we can extend the above to
    \begin{equation}
        S \concat \edgeLabelingFn(s_0 \ldots s'_m) = u_0 \circ \alpha \circ v_0
    \end{equation}
    s.t.\ $v'_0 \in \prefixesOf{v_0}$. Again, we have
    $\alpha \not \in \prefixesOf{v_0}$ and $v_0 \not \in \prefixesOf{\alpha}$, and,
    so, the word $S \concat \edgeLabelingFn(s_0 \ldots s'_m)$ satisfies Point~\ref{cond2}
    of \cref{lemma:conflictKaboom}.
    We invoke \cref{lemma:conflictKaboom} and obtain the claim.
    \end{proof}

    We are currently only a step away from obtaining a flat language $\langflatunderapp_z$
    such that if $\varphi$ is satisfiable with a model $\sigma$, then there is $\sigma' =
    \sigma \triangleq \{ z \mapsto w_z \}$ for some $w_z \in \langflatunderapp_z$ such that $\sigma' \models \varphi'$.
    We know that if our original model satisfies $|\sigma(z)| \ge \boundAut$,
    then either we can produce an altered model by $\Gamma$-expanding a~path from $\boundReachingPaths$,
    or there are no such alternative models and all $\boundReachingPaths$ share the common prefix $p$. 

    Assuming $|\sigma(z)| \ge \boundAut$, there are now two possibilities depending on whether $\sigma(z)$
    corresponds to a path $\pi$ in $T_z$ that eventually takes an edge $(s_i, s'_{i+1})$
    labeled with a word $\edgeLabelingFn(s_i, s'_{i+1})$ such that $W \in \prefixesOf{\edgeLabelingFn(s_i, s'_{i+1})}$.

    If $\pi$ takes no such edge, we have $\edgeLabelingFn(\pi) \in \factorsOf{\alpha^+}$,
    and, thus, $\sigma(z) \in \lAlphaFlat$ where
    \begin{equation}
        \lAlphaFlat \triangleq \factorsOf{\alpha^+} \cap \lang_z
    \end{equation}
    Note that $\factorsOf{\alpha^+}$ is a finite union of flat languages, i.e.,
    \begin{equation}
        \factorsOf{\alpha^+} =
            \bigcup_{\substack{p \in \prefixesOf{\alpha} \\ s \in \suffixesOf{\alpha}}} s \alpha^* p,
    \end{equation}
    and, thus,
    we observe that in the automaton corresponding to $\factorsOf{\alpha^+} \cap \lang_z$ we
    have any two words $v, u$ s.t.\ there is a~state $q$ with $q \move{v} q$
    and $q \move{u} q$ must be a power of the same word due to the intersection with $s \alpha^* p$.
    Applying \cref{lem:nonflatCharacterization}, we have that every language in the union is flat,
    and, hence $\lAlphaFlat$ is flat.
    
    Alternatively, $\pi$ eventually takes an edge $(s_i, s'_{i+1})$ with $W \in
    \prefixesOf{\edgeLabelingFn(s_i, s'_{i+1})}$, and, thus we have
    $\sigma(z) \in \factorsOf{\alpha^+} \concat W \concat V$ for some
    $V \in \Sigma^*$. However, there might be a lot of alternative vertices $s'_{i+1}$ that we can take,
    each labeled with $\edgeLabelingFn(s_{j}, s'_{i+1}) = W w_{i}$
    for some word $w_{i} \in \alphabet^*$. The problem is now that we do not know which $s'_{i+1}$
    is taken by $\sigma(z)$, if any. Luckily, there are only finitely many states that can label
    such vertices, and, hence, we can obtain a finite union of flat languages containing 
    a word that can be used to construct an~alternative model $\sigma' \models \varphi'$.

    Let $\aut_z = (Q, \alphabet, \Delta, I, F)$ be the automaton accepting the language
    associated with the variable $z$. Let $L_{\rightarrow q}$ to be the language of the automaton
    $\aut_{\rightarrow q} = (Q, \alphabet, \Delta, I, \{q\} )$,
    and, symmetrically, $L_{q \rightarrow}$ to be the language of
    the automaton $\aut_{q \rightarrow} = (Q, \alphabet, \Delta, \{q\}, F )$.

    Let us define a map $\pathExit\colon
    C(\aut_z) \times C(\aut_z) \rightarrow \alphabet^*$ allowing us to refer to
    the possible words labeling edges to alternative vertices $s'_{j+1}$ as
    $\pathExit(\stateLabelOf{s}, \stateLabelOf{s'}) \triangleq
    \edgeLabelingFn(\stateLabelOf{s}, \stateLabelOf{s'})$ if we have an edge
    $(s, s') \in E$ that is labeled by a word such that $W \in
    \prefixesOf{\edgeLabelingFn(\stateLabelOf{s}, \stateLabelOf{s'})}$,
    otherwise  
    $\pathExit(\stateLabelOf{s}, \stateLabelOf{s'}) \triangleq \emptyWord$.
    Since $\aut_z$ is a DFA, $\pathExit$ is well-defined,
    and we highlight that $\pathExit$ being does not depend on $\pi$.

    Hence if $\sigma(z) \not \in \factorsOf{\alpha^*}$, then 
    \begin{equation}
        \sigma(z) \in (\factorsOf{\alpha^+} \cap L_{\rightarrow q}) \concat \pathExit(q, r)
        \concat L_{r \rightarrow} \label{eq:zExitLangTmp}
    \end{equation}
    for some $q, r \in Q$ such that $\pathExit(q, r) \neq \epsilon$.
    We show that under these circumstances, we have all prerequisites of \cref{lemma:flatify}
    satisfied. Owing to \cref{eq:zExitLangTmp} we can write
    $\sigma(z) = s \alpha^k p W V$ for $s \in \suffixesOf{\alpha}$, $p \in
    \prefixesOf{\alpha}$, $k \ge 1$ and some word $V$.
    Moreover, we have $\aut_z$
    being composed of a single SCC thanks to the preprocessing we performed, and, $|\sigma(x)| > \boundFlat$
    for any $x \in \varsOf{\needle}$
    due to fixing the values of variables with words shorter than $\boundAut$ in the previous section.
    Hence, we have prerequisites of \cref{lemma:flatify} satisfied.
    Applying \cref{lemma:flatify}, we conclude that
    we can find a word $w_z \in \lGammaFlat$ such that
    $\sigma \variant \{z \mapsto w_z \} \models \varphi$
    with $\lGammaFlat$ being defined
    as
    \begin{equation}
        \lGammaFlat = \bigcup_{\substack{q, r \in Q \\
        \pathExit(q, r) \neq \emptyWord }}
        (\factorsOf{\alpha^*} \cap L_{\rightarrow q}) \concat \pathExit(q, r) \concat U(r) \concat \gamma^{\boundGamma}_z
    \end{equation}
    where $U(r)$ is the word $U$ from $\gammaPref{z}^{\boundGamma}$-expansion of any prefix ending in the state $r \in Q$,
    and $\gamma_z$ is the primitive word from the definition of $\gammaPref{z}^{\boundGamma}$-expansion.
    We observe that $\lGammaFlat$ is a flat language, since
    $\factorsOf{\alpha^+} \cap L_{\rightarrow q}$
    is flat, and the union is finite.

    Combining all these cases, we obtain the lemma, where the flat language $L^{\mathrm{Pref}}_z$
    stated by the lemma is
    \begin{equation} \label{eq:flatifiedLangDef}
        \langflatunderapp_z \triangleq \flatFinLang \cup \gammaPref{z}^{\boundGamma} (\edgeLabelingFn(\boundReachingPaths))
            \cup \lAlphaFlat \cup \lGammaFlat.
    \end{equation}
    where we lifted the definition of $\edgeLabelingFn$ and $\gammaPref{z}^{\boundGamma}$ to sets of paths
    and words.
    We note that both $\lAlphaFlat$ and $\lGammaFlat$ do not depend on the path $\pi$,
    and, consequently, both $\lAlphaFlat$ and $\lGammaFlat$ (soundly) overapproximate corresponding
    languages that would be otherwise obtained by using definitions that inspect~$\pi$.
\end{proof}

\fi

\end{document}